\newcommand{\BlackBox}{\rule{1.5ex}{1.5ex}}  % end of proof
    \renewenvironment{proof}{\par\noindent{\bf Proof\ }}{\hfill\BlackBox\\[2mm]}
    \newenvironment{proof}{\par\noindent{\bf Proof\ }}{\hfill\BlackBox\\[2mm]}
\theoremstyle{definition}
\newtheorem{assumption}{Assumption}
\newtheorem{example}{Example} 
\newtheorem{theorem}{Theorem}
\newtheorem{lemma}[theorem]{Lemma} 
\newtheorem{proposition}[theorem]{Proposition}
\newtheorem{definition}[theorem]{Definition}
\newcommand{\E}{\mathbb{E}}
\newcommand{\pa}{{\text{pa}}}
\newcommand{\nd}{{\text{nd}}}
\newcommand{\argmin}{{\rm argmin}}
\newcommand{\bH}{{\boldsymbol H}}
\newcommand{\bS}{{\boldsymbol S}}
\newcommand{\bU}{{\boldsymbol U}}
\newcommand{\bX}{{\boldsymbol X}}
\newcommand{\bZ}{{\boldsymbol Z}}
\begin{document}

\title{Enhancing Causal Effect Estimation with Diffusion-Generated Data\thanks{Corresponding author: Xiaotong Shen.}}

\author{\name Li Chen \email chen7019@umn.edu \\
       \addr School of Statistics\\
       University of Minnesota\\
       Minnesota, MN, 55455
       \AND
       \name Xiaotong Shen \textsuperscript{\orcidlink{0000-0003-1300-1451}} \email xshen@umn.edu \\
       \addr School of Statistics\\
       University of Minnesota\\
       Minnesota, MN, 55455
       \AND
       \name Wei Pan \email panxx014@umn.edu \\
       \addr Division of Biostatistics\\
       University of Minnesota\\
       Minnesota, MN, 55455
       }

%\editor{}

\maketitle

\begin{abstract}%   <- trailing '%' for backward compatibility of .sty file
Estimating causal effects from observational data is inherently challenging due to the lack of observable counterfactual outcomes and even the presence of unmeasured confounding. Traditional methods often rely on restrictive, untestable assumptions or necessitate valid instrumental variables, significantly limiting their applicability and robustness. In this paper, we introduce Augmented Causal Effect Estimation (ACEE), an innovative approach that utilizes synthetic data generated by a diffusion model to enhance causal effect estimation. By fine-tuning pre-trained generative models, ACEE simulates counterfactual scenarios that are otherwise unobservable, facilitating accurate estimation of individual and average treatment effects even under unmeasured confounding. Unlike conventional methods, ACEE relaxes the stringent unconfoundedness assumption, relying instead on an empirically checkable condition. Additionally, a bias-correction mechanism is introduced to mitigate synthetic data inaccuracies. We provide theoretical guarantees demonstrating the consistency and efficiency of the ACEE estimator, alongside comprehensive empirical validation through simulation studies and benchmark datasets. Results confirm that ACEE significantly improves causal estimation accuracy, particularly in complex settings characterized by nonlinear relationships and heteroscedastic noise.
\end{abstract}

\begin{keywords}
  causal effect estimation, unmeasured confounding, directed acyclic graphs, generative models, knowledge transfer
\end{keywords}

\section{Introduction}

Estimating the effect of interventions or treatments on outcomes is fundamental to causal inference. While randomized controlled trials are considered the gold standard, ethical or practical constraints often limit their feasibility. Consequently, observational studies become essential for causal analysis. Modern scientific research frequently involves large, diverse datasets, underscoring the need for advanced statistical methods to integrate and transfer knowledge from auxiliary datasets to enhance causal investigations. 

Observational data typically contain confounders—variables influencing both treatment and outcome—that can bias causal estimates. When these confounders are measured, methods such as matching and weighting are widely used \citep{rosenbaum1983central, horvitz1952generalization}. Matching methods, including nearest neighbor matching, pair treated units with similar control units based on covariates, and estimate causal effects by comparing outcomes within matched pairs \citep{abadie2011bias, heckman1998matching, lin2023estimation}. However, these methods struggle when treated and control groups have limited overlap and depend heavily on matching metrics and procedures. Weighting methods attempt to balance covariate distributions by creating a pseudo-population but can be sensitive to model misspecifications and extreme weights \citep{horvitz1952generalization, robins1994estimation, chan2016globally}. Therefore, robust methods that handle limited covariate overlap without restrictive assumptions are crucial.

Addressing unmeasured confounding typically involves instrumental variable (IV) analyses or sensitivity analyses \citep{angrist1996identification, imbens2003sensitivity}. IV methods rely on variables affecting treatment but not directly influencing outcomes \citep{angrist1996identification, kang2016instrumental, chen2024discovery}. However, identifying valid IVs requires extensive domain knowledge, and critical assumptions—such as instrument exogeneity (the IV is independent of the error term) and exclusion restriction (the IV affects the outcome only through the treatment)—are challenging to verify empirically. Violations of these assumptions can severely bias causal estimates \citep{guo2023causal}. In numerous economic and biological applications, confounders often influence many observed variables, a phenomenon known as pervasive confounding that can be utilized to deconfound. Unfortunately, most existing methods that tackle causal relationships with pervasive confounding assumption typically assume linear causal effects\citep{frot2019robust,shah2020right,guo2022doubly}, with an exception in the Causal Additive Model\citep{agrawal2023decamfounder}. A notable nonlinear causal effect estimation method incorporating deconfounding adjustment is proposed in \cite{li2024nonlinear}; however, this approach depends on a sublinear growth condition for nonlinear functions with additive confounders and noise, a restriction that remains limiting.

This paper introduces a novel approach for causal effect estimation that leverages advanced generative modeling techniques while relaxing conventional assumptions and model restrictions. Our method generates synthetic samples using a diffusion model designed to closely mimic the original data. The fidelity of these samples is further enhanced by fine-tuning a pre-trained model on auxiliary datasets from related studies, thereby facilitating effective knowledge transfer. By simulating unobserved counterfactual scenarios—a common limitation of traditional observational studies where each individual receives only one treatment—this approach enables direct comparisons across different treatment conditions. Furthermore, by bridging the gap between conditional generation and causal effect estimation under mild assumptions and incorporating a bias correction mechanism to reconcile discrepancies between synthetic and original data, our framework robustly addresses unmeasured confounding and enhances the statistical validity and reliability of the resulting causal estimates.

 The contributions of this paper are as follows:
\begin{itemize}
    \item Methodology: We introduce a novel method called Augmented Causal Effect Estimation (ACEE), which leverages synthetic data generated by a conditional diffusion model to enhance causal effect estimation. Toward this goal, we bridge the gap between causal effect estimation and conditional generation with the presence of unmeasured confounding under mild conditions, making it feasible to transfer knowledge from a generative model pre-trained on auxiliary data. Moreover, ACEE incorporates a bias-correction mechanism designed to ensure statistical validity even when significant distributional discrepancies exist between synthetic samples and the original data.

    \item Statistical guarantee: We establish theoretical consistency for the ACEE estimator, demonstrating that it converges to the true causal effects as the conditional generative model approaches the true conditional distribution. Additionally, the bias-corrected ACEE estimator maintains consistency even if the synthetic data are of lower fidelity. Notably, ACEE relaxes many stringent assumptions required by traditional methods when the conditional generative model performs adequately.
    
    \item Empirical validation: Comprehensive numerical studies illustrate the practical advantages and superior performance of ACEE, showing that it compares favorably against several leading competitors from existing literature.
\end{itemize}

The remainder of this paper is organized as follows. Section 2 introduces the causal effect estimation method and develops a bias-corrected estimator. Section 3 extends the approach to estimate total causal effects within directed acyclic graphs (DAGs) that accommodate unmeasured confounding. Section 4 examines the theoretical properties of the ACEE estimator. Section 5 presents numerical studies, including simulations, analyses of a benchmark dataset, and experiments on a pseudo-real dataset. Finally, Section 6 concludes the article. Additional discussions and technical proofs are provided in the Appendix.

\section{Causal Effects Estimation}

Causal inference is frequently formulated within the potential outcomes framework introduced in \cite{splawa1990application,rubin1974estimating,holland1986statistics}. In this framework, each individual possesses two potential outcomes, namely \(Y(1)\) if treated and \(Y(0)\) if untreated, although only one of them is observed. Let \(D \in \{0,1\}\) denote the binary treatment indicator, with \(D=1\) corresponding to treatment and \(D=0\) corresponding to control. The observed outcome is then given by
$Y = D\,Y(1) + (1-D)\,Y(0)$.
Given independent and identically distributed samples \(\{(\bX_i, D_i, Y_i)\}_{i=1}^n\), each individual $i$ has a vector of observed covariates $\bX_i$, the treatment assignment $D_i$ and the outcome $Y_i = D_i\,Y_i(1) + (1-D_i)\,Y_i(0)$. 

\subsection{Individual and Average Treatment Effects} 

Our objective is to estimate the individual treatment effect (ITE) and the average treatment effect (ATE), defined as the difference between the potential outcomes under treatment and control conditions for any individual $i$ with specific pretreatment characteristics, and to aggregate these individual differences to quantify the overall impact of the treatment across the population. The estimation of ITE and ATE is often conducted under the unconfoundedness assumption \citep{rubin1974estimating,rosenbaum1983central}, given the inherent challenge that counterfactual outcomes are unobserved. For example, \cite{cai2024conformal} presents a diffusion model-based conformal inference approach for ITE estimation under this
assumption.  

This unconfoundedness assumption requires that treatment assignment behaves as if it were random when conditioned on covariates $\bX$, ensuring conditional independence of $Y$ and $D$ given $\bX$. Conditioning on $\bX$ eliminates confounding bias, allowing us to estimate the ITE as  
$\tau(\bX) = \E[Y(1) \mid \bX] - \E[Y(0) \mid \bX]$
and the average treatment effect (ATE) as $\tau = \E[\tau(\bX)]$.

However, the unconfoundedness assumption is inherently untestable and often unrealistic in practice \citep{heckman1990varieties,imbens2003sensitivity,hernan2020causal}. Critically, it does not apply to scenarios involving unmeasured confounders. To address these limitations, we introduce an assumption termed conditional randomness, which explicitly models the confounding effects induced by unobserved confounders. This assumption is well-suited to our proposed approach, which leverages generative models to create paired synthetic data replicating the distribution of the original data. Specifically, conditional randomness states that treatment assignment becomes random when conditioned on the observed covariates and a latent confounder vector $\bH$. The vector $\bH$ is exogenous concerning $\bX, D$ and $Y$, meaning it can only influence, rather than be influenced by, covariates, treatment assignment and potential outcomes.

\begin{assumption}[Conditional Randomness]
\label{assu:conditional_random}
%(Y(1),Y(0)) \perp D \mid (\bX,\bH).
$Y$ and $D$ are conditionally independent given $(\bX,\bH)$.
\end{assumption}

Assumption \ref{assu:conditional_random} is widely adopted in the literature of causal inference with unmeasured confounders \citep{jin2023sensitivity,yadlowsky2022bounds} and as pointed out by \cite{yadlowsky2022bounds}, such a latent confounder vector $\bH$ should almost always exist. When unconfoundedness holds, conditioning on $\bX$ alone is sufficient, and the latent confounder vector 
$\bH$ can be set to be an empty set, meaning conditional randomness reduces to unconfoundedness. However, in settings with unmeasured confounders, incorporating 
$\bH$ allows us to account for residual bias not captured by $\bX$ alone. This generalization enhances causal inference by providing a more flexible framework that accommodates the complexities of real-world data.

Ideally, one may define the ITE as $\E[Y(1) \mid \bX, \bH] - \E[Y(0) \mid \bX, \bH]$. However,
since the latent confounder \(\bH\) is unobserved and generally non-identifiable without further assumptions, we propose to approximate it using a proxy vector $\bS$, defined as $\bS = E[\bZ \mid \bH]$, with observed $\bZ = (\bX^\top,D, Y)^{\top}$. The proxy vector $\bS$ represents the projection of $\bZ$ onto the space spanned by $\bH$, which can be estimated by \(\widehat \bS\) using various methods such as principal component analysis \citep{fan2013large}, diversified projection \citep{fan2023factor}, or deep autoencoder models \citep{zhu2021deeplink}. See Section \ref{Sestimation} for an illustration of estimation.

By using $\bS$ to substitute for \(\bH\), we can now define ITE as  
$\tau(\bX,\bS) = E[Y(1)\mid \bX,\bS] - E[Y(0)\mid \bX,\bS]$.
This allows us to adjust for hidden confounding and estimate $\tau$ by conditioning on $(\bX, \bS)$ rather than on the unobserved $\bH$. A natural question arises that whether this definition is equivalent to $\E[Y(1) \mid \bX, \bH] - \E[Y(0) \mid \bX, \bH]$ and how to estimate this defined quantity. To ensure that the proxy $\bS$ fully captures the influence of the latent confounders, we impose the following assumption, which helps to relax the unconfoundedness assumption.

\begin{assumption}[Proxy Sufficiency]
\label{assu:proxy} 
Denote $S_Y = \E[Y \mid \bH]$ and $\bS = (\bS_{-Y}^\top,S_Y)^\top$, the residual \( Y - S_Y \) is conditionally independent of \( \bH \) given \( (\bX, \bS_{-Y}, D) \).
\end{assumption}

Unlike the unconfoundedness condition, the proxy sufficiency assumption 
can be empirically evaluated to some extent using a diagnostic tool that examines whether the fitted residual \(Y - S_Y\) behaves randomly with respect to the estimated $S_Y$---i.e., without systematic patterns---when conditioned on \((\bX, \bS_{-Y}, D)\) as $S_Y$ is a function of $\bH$. More importantly, this assumption enables the estimation of both ITE and ATE even in the presence of unmeasured confounders, a scenario where methods solely relying on unconfoundedness would fail. An example illustrating a scenario in which Assumption \ref{assu:proxy} holds is provided below.

\begin{example}
	\label{example:proxy}
	Consider the following causal model with additive unmeasured confounders:
	\begin{equation*}
		\begin{split}
			Y & = f(X_1,\ldots,X_p,D) + g(\boldsymbol{H}) + \varepsilon, \\
			X_1 & = g_1(\boldsymbol{H}) + \varepsilon_1, \\
			X_j & = f_j(X_1,\ldots, X_{j-1}) + g_j(\boldsymbol{H}) + \varepsilon_j, \quad \text{for } j = 2, \ldots, p, \\
			P(D=1 \mid \boldsymbol{X},\boldsymbol{H}) & = \frac{1}{\Bigl(1+\exp\bigl(f_{p+1}(X_1,\ldots,X_p)\bigr)\Bigr)
				\Bigl(1+\exp\bigl(g_{p+1}(\boldsymbol{H})\bigr)\Bigr)},
		\end{split}
	\end{equation*}
	where \(\varepsilon, \varepsilon_1, \ldots, \varepsilon_p\) are independent noise terms with mean zero. As shown in the Appendix, one can verify that $Y - \E[Y \mid \bH]$ is conditionally independent of $\bH$ given $(\bX,\E[\bX \mid \bH], \E[D \mid \bH],D)$, so that Assumption \ref{assu:proxy} holds for this model.
\end{example}

\begin{lemma}
\label{lem:proxy}
Under Assumption \ref{assu:proxy}, the individual treatment effect (ITE) can be expressed as  
\[
\tau(\bX,\bS) = \E[Y \mid \bX, \bS, D = 1] - \E[Y \mid \bX, \bS, D = 0]
=\E[Y(1) \mid \bX, \bH] - \E[Y(0) \mid \bX, \bH].
\]  
\end{lemma}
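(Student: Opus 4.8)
The plan is to establish, for each $d\in\{0,1\}$, the single identity $\E[Y\mid\bX,\bS,D=d] = \E[Y(d)\mid\bX,\bH]$, and along the way $\E[Y(d)\mid\bX,\bS] = \E[Y(d)\mid\bX,\bH]$; subtracting the $d=0$ equation from the $d=1$ equation then yields the two displayed equalities simultaneously and identifies them with $\tau(\bX,\bS)=\E[Y(1)\mid\bX,\bS]-\E[Y(0)\mid\bX,\bS]$. The whole argument is a chain of applications of the tower property of conditional expectation, with the essential content supplied by Assumption \ref{assu:proxy}.

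First I would dispose of the treatment side. By the consistency relation $Y = D\,Y(1)+(1-D)\,Y(0)$ together with conditional randomness (Assumption \ref{assu:conditional_random}), the confounded regression satisfies $\E[Y\mid\bX,\bH,D=d] = \E[Y(d)\mid\bX,\bH,D=d] = \E[Y(d)\mid\bX,\bH]$, since once $(\bX,\bH)$ is held fixed the treatment carries no further information about the potential outcome. It therefore suffices to prove that $\E[Y\mid\bX,\bH,D=d]$ depends on $\bH$ only through the proxy $\bS$. Write $Y = S_Y + (Y-S_Y)$ with $S_Y=\E[Y\mid\bH]$. The first piece is $\sigma(\bH)$-measurable, so $\E[S_Y\mid\bX,\bH,D=d]=S_Y$, and $S_Y$ is a coordinate of $\bS$, hence $\sigma(\bX,\bS)$-measurable. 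For the residual piece, note that $\bS_{-Y}$ is itself a deterministic function of $\bH$, so conditioning on $(\bX,\bH,D)$ is the same as conditioning on $(\bX,\bS_{-Y},\bH,D)$; Assumption \ref{assu:proxy} then gives $\E[Y-S_Y\mid\bX,\bH,D=d] = \E[Y-S_Y\mid\bX,\bS_{-Y},D=d]$, which for fixed $D=d$ is a function of $(\bX,\bS_{-Y})$ only. Adding the two pieces shows $\E[Y\mid\bX,\bH,D=d]$ is $\sigma(\bX,\bS)$-measurable, say equal to $m_d(\bX,\bS)$.

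Finally I would close the loop. Because $\bS$ is a function of $\bH$, we have $\sigma(\bX,\bS,D)\subseteq\sigma(\bX,\bH,D)$ and $\sigma(\bX,\bS)\subseteq\sigma(\bX,\bH)$; applying the tower property to the already-$\sigma(\bX,\bS)$-measurable variable $m_d(\bX,\bS)$ gives $\E[Y\mid\bX,\bS,D=d] = \E[m_d(\bX,\bS)\mid\bX,\bS,D=d] = m_d(\bX,\bS)$ and likewise $\E[Y(d)\mid\bX,\bS] = m_d(\bX,\bS)$, so all three of $\E[Y\mid\bX,\bS,D=d]$, $\E[Y(d)\mid\bX,\bS]$ and $\E[Y(d)\mid\bX,\bH]$ equal $m_d(\bX,\bS)$; differencing over $d$ completes the proof. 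I expect the only real obstacle to be bookkeeping: tracking precisely which $\sigma$-fields are nested and exploiting that both $S_Y$ and $\bS_{-Y}$ are deterministic functions of $\bH$, so that the conditional-independence statement of Assumption \ref{assu:proxy} is inserted at exactly the right place; no estimates or analytic arguments are needed beyond repeated use of the tower property and the definition of conditional independence.
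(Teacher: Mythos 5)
Your proposal is correct and follows essentially the same route as the paper's proof: decompose $Y = S_Y + (Y-S_Y)$, use that $S_Y$ and $\bS_{-Y}$ are functions of $\bH$ so conditioning on $(\bX,\bH,D)$ is unchanged by adding $\bS_{-Y}$, invoke Assumption~\ref{assu:proxy} to drop $\bH$, and finish with the tower property. The only difference is that you are somewhat more explicit than the paper in two respects—spelling out the consistency plus conditional-randomness step that identifies $\E[Y\mid\bX,\bH,D=d]$ with $\E[Y(d)\mid\bX,\bH]$, and carrying the argument at the conditional (ITE) level rather than only after taking the outer expectation—which is a welcome tightening rather than a departure.
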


Aggregating over the population, we obtain ATE
 $\tau = E[\tau(\bX,\bS)]$. Lemma \ref{lem:proxy} suggests that by substituting the unobserved confounder \(\bH\) with its proxy $\bS$, we retain the essential information necessary for the identification and estimation of the ITE, thus ATE. 

 Define the response functions
 $\mu_d(\bX, \bS) = \mathbb{E}[Y(d) \mid \bX, \bS]$;
 $d=0,1$, then it follows that $\tau(\bX,\bS) =\mu_1(\bX, \bS)-\mu_0(\bX, \bS)$.

Empirically, synthetic samples are generated by a conditional diffusion model:
$Y^{(m)}(d) \sim P(Y\mid \bX, \widehat \bS, D=d)$; $d=0,1$, 
and the corresponding estimates are given by
\begin{align}
	\widehat{\mu}_d(\bX, \widehat{\bS}) &= \frac{1}{M} \sum_{m=1}^M Y_i^{(m)}(d), \quad \text{for } d=0,1.
\end{align}
Then the ITE can be estimated as 
\[
\widehat \tau(\bX,\widehat{\bS}) = \widehat{\mu}_1(\bX, \widehat{\bS})-\widehat{\mu}_0(\bX, \widehat{\bS}),
\]
We denote $\widehat{\bS}_i$ as the corresponding proxy to individual $i$ with observed variables $(\bX_i,D_i,Y_i)$, then the estimated ATE becomes 
\begin{equation}\label{eq:ATE_empirical}
	\widehat{\tau} = \frac{1}{n} \sum_{i=1}^n \widehat \tau(\bX_i,\widehat \bS_i) 
\end{equation}

The deconfounder method proposed in \cite{wang2019blessings} may seem to be similar to our approach at the first glance by bypassing the unconfoundedness assumption with a substitute confounder. However, their key assumption ``no single-cause confounders" reduces to the ``no unmeasured confounders" assumption in our single-cause setting.

\subsection{Estimation of the Proxy $\bS$}
\label{Sestimation}

This section details the estimation procedure for the proxy $\bS$. Assume that the collection \(\{E[Z_l \mid \bH]\}_{l=1}^{p+2}\) has rank \(q\) with \(q \le p+2\). Then there exist \(q\) functions \(\{\varphi_\ell\}_{\ell=1}^q\) (with \(1 \le q < \infty\)) such that for each \(l = 1, \dots, p+2\),
$E[Z_l \mid \bH] = \psi_l^\top \Phi(\bH)$,
where  
$\Phi(\bH) = (\varphi_1(\bH), \cdots, \varphi_q(\bH))^{\top} \in \mathbb{R}^q$.
In matrix notation, this relationship is written as
\[
\boldsymbol{Z} = \Psi\,\Phi(\bH) + \bU,
\]
with \(\Psi \in \mathbb{R}^{(p+2)\times q}\) and an error term \(\bU\) that is uncorrelated with \(\Phi(\bH)\). For identification, we impose the constraints:
$\operatorname{Cov}\bigl(\Phi(\bH)\bigr)= I_q \quad \text{and} \quad \Psi^\top \Psi \text{ is diagonal.}$

Let 
$\mathbf{Z} =(\boldsymbol{Z}_1, \cdots, \boldsymbol{Z}_n)^{\top}\in \mathbb{R}^{n\times (p+2)}$
denote the observed data matrix, and let \(\mathbf{H}\) represent the corresponding unobserved confounding variables. The constrained least squares estimator \((\hat{\Phi}(\mathbf{H}), \hat{\Psi})\) is then defined as
\[
(\hat{\Phi}(\mathbf{H}), \hat{\Psi}) = \arg\min_{\Phi(\mathbf{H})\in\mathbb{R}^{n\times q},\,\Psi\in\mathbb{R}^{(p+2)\times q}} \| \mathbf{Z} - \Phi(\mathbf{H})\Psi^\top \|_F^2,
\]
subject to
$\frac{1}{n}\Phi(\mathbf{H})^\top \Phi(\mathbf{H}) = I_q$ and  $\Psi^\top \Psi \text{ is diagonal}$,
where $\|\cdot\|_F$ denotes the Frobenius norm. Finally, the proxy for the latent confounder is estimated by
\[
\hat{\mathbf{S}} = \hat{\Phi}(\mathbf{H})\hat{\Psi}^\top.
\]

\subsection{Bias Correction}
\label{bias-correction}

In some scenarios, the conditional generative model used to generate potential outcomes may have low-fidelity to the original data, leading to biased estimates of the individual treatment effect (ITE) and the average treatment effect (ATE). In this subsection, we introduce a bias-correction method that adjusts the counterfactual outcomes by leveraging information from the nearest neighbors.

Let $\mathcal{J}_N^0(\bX,\bS)$ and $\mathcal{J}_N^1(\bX,\bS)$ denote the index set of the \(N\) nearest neighbors in $(\bX_i,\widehat{\bS}_i)_{i=1}^n$ of the pair $(\bX,\bS)$ among those control and treated observations, respectively. Then $\mathcal{J}_N^0(\bX,\bS) \subset \{ i : D_i = 0\}$ and $\mathcal{J}_N^1(\bX,\bS) \subset \{ i : D_i = 1\}$. The idea is to use the residuals from these neighbors to correct the bias in the estimated response functions.

For the control response function, the bias-corrected estimator is defined as
\[
\widehat{\mu}^{c}_{0}(\bX,\bS) = \widehat{\mu}_{0}(\bX,\bS) + \frac{1}{N} \sum_{i \in \mathcal{J}_N^0(\bX,\bS)} \Bigl(Y_i - \widehat{\mu}_{0}(\bX_i, \widehat{\bS}_i)\Bigr)
\]
That is, the initial estimated control response function \(\widehat{\mu}_{0}(\bX,\bS)\) is adjusted by the average residual (the difference between observed and predicted outcomes) computed from its \(N\) nearest neighbors in the control group.

Similarly, for the treated response function, we define the bias-corrected estimator as
\[
\widehat{\mu}^{c}_{1}(\bX,\bS) = \widehat{\mu}_{1}(\bX,\bS) + \frac{1}{N} \sum_{i \in \mathcal{J}_N^1(\bX,\bS)} \Bigl(Y_i - \widehat{\mu}_{1}(\bX_i, \widehat{\bS}_i)\Bigr)
\]
Here, the initial estimated treated response function \(\widehat{\mu}_{1}(\bX,\bS)\) is adjusted by the average residual (the difference between observed and predicted outcomes) computed from its \(N\) nearest neighbors in the treated group.

The bias-corrected individual treatment effect (ITE) of individual $i$ with the pair $(\bX_i,\bS_i)$ is then given by the difference between the corrected treated and control responses:
\[
\widehat{\tau}^c(\bX_i,\widehat{\bS}_i) = \widehat{\mu}^{c}_{1}(\bX_i,\widehat{\bS}_i) - \widehat{\mu}^{c}_{0}(\bX_i,\widehat{\bS}_i).
\]

We define the matching count: $K_N(i) = \sum_{j=1}^n \mathbf{1}\{ i \in \mathcal{J}_N^{D_i}(\bX_j,\widehat{\bS}_j) \}$.
This count reflects how many times observation \(i\) serves as one of the \(N\) nearest neighbors for units in the observational data points. Then the overall bias‐corrected ATE estimator can be expressed as:
\begin{equation}\label{eq:estimator_bc}
\begin{split}
\widehat{\tau}^{c} &= \frac{1}{n} \sum_{i=1}^{n} \widehat{\tau}(\bX_i, \widehat{\bS}_i) 
= \widehat{\tau} + \frac{1}{n} \Biggl[\sum_{i: D_i=1} \frac{K_N(i)}{N}\widehat{R}_i - \sum_{i: D_i=0} \frac{K_N(i)}{N}\widehat{R}_i\Biggr],
\end{split}
\end{equation}
where the residual for observation \(i\) is defined as $\widehat{R}_i = Y_i - \widehat{\mu}_{D_i}(\bX_i, \widehat{\bS}_i)$.

In summary, our bias-correction framework refines synthetic outcomes by adjusting for discrepancies observed among similar units, thereby improving the accuracy of both ITE and ATE estimates. Prior approaches, such as those studied in \cite{abadie2011bias,lin2023estimation}, similarly employ bias correction but strictly rely on the unconfoundedness assumption, utilizing observed outcomes for the imputation in their assigned treatment groups directly without correction. In contrast, our methodology conducts bias correction under Assumption \ref{assu:proxy}, which substantially relaxes the stringent unconfoundedness condition. Our approach applies bias correction uniformly to neighborhood observations, regardless of whether outcomes are observed or missing, thus facilitating the establishment of consistency for ITE estimation.

\section{Casual Effects Estimation under Directed Acyclic Graph}

\subsection{Causal Effect Estimation over a DAG with Unmeasured Confounding}

Consider a \(p\)-dimensional random vector \(\bX = (X_1, \dots, X_p)\). In the presence of unobserved confounders, we assume that the complete set of variables \((\bX, \bH)\) adheres to the following structural causal model:
\begin{equation}
    \label{eq:sem}
    X_j = f_j\bigl(\bX_{\mathrm{pa}(j)}, \bH, \epsilon_j\bigr), \quad j \in \mathcal{V} = \{1,\ldots,p\},
\end{equation}
where \(\mathrm{pa}(j)\) is a parent index set consisting of the parent variables of \(X_j\) excluding $X_j$ itself, as determined by the parent-child relation explicitly defined by the function \(f_j\), the parent vector \(\bX_{\mathrm{pa}(j)}\), the unmeasured vector \(\bH\), and the noise \(\epsilon_j\). The noise \(\epsilon_j\) is assumed to be independent of \((\bX_{\mathrm{pa}(j)}, \bH)\). A justification of assuming such a causal structural model is provided in the Appendix \ref{sec:canonical_DAG}.

To ensure that the effect of each parent is non-vanishing, we assume causal minimality in \eqref{eq:sem}. In particular, for every \(j \in \mathcal{V}\) and for each \(k \in \mathrm{pa}(j)\), the function \(f_j\) is assumed to depend non-trivially on its \(k\text{th}\) argument. That is, there exists some choice of the remaining inputs \(\bX_{\mathrm{pa}(j)\setminus\{k\}}, \bH,\) and \(\epsilon_j\) such that the mapping
$x_k \mapsto f_j\bigl(x_k,\, \bX_{\mathrm{pa}(j)\setminus\{k\}}, \bH, \epsilon_j\bigr)$
is not constant. In other words,
if there exists \(\mathcal{B} \subseteq \pa(j)\) that the value of $f_j$ only depends on \((\bX_{\mathcal{B}},\bH,\epsilon_j)\), then $\mathcal{B} = \pa(j)$. Thus \eqref{eq:sem} encodes a directed graph \(\mathcal{G} = (\mathcal{V}, \mathcal{E})\), such that $\mathcal{E} = \{k \rightarrow j: k \in \pa(j), j \in \mathcal{V}\}$. Furthermore, we assume that \(\mathcal{G}\) is a directed acyclic graph (DAG) that no directed path $k \rightarrow \cdots \rightarrow k$ exists in \(\mathcal{G}\).  

 For any DAG, there at least exists a topological ordering \(\pi\) of the indices \(\{1, \dots, p\}\) such that for every directed edge \(k \to j\), the ordering satisfies \(\pi(k) < \pi(j)\). Now consider the causal effect of \(X_k\) on $X_j$ when \(X_k\) precedes $X_j$ in the causal ordering \(\pi\). Define \(\bX_{k^-}\) as the set of variables preceding $X_k$ in \(\pi\) and \(\bX_{k^+} = \bX_{k^-} \cup \{X_k\}\). The total causal effect of \(X_k\) on $X_j$ is formally defined as:
\begin{equation}
\label{eq:DAG_ate_conditionalH}
\tau(k, j; x_1, x_0) = \mathbb{E}\left[\mathbb{E}[X_j \mid X_k = x_1, \bX_{k^-}, \bH]\right] - \mathbb{E}\left[\mathbb{E}[X_j \mid X_k = x_0, \bX_{k^-}, \bH]\right].
\end{equation}

This definition concisely represents the total effect and aligns with the traditional concept of total effect for causal inference \citep{pearl2009causality, pearl2012calculus}, as detailed in the Appendix \ref{sec:ce_def}.

Since the unobserved confounders \(\bH\) in \eqref{eq:DAG_ate_conditionalH} generally render direct estimation infeasible without further assumptions, we transform \eqref{eq:DAG_ate_conditionalH} into an estimable quantity. Define residual variables \( S_l = X_l - \mathbb{E}[X_l \mid \bH] \) for each \( l = 1, \ldots, p \), and denote \( \bS = (S_1, \ldots, S_p)^\top \).

\begin{figure}[H]
	\centering
	\begin{subfigure}
		\centering
		\includegraphics[width=0.49\textwidth]{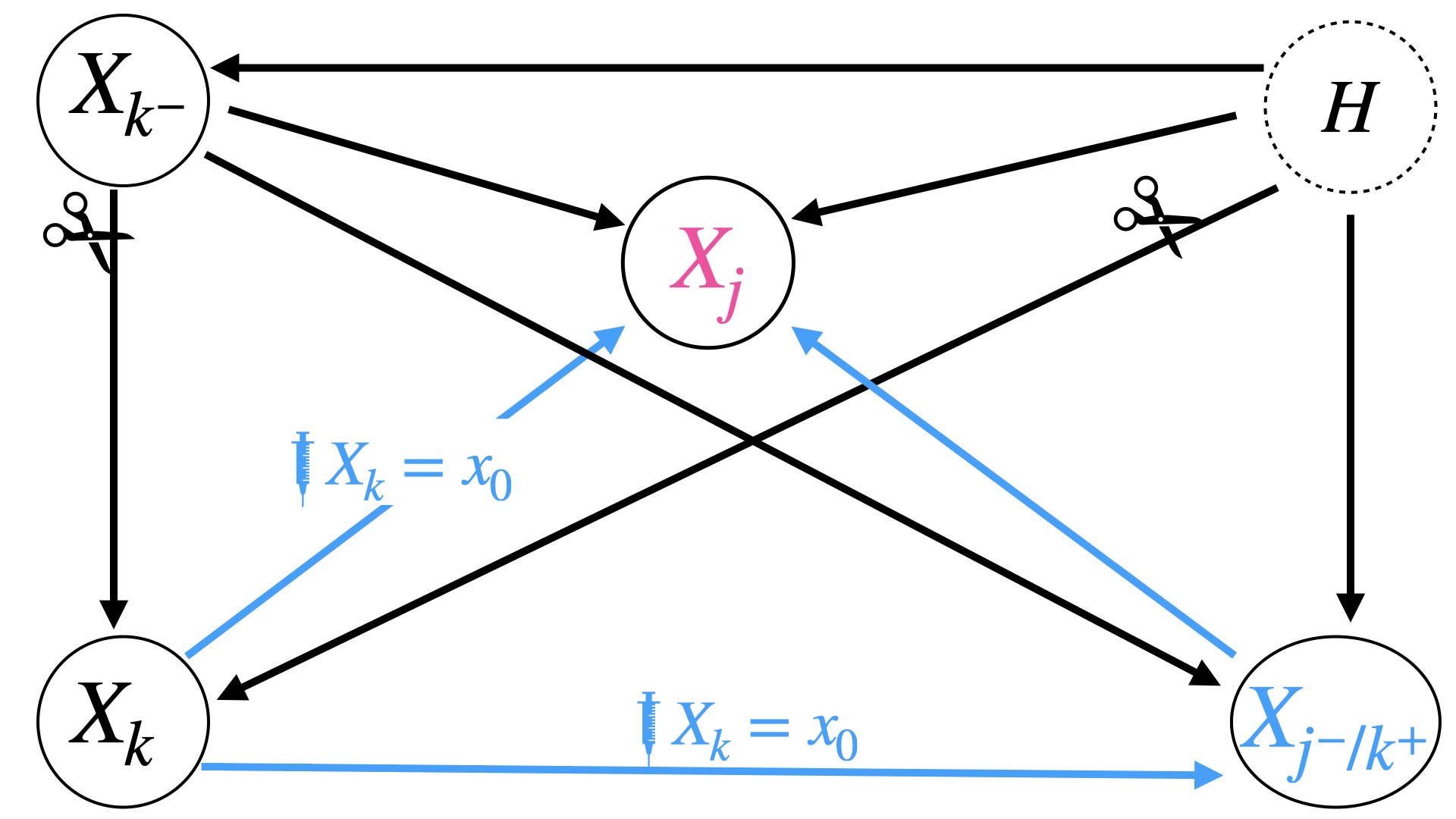}
	\end{subfigure}
	\hfill
	\begin{subfigure}
		\centering
		\includegraphics[width=0.49\textwidth]{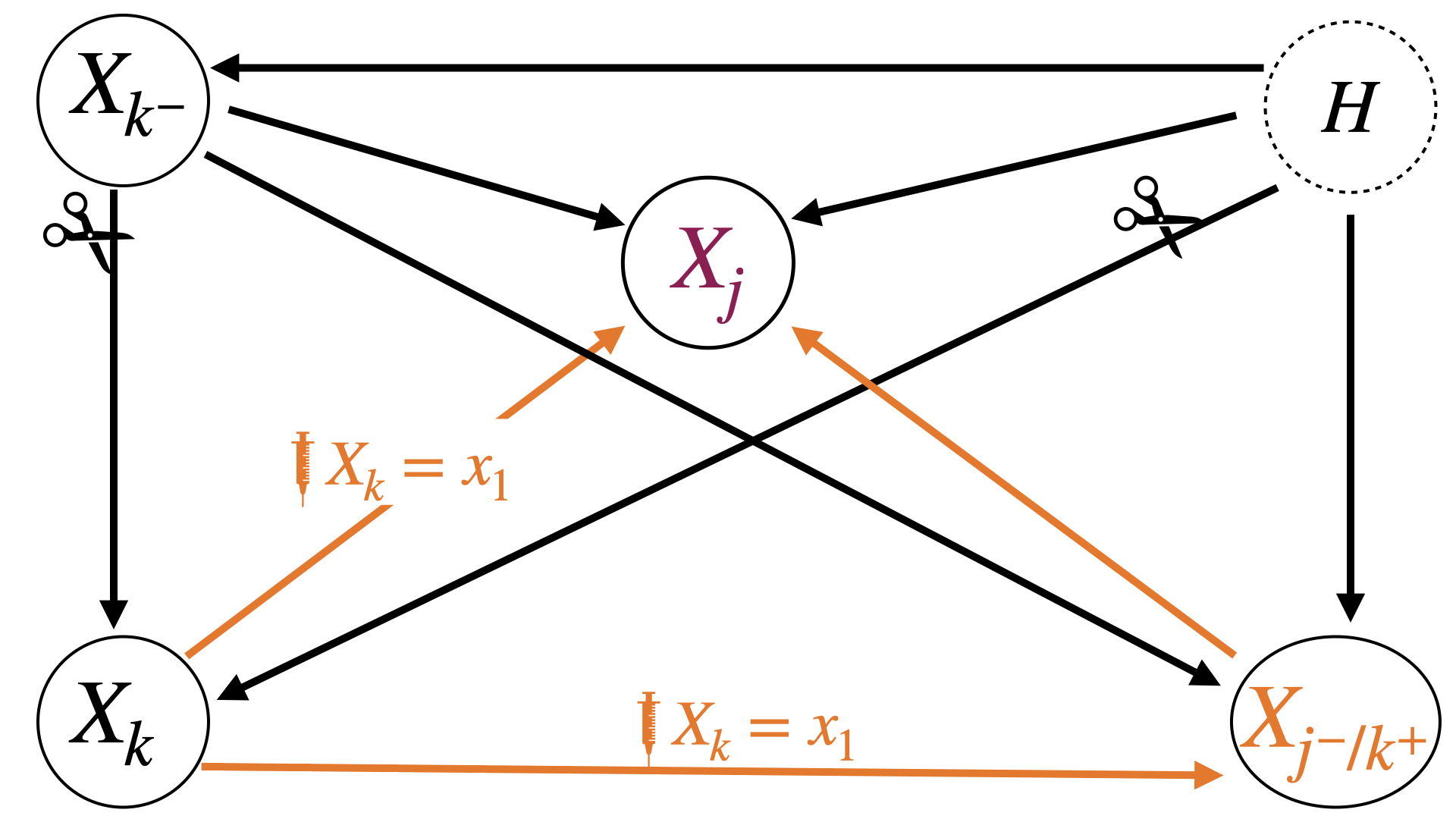}
	\end{subfigure}
	\caption{This figure illustrates the total causal effect of \(X_k\) on $X_j$ when \(X_k\) is changed from \(x_0\) (left) to \(x_1\) (right). By comparing the value of $X_j$ in both scenarios, we capture all causal pathways through which \(X_k\) exerts its influence on $X_j$.}
	\label{fig:total_effect_illustration}
\end{figure}

\begin{assumption}[Proxy Sufficiency for DAGs]
	\label{assu:proxy_DAG} 
	The residual \( X_j - S_j \) is conditionally independent of \( \bH \) given \( (\bS_{j^-}, \bX_{k^-}, X_k) \).
\end{assumption}

\begin{lemma}
	\label{lem:proxy_DAG}
	Under Assumption \ref{assu:proxy_DAG}, the total effect from \( X_k \) to \( X_j \) can be expressed as  
	\begin{equation}
		\label{eq:TE1}
		\begin{split}
			\tau(k,j;x_1,x_0) & = \E[\E[X_{j}|\bS,\bX_{k^-},X_{k} =x_1] - \E[X_{j}|\bS,\bX_{k^-},X_{k} =x_0]]
		\end{split}
	\end{equation}
\end{lemma}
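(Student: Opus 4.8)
The plan is to derive (\ref{eq:TE1}) from the definition (\ref{eq:DAG_ate_conditionalH}) by showing that, at each fixed level $x$, the quantity $\E\big[\E[X_j\mid X_k=x,\bX_{k^-},\bH]\big]$ entering (\ref{eq:DAG_ate_conditionalH}) equals $\E\big[\E[X_j\mid \bS,\bX_{k^-},X_k=x]\big]$ entering (\ref{eq:TE1}); the claimed identity then follows on subtracting the $x_0$-version from the $x_1$-version. This mirrors the mechanism behind Lemma \ref{lem:proxy}: there, Assumption \ref{assu:proxy} let us trade $\bH$ for its proxy inside a conditional expectation, and here Assumption \ref{assu:proxy_DAG} is meant to play the same role.

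The steps, in order: (i) write $X_j = S_j + \E[X_j\mid\bH]$, which is just the definition of the residual $S_j$; (ii) since $\E[X_j\mid\bH]$ is $\sigma(\bH)$-measurable, conditioning on $(\bX_{k^-},X_k=x,\bH)$ leaves it intact, so
\[
\E[X_j\mid \bX_{k^-},X_k=x,\bH] \;=\; \E[S_j\mid \bX_{k^-},X_k=x,\bH]\;+\;\E[X_j\mid\bH];
\]
(iii) apply Assumption \ref{assu:proxy_DAG}: the variable $X_j-S_j=\E[X_j\mid\bH]$ is $\sigma(\bH)$-measurable and conditionally independent of $\bH$ given $(\bS_{j^-},\bX_{k^-},X_k)$, hence is almost surely a measurable function $m(\bS_{j^-},\bX_{k^-},X_k)$ of those variables alone; (iv) substitute $\E[X_j\mid\bH]=m(\bS_{j^-},\bX_{k^-},X_k)$ into both (\ref{eq:DAG_ate_conditionalH}) and the right-hand side of (\ref{eq:TE1}), so that in each the $\bH$-content of the inner mean is carried entirely by $\bS$ (together with the observed $\bX_{k^-}$ and $X_k$); then take outer expectations against the observational law of the conditioning variables, check that the two adjustment expressions coincide, and form the $x_1$-versus-$x_0$ difference.

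I expect step (iv) to be the main obstacle — concretely, handling the residual term $\E[S_j\mid\bX_{k^-},X_k=x,\bH]$ and, when $\bS$ contains residuals of variables lying causally between $X_k$ and $X_j$, the way those residuals enter the two formulas. Unlike $Y-S_Y$ in Lemma \ref{lem:proxy}, the residual $S_j$ here is a function of $\bX$ as well as of $\bH$, so it cannot simply be passed through an expectation conditional on $\bH$; one has to verify that, after integrating against the correct observational marginal, the $S_j$-contribution and the intermediate-residual contributions to the $(\bX_{k^-},\bH)$-adjustment and to the $(\bS,\bX_{k^-})$-adjustment agree. Showing that these two adjustment functionals coincide is the heart of the argument, and it is exactly what Assumption \ref{assu:proxy_DAG} is designed to guarantee.
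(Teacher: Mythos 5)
You correctly locate the difficulty, but your submission stops exactly there: step (iv) is the assertion of the lemma itself, and you never establish it. Worse, under the reading you adopt it cannot be established. You take $S_j$ to be the residual $X_j-\E[X_j\mid\bH]$ (the literal wording after \eqref{eq:DAG_ate_conditionalH}), so that Assumption \ref{assu:proxy_DAG} forces $\E[X_j\mid\bH]=m(\bS_{j^-},\bX_{k^-},X_k)$ a.s.\ for some measurable $m$ --- your steps (ii)--(iii) are fine --- but with this interpretation the two adjustment functionals you hope to match in (iv) genuinely differ. Take $p=2$, $k=1$, $j=2$, $X_1=H+\varepsilon_1$, $X_2=X_1+H+\varepsilon_2$ with $H,\varepsilon_1,\varepsilon_2$ independent standard normal. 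Then $\E[X_2\mid H]=2H$, the residuals are $S_1=\varepsilon_1$ and $S_2=X_2-2H$, and since the conditioning set $(\bS_{j^-},X_k)=(S_1,X_1)$ determines $H=X_1-S_1$, your version of Assumption \ref{assu:proxy_DAG} holds with $m(s_1,x)=2(x-s_1)$. Yet $X_2=S_2+2(X_1-S_1)$ identically, so $\E[X_2\mid\bS,X_1=x]=S_2+2(x-S_1)$ and the right-hand side of \eqref{eq:TE1} equals $2(x_1-x_0)$, while \eqref{eq:DAG_ate_conditionalH} gives $\tau(1,2;x_1,x_0)=x_1-x_0$. So the obstacle you flag is not a technical loose end: with the residual definition of $\bS$ the identity is simply false, and no completion of step (iv) exists.

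The intended reading --- the one consistent with Section 2, with the phrase ``the residual $X_j-S_j$'' in Assumption \ref{assu:proxy_DAG}, and with how $\widehat{\bS}$ is actually estimated --- is $S_l=\E[X_l\mid\bH]$, so that $S_j$ and $\bS_{j^-}$ are $\sigma(\bH)$-measurable and $X_j-S_j$ is a genuine residual. The paper's proof is then Lemma \ref{lem:proxy} transplanted with $(Y,D,\bX)$ replaced by $(X_j,X_k,\bX_{k^-})$: write $\E[X_j\mid\bX_{k^-},X_k=x,\bH]=\E[X_j-S_j\mid\bX_{k^-},X_k=x,\bH]+S_j$; enlarge the conditioning set by $\bS_{j^-}$ at no cost since it is a function of $\bH$; apply Assumption \ref{assu:proxy_DAG} to drop $\bH$, leaving $\E[X_j-S_j\mid\bS_{j^-},\bX_{k^-},X_k=x]$; add $S_j$ back inside (it lies in $\bS$) to reassemble $\E[X_j\mid\bS,\bX_{k^-},X_k=x]$; finally the outer expectation over $(\bX_{k^-},\bH)$ factors through $(\bX_{k^-},\bS)$, and differencing $x_1$ against $x_0$ yields \eqref{eq:TE1}. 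Your instinct to mimic Lemma \ref{lem:proxy} was right; what breaks your attempt is retaining the residual definition of $\bS$, which is precisely where your step (iv) stalls.
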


Suppose we have an observed data matrix 
\(\mathbf{X}_{p\times n} = \bigl(\mathbf{X}_{\cdot,1},\ldots,\mathbf{X}_{\cdot,n}\bigr)\),
where each column \(\mathbf{X}_{\cdot,i}\) is an independent realization of the random vector \(\bX\), and let \(\mathbf{X}_{k^-,i}\) denote the subset of components indexed by \(k^-\) in the \(i\)-th observation.

Next, consider a conditional diffusion model that generates synthetic samples
\(\{\mathbf{X}_{j,i}^{(m)}(x_1)\}\) and \(\{\mathbf{X}_{j,i}^{(m)}(x_0)\}\) for \(i=1,\ldots,n\) and \(m=1,\ldots,M\). These samples satisfy
\[
\mathbf{X}_{j,i}^{(m)}(x_1) \sim \mathbb{P}
(X_j \,\bigl\lvert\, \bS=\widehat{\mathbf{S}}_{\cdot,i}, \,\bX_{k^-}=\mathbf{X}_{k^-,i},\, X_k=x_1),
\]
and analogously for \(x_0\). In other words, \(\mathbf{X}_{j,i}^{(m)}(x_1)\) (resp.\ \(\mathbf{X}_{j,i}^{(m)}(x_0)\)) is drawn from the conditional distribution of $X_j$ given \(\bS=\widehat{\mathbf{S}}_{\cdot,i}\), \(\bX_{k^-}=\mathbf{X}_{k^-,i}\), and \(X_k=x_1\) (resp.\ \(x_0\)).

By averaging over these synthetic draws, we approximate
\begin{equation*}
    \E\!\bigl[X_j \mid \bS=\widehat{\mathbf{S}}_{\cdot,i},\,\bX_{k^-}=\mathbf{X}_{k^-,i},\,X_k=x_1\bigr]
\end{equation*}
with $\frac{1}{M}\sum_{m=1}^M \mathbf{X}_{j,i}^{(m)}(x_1)$,
and similarly for \(x_0\). Consequently, the total causal effect of changing \(X_k\) from \(x_0\) to \(x_1\) on $X_j$ can be estimated by
\[
\widehat{\tau} 
\;=\; \frac{1}{M}\sum_{m=1}^M \mathbf{X}_{j,i}^{(m)}(x_1) 
\;-\; \frac{1}{M}\sum_{m=1}^M \mathbf{X}_{j,i}^{(m)}(x_0).
\]

\subsection{Connection between the total and direct effects over a DAG}

We now discuss the concept of direct effect \citep{robins1992identifiability}, defined as the expected change in \( X_j \) induced by changing \( X_k \) from \( x_0 \) to \( x_1 \) while keeping all mediating variables, \(\bX_{j^-/k^+}\), fixed at whatever value they would have obtained under \( X_k = x_0 \). The direct effect can be expressed by such an expression:
\begin{equation}
    \label{eq:DE}
    \begin{split}
        \tau_{\operatorname{d}}(k,j;x_1,x_0) & = \E_{\bX_{k^-},\bH}[\E_{\bX_{j^-/k^+}}[\E[X_{j}|X_{k}=x_1,\bX_{j^-/k^+},\bX_{k^-},\bH]|X_{k}=x_0,\bX_{k^-},\bH]] \\ 
        & - \E_{\bX_{k^-},\bH}[\E_{\bX_{j^-/k^+}}[\E[X_{j}|X_{k}=x_0,\bX_{j^-/k^+},\bX_{k^-},\bH]|X_{k}=x_0,\bX_{k^-},\bH]].
    \end{split}
\end{equation}

The direct effect considers only the pathway in which exposure directly affects the outcome, excluding any mediated pathways. Proposition \ref{prop:de_edge} establishes a connection between the direct effect and the presence of directed edges in a DAG.

\begin{figure}
	\centering
	\begin{subfigure}
		\centering
		\includegraphics[width=0.49\textwidth]{Figures/effect_x0.png}
	\end{subfigure}
	\hfill
	\begin{subfigure}
		\centering
		\includegraphics[width=0.49\textwidth]{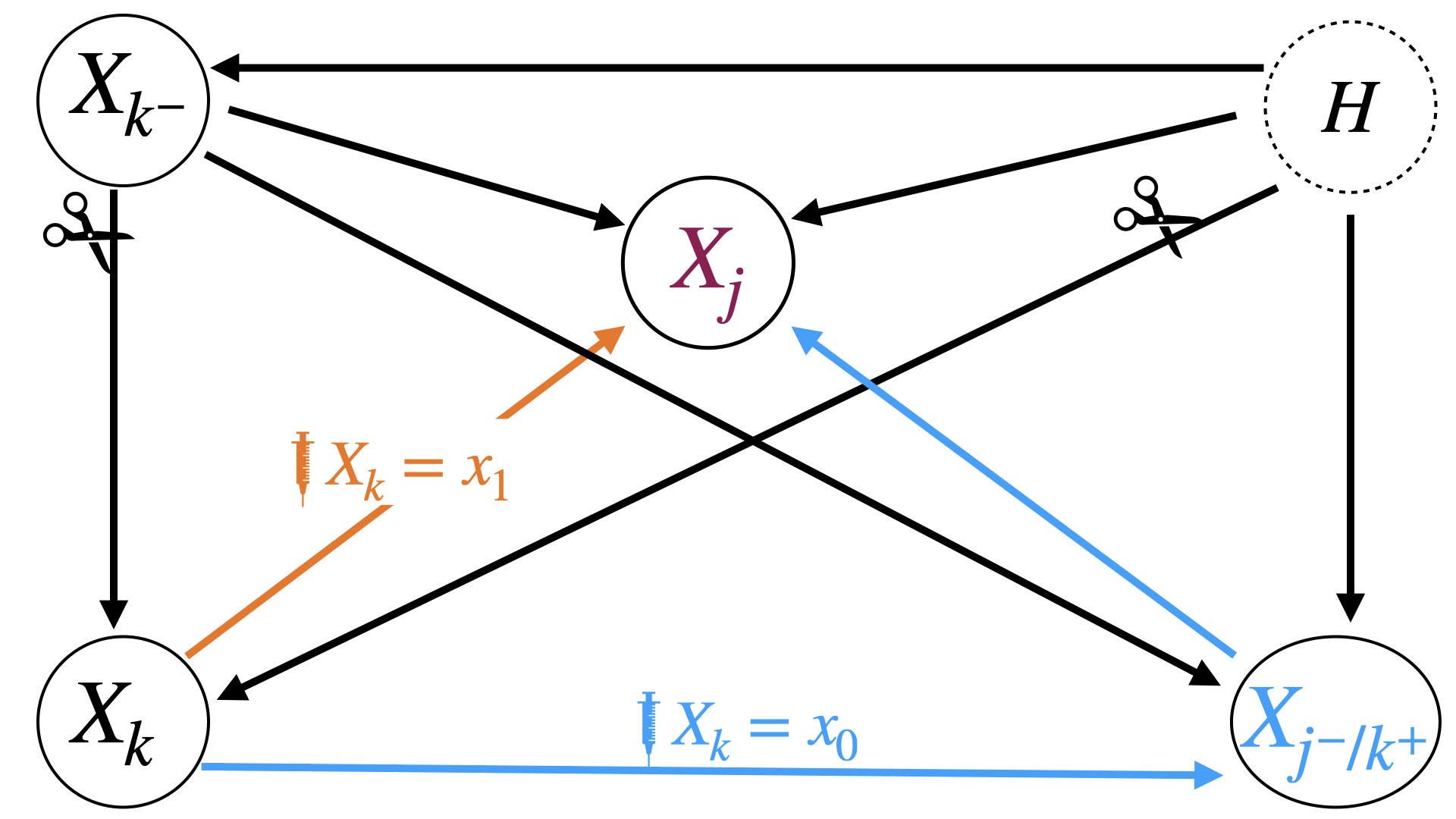}
	\end{subfigure}
	\caption{An illustration of computing the direct effect from $X_k$ to $X_j$ when $X_k$ is changed from $x_0$ to $x_1$. We compute the difference of $X_j$ in the right and left panel when we change $X_k$ from $x_0$ to $x_1$ while keeping all mediating variables $X_{j^-/k^+}$ constant at whatever value they would have obtained under $X_k=x_0$.}
	\label{fig:direct_effect_illustration}
\end{figure}

\begin{proposition} \label{prop:de_edge}
    Given a causal order \( \pi \), the absence of a directed edge from \( X_k \) to \( X_j \) indicates a zero direct effect from \( X_k \) to \( X_j \), that is:
    \begin{equation}
        \begin{split}
            (k \to j) \not \in \mathcal{E}^* \Rightarrow \tau_{\operatorname{d}}(k, j; x_1, x_0) = 0 \quad \forall x_1, x_0 \in \mathbb{R}.
        \end{split}
    \end{equation}
\end{proposition}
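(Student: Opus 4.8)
The plan is to use the structural equation \eqref{eq:sem} directly: the absence of the edge $k\to j$ means $f_j$ does not take $X_k$ as an argument, so the innermost conditional expectation in the definition \eqref{eq:DE} of $\tau_{\operatorname{d}}(k,j;x_1,x_0)$ is a function of the remaining pre-$X_j$ variables and $\bH$ alone, independent of the value at which $X_k$ is fixed. The two terms subtracted in \eqref{eq:DE} are then pointwise equal, their (identical) outer expectations agree, and $\tau_{\operatorname{d}}=0$ follows.

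First I would settle the index bookkeeping. Since $X_k$ precedes $X_j$ in $\pi$ we have $k\in j^-$, and writing $j^-/k^+ := j^-\setminus k^+$ one gets $\{k\}\cup(j^-/k^+)\cup k^- = k^+\cup(j^-/k^+) = j^-$; also $\mathrm{pa}(j)\subseteq j^-$ because $\pi$ is a valid topological order of the DAG $\mathcal G$. Hence conditioning on $(X_k,\bX_{j^-/k^+},\bX_{k^-})$ is the same as conditioning on the full predecessor set $\bX_{j^-}$, which in particular determines $\bX_{\mathrm{pa}(j)}$. Using \eqref{eq:sem} together with the independence of $\epsilon_j$ from the remaining noises and from $\bH$ (so that $\epsilon_j\perp(\bX_{j^-},\bH)$), I obtain
\[
\E\bigl[X_j\mid\bX_{j^-},\bH\bigr]=\E\bigl[f_j(\bX_{\mathrm{pa}(j)},\bH,\epsilon_j)\mid\bX_{j^-},\bH\bigr]=h_j\bigl(\bX_{\mathrm{pa}(j)},\bH\bigr)\quad\text{a.s.},
\]
where $h_j(a,\eta):=\int f_j(a,\eta,e)\,\mathbb P_{\epsilon_j}(\mathrm d e)$ is a fixed measurable map. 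When $(k\to j)\notin\mathcal E^*$, i.e.\ $k\notin\mathrm{pa}(j)$, the argument $\bX_{\mathrm{pa}(j)}$ is a subvector of $(\bX_{j^-/k^+},\bX_{k^-})$, so $h_j$ does not involve the $X_k$-coordinate at all.

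It then follows that, for every fixed $x$,
\[
\E\bigl[X_j\mid X_k=x,\bX_{j^-/k^+},\bX_{k^-},\bH\bigr]=h_j\bigl(\bX_{\mathrm{pa}(j)},\bH\bigr),
\]
which does not depend on $x$. Taking $x=x_1$ and $x=x_0$ yields two versions of the inner conditional expectation in \eqref{eq:DE} that are equal as functions of $(\bX_{j^-/k^+},\bX_{k^-},\bH)$. Since the outer operator $\E_{\bX_{k^-},\bH}\,\E_{\bX_{j^-/k^+}}[\,\cdot\mid X_k=x_0,\bX_{k^-},\bH]$ appears verbatim in both terms of \eqref{eq:DE}, applying it to these equal integrands gives the same number; their difference, which is exactly $\tau_{\operatorname{d}}(k,j;x_1,x_0)$, therefore vanishes, and since $x_1,x_0$ were arbitrary this gives $\tau_{\operatorname{d}}(k,j;x_1,x_0)=0$ for all $x_1,x_0\in\mathbb R$.

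The step I expect to demand the most care is measure-theoretic rather than conceptual: the inner object in \eqref{eq:DE} is a regular conditional expectation evaluated at the interventional value $X_k=x$, so I must produce a jointly measurable version of $\E[X_j\mid\bX_{j^-},\bH]$ that genuinely equals $h_j(\bX_{\mathrm{pa}(j)},\bH)$ and on which ``setting $X_k=x$'' is unambiguous --- otherwise the displayed pointwise identity holds only $\mathbb P_{X_k}$-almost everywhere, which still suffices after integrating out the mediators and $(\bX_{k^-},\bH)$ but warrants a short argument. It is also worth being explicit that the noise-independence invoked here is the standard mutual independence of $(\epsilon_1,\dots,\epsilon_p)$ together with their independence from $\bH$; this cannot be relaxed, since correlated errors would let conditioning on $X_k$ shift the conditional law of $\epsilon_j$ and reintroduce a nonzero ``direct'' effect even when $k\notin\mathrm{pa}(j)$.
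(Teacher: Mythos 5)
Your proof is correct, but it takes a different route from the paper. The paper does not compute from the structural equations at all: it invokes Lemma~\ref{lem:ci_equivalence} (a conditional-independence result resting on Assumption~\ref{ass:strong_ignorability_DAG}, cited from Proposition~1 of \cite{shi2023testing}), applied with the conditioning set $j^-/k$ augmented by $\bH$, to conclude directly that $\E[X_j\mid X_k=x,\bX_{j^-/k^+},\bX_{k^-},\bH]$ does not depend on $x$; the cancellation of the two terms in \eqref{eq:DE} is then immediate, exactly as in your last step. You instead derive the same invariance from the SEM \eqref{eq:sem}, writing the inner conditional expectation as $h_j(\bX_{\mathrm{pa}(j)},\bH)$ with $h_j(a,\eta)=\int f_j(a,\eta,e)\,\mathbb P_{\epsilon_j}(\mathrm d e)$, which is more elementary and self-contained but, as you note, requires $\epsilon_j\perp(\bX_{j^-},\bH)$ rather than only the stated $\epsilon_j\perp(\bX_{\mathrm{pa}(j)},\bH)$. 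That stronger independence is not literally part of the assumption attached to \eqref{eq:sem}, so strictly speaking you are proving the proposition under a mild strengthening; it is, however, exactly what the local Markov property of $(\bX,\bH)$ with respect to the canonical exogenous DAG of Appendix~\ref{sec:canonical_DAG} delivers (all of $\bX_{j^-}$ and $\bH$ are nondescendants of $X_j$), and it is the same structural content the paper's route extracts through Lemma~\ref{lem:ci_equivalence}. In short: the paper buys brevity by outsourcing the key conditional independence to a cited lemma under the ignorability assumption, while your argument buys transparency about where the independence comes from, at the cost of having to make the noise-independence (or Markov) hypothesis explicit and of the measure-theoretic care you flag about versions of the conditional expectation evaluated at $X_k=x$.
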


However, a zero direct effect from \( X_k \) to \( X_j \) does not necessarily indicate the absence of a directed edge from \( X_k \) to \( X_j \) without further restrictions. We illustrate this with the following example.

\begin{example}
    For variables \( X_1, X_2 \) with causal order \( (1, 2) \) and the following structural equation model:
    \begin{equation*}
        \begin{split}
            X_1  = \varepsilon_1, \quad X_2  = X_1 \varepsilon_2,
        \end{split}
    \end{equation*}
    where \( \varepsilon_1, \varepsilon_2 \) are independent standard Gaussian noise. By Assumption \ref{ass:strong_ignorability_DAG}, \( X_1 \) is a parent of \( X_2 \), while \( \tau_{\operatorname{d}}(1, 2; x_1, x_0) = 0 \) for any \( x_0, x_1 \in \mathbb{R} \).
\end{example}

In contrast, the total causal effect represents the overall impact of an exposure on an outcome, accounting for all possible pathways through which the exposure exerts its influence. Although direct effects are valuable for designing interventions that target specific pathways, their calculation can be challenging without a well-specified causal model, as it requires careful adjustment for mediators along the pathway. Furthermore, interpreting direct effects requires caution, as mediators can be unknown, difficult to measure, or too numerous to account for comprehensively. In comparison, total causal effects provide a broader perspective, capturing both direct and indirect influences. This makes them particularly useful in systems where multiple variables interact in complex and interdependent ways.

We provide a simple example below to illustrate the difference between our defined total effect in \eqref{eq:DAG_ate_conditionalH} and the direct effects.

\begin{example}
    To investigate causal effects among four variables \( X_1, \ldots, X_4 \), consider a model of structural equations for \( \bX = (X_1, \ldots, X_4) \): \begin{equation} \label{ex} \bX = \boldsymbol{V}^{\top} \bX + \boldsymbol{\varepsilon},
    \end{equation}
    where \( \boldsymbol{\varepsilon} \) is standard Gaussian noise and \( V_{kj} = 0 \) for \( k \geq j \). In \eqref{ex}, a causal order is \( (1, 2, 3, 4) \), and \( V_{kj} \) represents the direct effect of \( X_k \to X_j \). 

    The total effect from \( X_2 \) to \( X_4 \) can be calculated as: 
$\tau (2, 4; 1, 0) = V_{24} + V_{23}V_{34}$,
and the direct effect from \( X_2 \) to \( X_4 \) as:
        $\tau_{\operatorname{d}}(2, 4; 1, 0) = V_{24}$.
\end{example}

\section{Theoretical Properties of ACEE}

In this section, we analyze the theoretical properties of the ACEE estimator. Consider a target dataset 
$\mathcal{D} = \{\mathbf{X}_{\cdot,i}\}_{i=1}^n$,
sampled independently from a distribution $\mathbb{P}$, and an independent source dataset 
$\mathcal{D}_s = \{\mathbf{X}^s_{\cdot,i}\}_{i=1}^{n_s}$,
sampled from a distribution $\mathbb{Q}$. We assume that there exists a shared latent embedding, denoted by $h^0$, such that
\[
\mathbb{P}(X_j \mid \bX_{k^+}, \bS) = \mathbb{P}(X_j \mid h^0(\bX_{k^+}, \bS)), \quad 
\mathbb{Q}(X_j \mid \bX_{k^+}, \bS) = \mathbb{Q}(X_j \mid h^0(\bX_{k^+}, \bS)).
\]

To estimate the shared latent embedding, we first minimize the empirical loss on the source dataset along with its estimated proxy $\{\widehat{\mathbf{S}}_{\cdot,i}^{s}\}_{i=1}^{n_s}$:
$$(\widehat{\theta}_s,\widehat{h}) = \argmin_{\theta_s \in \mathcal{F}, \, h \in \Theta_h} \sum_{i=1}^{n_s} \ell_s \Bigl(\mathbf{X}_{j,i}^s, \mathbf{X}_{k^+,i}^s, \widehat{\mathbf{S}}_{\cdot,i}^{s}; \theta_s, h\Bigr),$$
where $\ell_s$ is defined in \eqref{eq:loss_score_match_implement_source}, $\Theta_h$ denotes the parameter space to estimate $h$ and $\mathcal{F}$ refers to the ReLU neural networks detailed in Appendix \ref{sec:cdm}.

Given the estimated latent embedding $\widehat{h}$, we then minimize the empirical loss on the target dataset and its estimated proxy $\{\widehat{\mathbf{S}}_{\cdot,i}\}_{i=1}^{n}$:
    $$\widehat{\theta} = \argmin_{\theta \in \mathcal{F}} \sum_{i=1}^{n} \ell \Bigl(\mathbf{X}_{j,i}, \mathbf{X}_{k^+,i}, \widehat{\mathbf{S}}_{\cdot,i}; \theta, \widehat{h}\Bigr),$$
where $\ell$ is defined in \eqref{eq:loss_score_match_implement_target}.

To analyze the estimation accuracy of the total causal effects, we impose the following technical assumptions.

\begin{assumption}[Transferability via Conditional Models]
    \label{assu:transferability}
    There exists a constant $c_1 > 0$ such that for any $h \in \Theta_h$, 
    \[
    |\delta(h) - \delta(h^0)| \leq c_1 \, |\delta_s(h) - \delta_s(h^0)|,
    \]
    where 
    \[
    \delta(h) = \inf_{\theta \in \mathcal{F}} \mathbb{E}_{(\bX,\bS)} \Bigl[\ell(X_j, \bX_{k^+}, \widehat{\bS}; \theta, h) - \ell(X_j, \bX_{k^+}, \bS; \theta^0, h^0)\Bigr]; 
    \]
    \[
    \delta_s(h) = \inf_{\theta_s \in \mathcal{F}} \mathbb{E}_{(\bX^s,\bS^s)} \Bigl[\ell_s(X_j^s, \bX_{k^+}^s, \widehat{\bS}^s; \theta_s, h) - \ell_s(X_j^s, \bX_{k^+}^s, \bS^s; \theta_s^0, h^0)\Bigr].
    \]
\end{assumption}

Assumption \ref{assu:transferability} quantifies how the excess risk associated with the latent representation transfers from the source dataset to the target dataset. Define the source excess risk by
\[
\rho_s^2\left(\gamma_s^0, \hat{\gamma}_s\right) = \mathbb{E}_{(\bX^s,\bS^s)} \Bigl[\ell_s(X_j^s, \bX_{k^+}^s, \widehat{\bS}^s; \theta_s, h) - \ell_s(X_j^s, \bX_{k^+}^s, \bS^s; \theta_s^0, h^0)\Bigr],
\]
where $\gamma_s = (\theta_s, h)$.

The next assumption concerns a specific generation error in the source generator.

\begin{assumption}[Source Error]
    \label{assu:source_error}
    There exists a sequence $\epsilon_s$ (indexed by $n_s$) such that for any $\varepsilon \geq \epsilon_s$, 
    \[
    P\Bigl(\rho_s\left(\gamma_s^0, \hat{\gamma}_s\right) \geq \varepsilon\Bigr) \leq \exp \Bigl(-c_2 \, n_s^{1-\xi} \, \varepsilon^2\Bigr),
    \]
    where $c_2 > 0$ and $\xi > 0$ are constants, and $n_s^{1-\xi} \, \epsilon_s^2 \rightarrow \infty$ as $n_s \rightarrow \infty$.
\end{assumption}

\begin{definition}[H\"older Ball]
	\label{def:holder_ball}
	Let $\beta = \lfloor \beta \rfloor + \gamma > 0$, with $\gamma \in [0,1)$. For a function $f: \mathbb{R}^d \rightarrow \mathbb{R}$, the H\"older ball of radius $B>0$ is defined as
	\[
	\mathcal{H}^{\beta}\left(\mathbb{R}^{d}, B\right) = \Bigl\{ f: \mathbb{R}^{d} \rightarrow \mathbb{R} \; \Bigl| \; \|f\|_{\mathcal{H}^{\beta}\left(\mathbb{R}^{d}\right)} < B \Bigr\},
	\]
	where the H\"older norm is given by
	\[
	\|f\|_{\mathcal{H}^{\beta}\left(\mathbb{R}^{d}\right)} := \max_{\|\mathbf{s}\|_{1} < \lfloor \beta \rfloor} \sup_{\mathbf{x}} \left|\partial^{\mathbf{s}} f(\mathbf{x})\right| + \max_{\|\mathbf{s}\|_{1} = \lfloor \beta \rfloor} \sup_{\mathbf{x} \neq \mathbf{z}} \frac{\left|\partial^{\mathbf{s}} f(\mathbf{x}) - \partial^{\mathbf{s}} f(\mathbf{z})\right|}{\|\mathbf{x}-\mathbf{z}\|_{\infty}^{\gamma}}.
	\]
\end{definition}

Focusing on smooth distributions, we impose the following condition on the target conditional density.

\begin{assumption}[Conditional Density]
	\label{assu:target_density}
	The true conditional density $\mathbb{P}(z \mid \mathbf{x},\mathbf{s})$ of $X_j$ given $\bX_{k^+} = \mathbf{x}$ and $\bS = \mathbf{s}$ is assumed to be of the form
$\mathbb{P}(z \mid \mathbf{x},\mathbf{s}) = \exp\bigl(-C \|z\|_2^2\bigr) f\Bigl(z, h^0(\mathbf{x},\mathbf{s})\Bigr)$,
	where $C > 0$ is a constant, and $f$ is a nonnegative function that is bounded away from zero and belongs to the H\"older ball $\mathcal{H}^\beta\bigl(\mathbb{R} \times [0,1]^{d_h},B\bigr)$ with radius $B>0$ and smoothness degree $\beta>0$. Here, $d_h$ denotes the dimension of the latent embedding $h^0(\mathbf{x},\mathbf{s})$. 
\end{assumption}
Assumption \ref{assu:target_density} essentially requires that the density ratio between the target density and a Gaussian kernel lies within a H\"older class. Although we assume $h^0(\mathbf{x},\mathbf{s})$ is bounded for simplicity, our analysis can be extended to the unbounded, light-tail case.

\begin{assumption}[Proxy Estimation]
	\label{assu:proxy_estimation_DAG}
	Denote $\widehat{\eta}(\bX_{k^-},\bS) = \E_{\widehat{\mathbb{P}}_n}[X_j \mid \bX_{k^-}, X_k = x_1, \bS] - \E_{\widehat{\mathbb{P}}_n}[X_j \mid \bX_{k^-}, X_k = x_0, \bS] $, we assume that 
	\begin{align*}
		\E \left| \frac{1}{n} \sum_{i=1}^n \widehat{\eta}(\mathbf{X}_{k^-,i},\widehat{\mathbf{S}}_{\cdot,i}) - \E_n[\widehat{\eta}(\mathbf{X}_{k^-,1},\mathbf{S}_{\cdot,1})] \right| = O(\frac{1}{\sqrt{n}})
	\end{align*}
	where $\widehat{\mathbb{P}}_n$ denotes the learned conditional probability by the diffusion model from the source and target dataset.
\end{assumption}

\begin{theorem}[Causal effect estimation] 
    \label{thm:acee_rate}
    Under Assumptions \ref{assu:proxy_DAG}--\ref{assu:proxy_estimation_DAG}, the mean absolute error of the causal effect estimator satisfies
    \[
    \mathbb{E}\Bigl[ \Bigl|\widehat{\tau}(k,j,x_1,x_0)-\tau(k,j,x_1,x_0)\Bigr| \Bigr] = O\Biggl(n^{-\frac{\beta}{1+d_{h}+2\beta}} \, \bigl(\log (n)\bigr)^{\max\{11,\,(\beta+5)/2\}} + \frac{1}{\sqrt{M}} + \epsilon_s\Biggr).
    \]
\end{theorem}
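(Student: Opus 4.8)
The plan is to decompose the error $\widehat\tau - \tau$ into three sources—(i) the Monte Carlo averaging error from the $M$ synthetic draws, (ii) the statistical estimation error incurred by learning the conditional distribution via the diffusion model (with knowledge transfer from the source), and (iii) the error introduced by replacing the true proxy $\bS$ with the estimate $\widehat{\bS}$—and to bound each piece separately. Concretely, by Lemma \ref{lem:proxy_DAG}, $\tau(k,j;x_1,x_0) = \E\bigl[\eta^0(\bX_{k^-},\bS)\bigr]$ where $\eta^0(\bX_{k^-},\bS) = \E[X_j\mid \bX_{k^-},X_k=x_1,\bS] - \E[X_j\mid \bX_{k^-},X_k=x_0,\bS]$, while $\widehat\tau = \frac1n\sum_i \frac1M\sum_m \bigl(\mathbf{X}^{(m)}_{j,i}(x_1) - \mathbf{X}^{(m)}_{j,i}(x_0)\bigr)$. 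I would insert the intermediate quantities $\frac1n\sum_i \widehat\eta(\mathbf{X}_{k^-,i},\widehat{\mathbf{S}}_{\cdot,i})$ (replacing the $M$-sample average by its conditional mean under $\widehat{\mathbb{P}}_n$), then $\frac1n\sum_i \widehat\eta(\mathbf{X}_{k^-,i},\mathbf{S}_{\cdot,i})$, then $\E_n[\widehat\eta(\mathbf{X}_{k^-,1},\mathbf{S}_{\cdot,1})]$, and finally $\E[\eta^0]$. The first gap is controlled by a conditional Hoeffding/Rosenthal bound giving $O(1/\sqrt M)$ after averaging (using boundedness or the light-tail structure of Assumption \ref{assu:target_density}); the proxy-substitution gap and the empirical-average fluctuation are exactly what Assumption \ref{assu:proxy_estimation_DAG} absorbs at rate $O(1/\sqrt n)$, which is dominated by the diffusion rate below.

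The core of the argument is bounding $\bigl|\E_n[\widehat\eta(\mathbf{X}_{k^-,1},\mathbf{S}_{\cdot,1})] - \E[\eta^0(\bX_{k^-},\bS)]\bigr|$ by the statistical accuracy of the learned conditional density. I would pass from a bound on the density estimation error to a bound on the conditional-mean (regression-functional) error: since $\widehat\eta$ and $\eta^0$ are integrals of $x_j$ against $\widehat{\mathbb{P}}_n(\cdot\mid\cdot)$ and $\mathbb{P}(\cdot\mid\cdot)$ respectively, a total-variation or Hellinger bound on the conditional law, combined with the sub-Gaussian tail of $X_j$ from Assumption \ref{assu:target_density}, yields $\E\bigl|\widehat\eta - \eta^0\bigr| \lesssim \bigl(\text{TV or }\sqrt{\text{KL}}\text{ rate}\bigr)$ up to logarithmic factors. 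The density-estimation rate itself is the diffusion-model approximation-plus-estimation rate for a conditional density whose ratio to a Gaussian kernel lies in $\mathcal H^\beta(\mathbb R\times[0,1]^{d_h},B)$: this is where the exponent $\beta/(1+d_h+2\beta)$ and the polylog factor $(\log n)^{\max\{11,(\beta+5)/2\}}$ come from, invoking the ReLU-network score-matching analysis referenced in Appendix \ref{sec:cdm}. Crucially, the effective input dimension is $1+d_h$ rather than $1 + \dim(\bX_{k^+}) + \dim(\bS)$ because the shared latent embedding $h^0$ reduces dimensionality; here the transfer apparatus—Assumption \ref{assu:transferability} linking target excess risk $\delta(h)$ to source excess risk $\delta_s(h)$, together with the source-error concentration of Assumption \ref{assu:source_error}—is used to show that $\widehat h$ learned on the source contributes only the $\epsilon_s$ term, so that the target-stage ReLU estimation of $\theta$ over the low-dimensional $h$ achieves the stated nonparametric rate.

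The main obstacle I anticipate is the translation from the score-matching / denoising objective the diffusion model actually optimizes to a usable bound on $\mathrm{TV}\bigl(\widehat{\mathbb P}_n(\cdot\mid \bX_{k^+},\bS),\,\mathbb P(\cdot\mid\bX_{k^+},\bS)\bigr)$ uniformly enough (in an $L^1(\mathbb P_{\bX_{k^+},\bS})$ sense) to integrate against $x_j$ and average over $i$—this requires controlling the diffusion discretization error, the score approximation error of the ReLU class on the H\"older ball, and the stochastic error from $n$ samples simultaneously, and then propagating a density-level bound to a conditional-expectation-level bound without losing the exponent. A secondary subtlety is handling the composition $\widehat h$ into the target score network: one must argue the excess-risk decomposition is stable under plugging in an estimated embedding, which is precisely the role of Assumption \ref{assu:transferability}, but making the constants and the $\xi$-slack in Assumption \ref{assu:source_error} line up so that $\epsilon_s$ (and not something larger) appears is delicate. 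Once these are in hand, collecting the three bounds and taking expectations gives the claimed rate.
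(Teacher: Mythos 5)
Your proposal matches the paper's proof in essentially every respect: the same three-part decomposition through $\widehat\eta$ and $\E_n[\widehat\eta]$, the Monte Carlo term controlled via the sub-Gaussian/light-tail structure of the learned conditional, the middle gap absorbed by Assumption \ref{assu:proxy_estimation_DAG} at the dominated $O(1/\sqrt n)$ rate, and the core term obtained by converting the transfer-learning TV/score-matching rate for the diffusion model (with effective dimension $1+d_h$ via the shared embedding, and Assumptions \ref{assu:transferability}--\ref{assu:source_error} yielding the $\epsilon_s$ contribution) into a posterior-mean bound. The paper handles the TV-to-conditional-mean step and the extra log factors exactly by citing the posterior-mean argument in Fu et al.\ and its Lemma on diffusion transfer, which is the step you flagged as the main obstacle, so your outline is the same argument at the same level of detail.
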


The three components in the error bound can be interpreted as follows: 

\begin{enumerate} \item The error term $n^{-\frac{\beta}{1 + d_{h} + 2\beta}}\left( \log(n)\right)^{\max\{11,(\beta + 5)/2\}}$ reflects the generation error arising from estimating the conditional density (cf. \cite{fu2024unveil}).

\item The term $\frac{1}{\sqrt{M}}$ corresponds to the Monte Carlo error due to the use of synthetic samples, which can be made arbitrarily small by increasing \( M \).

\item The term $\epsilon_s$ represents the generation error from the source data. Since \( n_s \gg n \) is typical—owing to the use of a large pretrained model—this error is generally negligible, and the first term tends to dominate the overall estimation accuracy (cf. \cite{tian2024enhancing}).

\end{enumerate}

The remainder of this section focuses on the estimation accuracy for the bias-corrected estimator described in Section~\ref{bias-correction}.

\begin{assumption}[Model Assumptions]
\label{assu:data}
We impose the following regularity conditions:
\begin{enumerate}
    \item Overlap: There exists a constant $\eta > 0$ such that 
    $\eta < \mathbb{P}(D=1 \mid \bX, \bS) < 1 - \eta$.
    
    \item Response Function: For $d \in \{0,1\}$, the response function has a finite second moment, i.e.,
$\mathbb{E}\left[\mu_d^2(\bX, \bS)\right] < \infty$,
    where $\mu_d(\bX, \bS) = \mathbb{E}\left[Y \mid \bX, \bS, D=d\right]$.
    
    \item Residuals: For $d \in \{0,1\}$, the conditional second moment of the residuals 
$U_d = Y(d) - \mu_d(\bX, \bS)$
    is uniformly bounded almost surely over $(\bX,\bS)$.
\end{enumerate}
\end{assumption}

The first condition is the standard overlap requirement, while the remaining conditions impose conventional regularity constraints.

\begin{assumption}[Proxy Estimation Consistency]
\label{assu:uncorrected_estimator}
For each $d \in \{0,1\}$, assume there exists a deterministic function $\bar{\mu}_d(\bX, \bS)$ satisfying
$\mathbb{E}\left[\bar{\mu}_d^2(\bX, \bS)\right] < \infty$,
such that the estimated outcome model $\widehat{\mu}_d(\bX_i, \widehat{\bS}_i)$ satisfies:
\[
\max_{1 \le i \le n} \left|\widehat{\mu}_d(\bX_i, \widehat{\bS}_i) - \bar{\mu}_d(\bX_i, \bS_i)\right| \xrightarrow{p} 0, \quad 
\max_{1 \le i \le n} \left\|(\bX_i, \widehat{\bS}_i) - (\bX_i, \bS_i)\right\| \xrightarrow{p} 0.
\]
\end{assumption}

Assumption \ref{assu:uncorrected_estimator} permits misspecification of the outcome model by requiring only estimation consistency.

\begin{assumption}[Bounded Neighborhood Density]
	\label{assu:density_neighbor}
	For observation $1$ with $(\bX_1, \bS_1)$, we assume that $\mu_d$ and $\bar{\mu}_d$ are continuous at $(\bX_1, \bS_1)$; $d=0,1$, and if there exists a radius $r > 0$ such that the density of $(\bX, \bS)$ is bounded below by a positive constant if $\|(\bX,\bS) - (\bX_1,\bS_1)\| < r$ for both treated and control groups.
\end{assumption}

\begin{theorem}[Estimation Consistency]
\label{thm:consistency_bad_generation}
Under Assumptions~\ref{assu:conditional_random}, \ref{assu:proxy}, \ref{assu:data}, and \ref{assu:uncorrected_estimator}, and as $n, M \rightarrow \infty$, if we choose $N \rightarrow \infty$ but $\frac{N \log(n)}{n} \rightarrow 0$, then
the bias-corrected estimator satisfies:
\[
\widehat{\tau}^{\mathrm{c}} - \tau \xrightarrow{p} 0,
\]
where $\tau$ denotes the average treatment effect (ATE), and $n$ and $M$ are the sample and Monte Carlo sizes.
For the individual treatment effect (ITE) at observation $1$, if Assumption \ref{assu:density_neighbor} is satisfied, then
\[
\widehat{\tau}^{\mathrm{c}}(\bX_1, \widehat{\bS}_1) - \tau(\bX_1, \bS_1) \xrightarrow{p} 0.
\]
\end{theorem}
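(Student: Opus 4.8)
The overall strategy is to first reduce the bias‑corrected estimator to a ``clean'' object by stripping out the estimation error in $\widehat{\mu}_d$ and $\widehat{\bS}$, and then to analyze the two nearest‑neighbor correction terms separately: one carries the (generally nonvanishing) bias $\Delta_d:=\mu_d-\bar{\mu}_d$ of the limiting fit, the other carries mean‑zero noise. Writing $\mu_d(\bX,\bS)=\E[Y\mid\bX,\bS,D=d]$, by construction $Y_l=\mu_{D_l}(\bX_l,\bS_l)+U_{D_l,l}$ with $\E[U_{d,l}\mid\bX_l,\bS_l,D_l=d]=0$ and, using overlap, $\E[U_{d,l}^2\mid\bX_l,\bS_l,D_l=d]\le\bar{\sigma}^2$ for a constant $\bar\sigma^2$ (Assumption~\ref{assu:data}); and by Assumptions~\ref{assu:conditional_random}--\ref{assu:proxy} together with Lemma~\ref{lem:proxy}, $\mu_d(\bX,\bS)=\E[Y(d)\mid\bX,\bS]$, so that $\tau=\E[\mu_1-\mu_0]$.

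Plugging $Y_l=\mu_{D_l}(\bX_l,\bS_l)+U_{D_l,l}$ into the definition of $\widehat{\mu}_d^{c}$ and invoking the uniform consistency $\max_l|\widehat{\mu}_d(\bX_l,\widehat{\bS}_l)-\bar{\mu}_d(\bX_l,\bS_l)|=o_p(1)$ from Assumption~\ref{assu:uncorrected_estimator}, I would obtain, uniformly in $i$,
\[
\widehat{\mu}_d^{c}(\bX_i,\widehat{\bS}_i)
=\bar{\mu}_d(\bX_i,\bS_i)
+\frac1N\sum_{l\in\mathcal{J}_N^{d}(\bX_i,\widehat{\bS}_i)}\Delta_d(\bX_l,\bS_l)
+\frac1N\sum_{l\in\mathcal{J}_N^{d}(\bX_i,\widehat{\bS}_i)}U_{d,l}
+o_p(1).
\]
A step here is to check that replacing $\widehat{\bS}_i$ by $\bS_i$ in the choice of the neighbor index sets is asymptotically harmless: since $\max_i\|(\bX_i,\widehat{\bS}_i)-(\bX_i,\bS_i)\|\xrightarrow{p}0$, each selected neighbor still lies within a ball around $(\bX_i,\bS_i)$ of radius this perturbation plus the $N$‑nearest‑neighbor radius, and the rate condition $N\log(n)/n\to 0$ makes that radius vanish uniformly over the $n$ units.

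For the ATE, average the display over $i$ and swap summation order, using $\sum_{l:D_l=d}K_N(l)=nN$, so that $\widehat{\tau}^{c}$ equals $\frac1n\sum_i[\bar{\mu}_1-\bar{\mu}_0](\bX_i,\bS_i)$ plus, taken for $d=1$ minus $d=0$, the pieces $\frac1n\sum_{l:D_l=d}\tfrac{K_N(l)}{N}\Delta_d(\bX_l,\bS_l)$ and $\frac1n\sum_{l:D_l=d}\tfrac{K_N(l)}{N}U_{d,l}$, up to $o_p(1)$. The first term converges to $\E[\bar{\mu}_1-\bar{\mu}_0]$ by the law of large numbers. The noise piece has conditional variance at most $\frac{\bar{\sigma}^2}{n^2}\sum_{l:D_l=d}(K_N(l)/N)^2\le\frac{\bar{\sigma}^2}{n^2}\cdot\frac nN\sum_{l:D_l=d}\frac{K_N(l)}{N}=\frac{\bar{\sigma}^2}{N}\to 0$ (using $K_N(l)\le n$), hence is $o_p(1)$. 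The decisive term is $\frac1n\sum_{l:D_l=d}\frac{K_N(l)}{N}\Delta_d(\bX_l,\bS_l)$: by overlap, $\E[\Delta_d]=\E[\mathbf 1\{D=d\}\Delta_d(\bX,\bS)/\mathbb{P}(D=d\mid\bX,\bS)]$, so a Cauchy--Schwarz bound reduces the claim to the matching‑weight estimate
\[
\frac1n\sum_{l:D_l=d}\Bigl(\frac{K_N(l)}{N}-\frac{1}{\mathbb{P}(D=d\mid\bX_l,\bS_l)}\Bigr)^2\xrightarrow{p}0,
\]
which I would establish by showing that $K_N(l)/N$ concentrates (Bernstein, exploiting $N\to\infty$) around a conditional mean that converges to the inverse propensity as the $N$‑nearest‑neighbor radius shrinks. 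Granting this, $\widehat{\tau}^{c}\xrightarrow{p}\E[\bar{\mu}_1-\bar{\mu}_0]+\E[\Delta_1-\Delta_0]=\E[\mu_1-\mu_0]=\tau$. I expect this matching‑weight estimate to be the main obstacle; an equivalent but more transparent route is to view $\frac1n\sum_i\frac1N\sum_{l\in\mathcal{J}_N^{d}(\bX_i,\widehat{\bS}_i)}\Delta_d(\bX_l,\bS_l)$ as a nearest‑neighbor smoothing of $\Delta_d$ and apply a Lebesgue‑differentiation argument, where overlap is precisely what makes the smoothing measure and the target measure mutually absolutely continuous, so that no continuity of $\Delta_d$ is needed.

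For the ITE at observation~$1$, Assumption~\ref{assu:density_neighbor} supplies a fixed ball around $(\bX_1,\bS_1)$ on which the density of $(\bX,\bS)$ is bounded below within each treatment arm; since $N/n\to 0$, this forces the $N$‑nearest‑neighbor radius around $(\bX_1,\bS_1)$---and, by the argument above, around $(\bX_1,\widehat{\bS}_1)$ as well---to vanish. Continuity of $\mu_d$ and $\bar{\mu}_d$ at $(\bX_1,\bS_1)$ then gives $\frac1N\sum_{l\in\mathcal{J}_N^{d}(\bX_1,\widehat{\bS}_1)}\Delta_d(\bX_l,\bS_l)\xrightarrow{p}\Delta_d(\bX_1,\bS_1)$, while the noise correction has conditional variance $\le\bar{\sigma}^2/N\to 0$; combining, $\widehat{\mu}_d^{c}(\bX_1,\widehat{\bS}_1)\xrightarrow{p}\bar{\mu}_d(\bX_1,\bS_1)+\Delta_d(\bX_1,\bS_1)=\mu_d(\bX_1,\bS_1)$ for $d=0,1$, which yields the conclusion. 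One residual technicality common to both parts is that $\widehat{\bS}$ is computed from the full sample, outcomes included, so the conditioning used for the noise terms must be taken with respect to $\widehat{\bS}$, with the conditional‑mean‑zero property justified via leave‑one‑out stability (a single outcome perturbs the proxy only at order $1/n$) together with Assumption~\ref{assu:uncorrected_estimator}.
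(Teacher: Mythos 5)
Your ITE argument is essentially the paper's own: shrink the nearest-neighbor radius around $(\bX_1,\bS_1)$ using the local density lower bound and $N\log(n)/n\to 0$, use continuity of $\mu_d,\bar{\mu}_d$ for the mean of the neighbor residuals, bound the conditional variance by $O(1/N)$, and clean up with Assumption \ref{assu:uncorrected_estimator}. Your ATE decomposition is also structurally parallel to the paper's Lemma \ref{lem:consistency_bad_generation} (which rewrites $\widehat{\tau}^c$ as an IPW term plus vanishing corrections): in both treatments everything reduces to the same crux, namely that the matching counts satisfy $\frac1n\sum_{l:D_l=d}\bigl(\frac{K_N(l)}{N}-\frac{1}{\mathbb{P}(D=d\mid\bX_l,\bS_l)}\bigr)^2\xrightarrow{p}0$.

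The genuine gap is that you do not prove this crux, and the sketch you offer would not go through as stated. $K_N(l)$ is a sum over $j$ of indicators $\mathbf{1}\{l\in\mathcal{J}_N^{D_l}(\bX_j,\widehat{\bS}_j)\}$ that are strongly dependent (whether $l$ is among unit $j$'s $N$ nearest neighbors depends on the entire point cloud, and additionally on $\widehat{\bS}$, which is computed from all observations including the outcomes), so a direct Bernstein bound around a ``conditional mean converging to the inverse propensity'' is not available; establishing this convergence is precisely the nontrivial matching-theory result that the paper does not reprove but imports: it verifies its Assumption \ref{assu:weights_density} for these weights by citing Theorem B.1 of \cite{lin2023estimation} and then applies Lemma \ref{lem:consistency_bad_generation}. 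Relatedly, your claim that the nearest-neighbor radii vanish \emph{uniformly over all $n$ units} is not justified under the stated assumptions (Assumption \ref{assu:density_neighbor} is local to observation $1$ and is only invoked for the ITE); the paper's ATE route never needs such uniform shrinkage, because it works through the weight-versus-inverse-propensity condition and handles the residual term by a Cauchy--Schwarz bound in expectation. That Cauchy--Schwarz route also sidesteps the conditioning subtlety you flag at the end: the paper never needs the noise to be conditionally mean zero given the matching weights, so no leave-one-out stability argument for $\widehat{\bS}$ is required. If you replace your Bernstein/Lebesgue sketch by the citation (or an actual proof) of the weight-consistency result and drop the uniform-radius claim, your argument becomes a complete proof along essentially the paper's lines.
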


Theorem \ref{thm:consistency_bad_generation} demonstrates the consistency of the bias-corrected estimators for both the average and individual treatment effects, even under potential misspecification of the outcome model.

\section{Numerical Examples}

\subsection{Simulations}

This subsection conducts simulation studies to evaluate the finite sample performance of the proposed ACEE method. We also compare it with state-of-the-art methods for average treatment effect estimation, including the bias-corrected nearest-neighbor matching estimator \citep{lin2023estimation}, the empirical balancing calibration weighting estimator \citep{chan2016globally}, and causal forest \citep{wager2018estimation}.

Consider the following models in the simulation studies.

\begin{enumerate}
\item (M1). A nonlinear model with an additive error term: $Y = X_1^2 + X_1 X_2 + \exp(X_3+D) + \sin(X_4X_5) + \varepsilon, \varepsilon \sim N(0,1)$
\item (M2). A model with an additive error term whose variance depends on the predictors: $Y = X_1^2 + \exp(X_3+D) + \sin (X_4 X_5) + (10D+X_5^2/2) \times \varepsilon, \varepsilon \sim N(0,1)$.
\item (M3). A model with a multiplicative non-Gaussian error term: $Y = (X_1^2 + \sin(X_2 X_3) + D)\times \exp(\varepsilon), \varepsilon \sim N(0,1)$.
\item (M4). A model with an unobserved confounder. $Y = X_1^2 + D X_1 + \sin(X_4 X_5) + U^2 + \varepsilon, \varepsilon \sim N(0,1)$
\end{enumerate}

In (M1)-(M3), the covariate vector $X$ is generated from standard multivariate normal distribution. We determine the treatment by $P(D=1|\bX)= 0.1 + \frac{0.8}{1+\exp(X_1X_2)}$. In (M4), we set $X_1 = 1.5 U + \varepsilon_1, X_2 = U + \varepsilon_2. X_3 = U + \varepsilon_3$ , where $(U,\varepsilon_1,\varepsilon_2,\varepsilon_3)$ is generated from standard multivariate normal distribution. We determine the treatment by $P(D=1|\bX,U)= 0.1 + \frac{0.8}{1+\exp(-U)}$.

\begin{table}[H]
\footnotesize
    \centering
    \begin{tabular}{c c c c c c}
    \hline
      Model   &  Sample Size  & ACEE & EBCW & CF & BCNNM\\
    \hline
     M1  & 50 & 0.040 & 0.035 & 0.081 & 0.067\\
         & 100 & 0.040 & 0.019 & 0.030 & 0.028\\
         & 200 & 0.033 & 0.011 & 0.010 & 0.022\\
     M2  & 50 & 0.105 & 8.858 & 8.978 & 7.660\\
         & 100 & 0.093 & 5.205 & 5.318 & 4.630\\
         & 200 & 0.038 & 1.551 & 1.349 & 1.312\\
     M3  & 50 & 0.082 & 9.632 & 13.086 & 5.426\\
         & 100 & 0.075 & 1.276 & 0.897 & 0.773\\
         & 200 & 0.017 & 0.736 & 0.559 & 0.763\\
     M4  & 50 & 0.039 & 1.455 & 1.188 & 0.812\\
         & 100 & 0.019 & 1.992 & 0.302 & 0.564\\
         & 200 & 0.015 & 0.581 & 0.071 & 0.209\\
     \hline
    \end{tabular}
    \caption{Mean squared error (MSE) of the estimated average treatment effect (ATE) over 10 simulation 
repetitions.}
    \label{tab:ate1}
\end{table}
As shown in Table \ref{tab:ate1}, ACEE has significant advantages over competitors, especially when the model comes with heteroscedastic noise and the sample size is small.

We also examine how bias-correction mechanism helps when the distribution of auxiliary data deviates from the distribution of target data. Let $\text{M}$ represent the structural equation model of the data, for the auxiliary data, we set $P(\text{M} = \text{M1}) = \eta, P(\text{M}=\text{M3}) = 1-\eta$, and for the target data, we set $P(\text{M}=\text{M1})=1$.  

\begin{table}[H]
\footnotesize
    \centering
    \begin{tabular}{c c c c c c}
    \hline
     $\eta $ &  0.2 & 0.4 & 0.6 & 0.8\\
    \hline
     ACEE    & 0.181 & 0.261 & 0.248 & 0.253\\
     ACEE-bc & 0.026 & 0.014 & 0.020 & 0.042 \\
     \hline
    \end{tabular}
    \caption{Mean squared error (MSE) of the estimated average treatment effect (ATE) over 10 simulation repetitions.}
    \label{tab:ate2}
\end{table}
As shown in Table \ref{tab:ate2}, the bias correction mechanism helps when the distribution of auxiliary data deviates from the distribution of target data.

\subsection{CSuite Dataset}
CSuite contains a collection of synthetic datasets designed to benchmark causal machine learning algorithms, originally introduced in \cite{geffner2024deep}. We selected four CSuite datasets for our numerical studies:
\begin{enumerate}
    \item (\textit{Nonlinear Simpson}) A demonstration of Simpson's paradox using a continuous structural equation model (SEM), where $\operatorname{Cov}(X_1, X_2)$ has the opposite sign of $\operatorname{Cov}(X_1, X_2 | X_0)$.
    \item (\textit{Symprod Simpson}) A multimodal dataset where $\operatorname{Cov}(X_0, X_2) = 0$ and $\operatorname{Cov}(X_1, X_2) = 0$, highlighting the importance of nonlinear function estimation.
    \item (\textit{Large Backdoor}) A nonlinear SEM with non-Gaussian noise, where adjusting for all confounders is valid but leads to high variance.
    \item (\textit{Weak Arrows}) A variant of \textit{Large Backdoor} with numerous additional edges.
\end{enumerate}

 Each dataset includes a training set of 2,000 samples and an intervention set. Each intervention test consists of a treatment variable, a treatment value, a reference treatment value, and an effect variable. The ground-truth average treatment effect (ATE) is estimated using the treated and reference intervention test sets.
  
  As a benchmark, we include Deep End-to-End Causal Inference (DECI) \citep{geffner2024deep}, a state-of-the-art method that jointly performs causal discovery and causal effect estimation in nonlinear settings by leveraging variational inference and flow-based models. To ensure a fair comparison, DECI is provided with the same partial causal order information utilized by ACEE, which serves as constraints for its causal discovery process.

  As suggested by Table \ref{tab:Csuite1}, ACEE achieves comparable performance to DECI across all four data examples, despite ACEE not relying on auxiliary samples to enhance the diffusion model training. Notably, the random error in these examples is additive, a setting that aligns with DECI’s assumptions. Since DECI explicitly assumes additive noise, it has a structural advantage over ACEE, which does not impose this assumption.

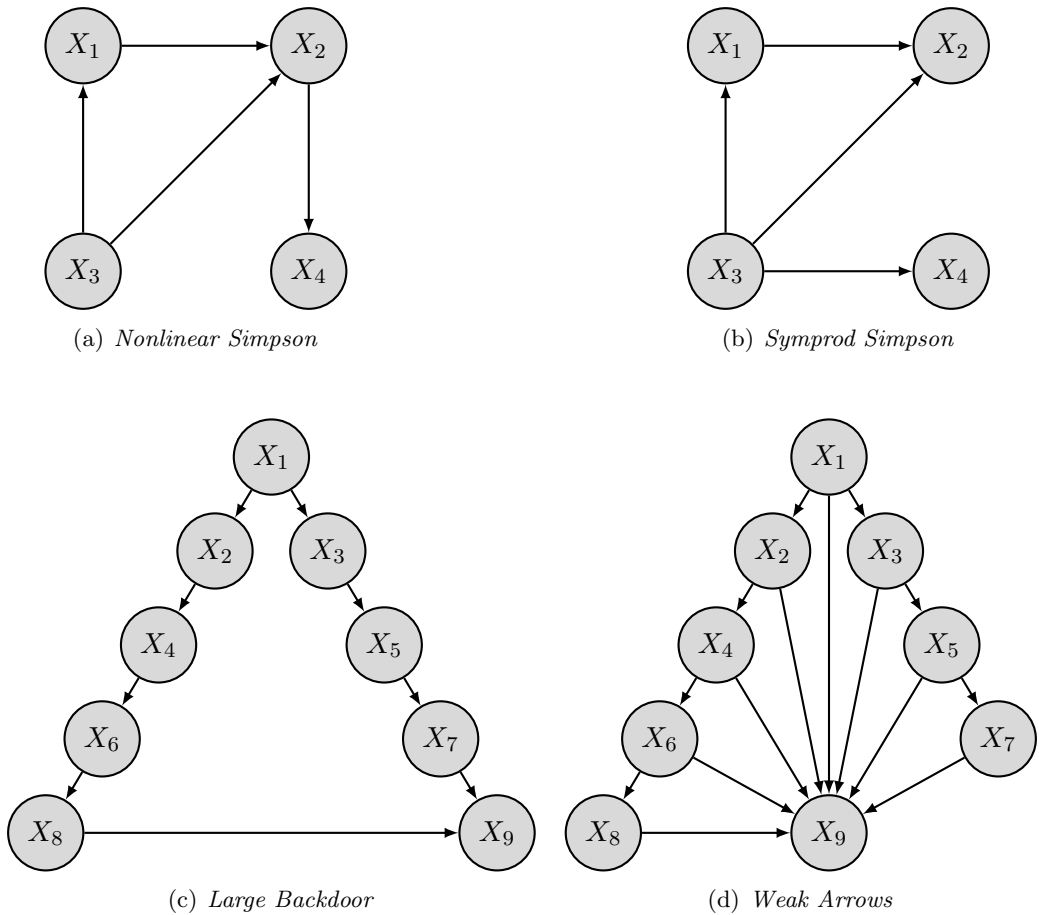
\begin{figure}[H]
\centering
% First DAG (Top-Left)
\subfigure[\textit{Nonlinear Simpson}]{
\begin{tikzpicture}[
    roundnode/.style={circle, draw=black, fill=gray!30, thick, minimum size=10mm},
    arrow/.style={-latex, thick},
    node distance=1.5cm and 2cm
]

% Nodes
\node[roundnode] (X1) at (0,6) {$X_1$};
\node[roundnode] (X2) at (3,6) {$X_2$};
\node[roundnode] (X3) at (0,3) {$X_3$};
\node[roundnode] (X4) at (3,3) {$X_4$};

% Edges
\draw[arrow] (X1) -- (X2);
\draw[arrow] (X3) -- (X1);
\draw[arrow] (X3) -- (X2);
\draw[arrow] (X2) -- (X4);
\end{tikzpicture}
}
% Second DAG (Top-Right)
\hspace{4cm}
\subfigure[\textit{Symprod Simpson}]{
\begin{tikzpicture}[
    roundnode/.style={circle, draw=black, fill=gray!30, thick, minimum size=10mm},
    arrow/.style={-latex, thick},
    node distance=1.5cm and 2cm
]

% Nodes
\node[roundnode] (X1) at (0,6) {$X_1$};
\node[roundnode] (X2) at (3,6) {$X_2$};
\node[roundnode] (X3) at (0,3) {$X_3$};
\node[roundnode] (X4) at (3,3) {$X_4$};

% Edges
\draw[arrow] (X1) -- (X2);
\draw[arrow] (X3) -- (X1);
\draw[arrow] (X3) -- (X2);
\draw[arrow] (X3) -- (X4);

\end{tikzpicture}
}
\vspace{0.5cm} % Adds vertical spacing between rows of DAGs

% Third DAG (Bottom-Left)
\subfigure[\textit{Large Backdoor}]{
\begin{tikzpicture}[
    roundnode/.style={circle, draw=black, fill=gray!30, thick, minimum size=10mm},
    arrow/.style={-latex, thick},
    node distance=1.5cm and 2cm
]

% Nodes
\node[roundnode] (X1) at (0, 3) {$X_1$};
\node[roundnode] (X2) at (-0.75, 1.75) {$X_2$};
\node[roundnode] (X3) at (0.75, 1.75) {$X_3$};
\node[roundnode] (X4) at (-1.5, 0.5) {$X_4$};
\node[roundnode] (X5) at (1.5, 0.5) {$X_5$};
\node[roundnode] (X6) at (-2.25, -0.75) {$X_6$};
\node[roundnode] (X7) at (2.25, -0.75) {$X_7$};
\node[roundnode] (X8) at (-3, -2) {$X_8$};
\node[roundnode] (X9) at (3, -2) {$X_9$};

% Edges
\draw[arrow] (X1) -- (X2);
\draw[arrow] (X1) -- (X3);
\draw[arrow] (X2) -- (X4);
\draw[arrow] (X3) -- (X5);
\draw[arrow] (X4) -- (X6);
\draw[arrow] (X6) -- (X8);
\draw[arrow] (X8) -- (X9);
\draw[arrow] (X5) -- (X7);
\draw[arrow] (X7) -- (X9);
\end{tikzpicture}
}
% Fourth DAG (Bottom-Right)
\subfigure[\textit{Weak Arrows}]{
\begin{tikzpicture}[
    roundnode/.style={circle, draw=black, fill=gray!30, thick, minimum size=10mm},
    arrow/.style={-latex, thick},
    node distance=1.5cm and 2cm
]

% Nodes
\node[roundnode] (X1) at (0, 3) {$X_1$};
\node[roundnode] (X2) at (-0.75, 1.75) {$X_2$};
\node[roundnode] (X3) at (0.75, 1.75) {$X_3$};
\node[roundnode] (X4) at (-1.5, 0.5) {$X_4$};
\node[roundnode] (X5) at (1.5, 0.5) {$X_5$};
\node[roundnode] (X6) at (-2.25, -0.75) {$X_6$};
\node[roundnode] (X7) at (2.25, -0.75) {$X_7$};
\node[roundnode] (X8) at (-3, -2) {$X_8$};
\node[roundnode] (X9) at (0, -2) {$X_9$};

% Edges
\draw[arrow] (X1) -- (X2);
\draw[arrow] (X1) -- (X3);
\draw[arrow] (X2) -- (X4);
\draw[arrow] (X3) -- (X5);
\draw[arrow] (X4) -- (X6);
\draw[arrow] (X6) -- (X8);
\draw[arrow] (X8) -- (X9);
\draw[arrow] (X5) -- (X7);
\draw[arrow] (X7) -- (X9);
\draw[arrow] (X1) -- (X9);
\draw[arrow] (X2) -- (X9);
\draw[arrow] (X3) -- (X9);
\draw[arrow] (X4) -- (X9);
\draw[arrow] (X5) -- (X9);
\draw[arrow] (X6) -- (X9);
\end{tikzpicture}
}

\caption{Parent-Child relationships of four selected Csuite datasets.}
\label{fig:four_dags}
\end{figure}

\begin{table}[H]
\footnotesize
    \centering
    \begin{tabular}{l c c c c}
    \hline
    Method & ACEE & DECI Gaussian & DECI Spline \\
    \hline
    \textit{Nonlinear Simpson} & 2.072 & 1.974 & 1.732 \\
    \textit{Symprod Simpson} & 0.186 & 0.326 & 0.368 \\
    \textit{Large Backdoor} & 0.411 & 0.196 & 0.193 \\
    \textit{Weak Arrows} & 0.49 & 0.212 & 0.263  \\
    \hline
    \end{tabular}
\caption{Root Mean Squared Errors (RMSEs) of the estimates and the estimated ground truth over 20 random seeds across four CSuite datasets. DECI Gaussian/Spline \cite{geffner2024deep} correspond to DECI using a Gaussian or spline noise model.}
    \label{tab:Csuite1}
\end{table}
To further assess the robustness of ACEE in scenarios where additive noise assumptions do not hold, we modify the \textit{Nonlinear Simpson} and \textit{Symprod Simpson} datasets by replacing the additive noise terms with multiplicative noise while preserving the parent-child relationships. The RMSE results for these modified datasets are presented in Table \ref{tab:Csuite_prod}. In these more challenging settings, where DECI’s additive noise assumption is violated, ACEE significantly outperforms DECI, demonstrating its superior adaptability to more flexible data-generating processes.

\begin{table}[H]
\footnotesize
    \centering
    \begin{tabular}{l c c c c}
    \hline
    Method & ACEE & DECI Gaussian & DECI Spline \\
    \hline
    \textit{Nonlinear Simpson (Multiplicative Error Term)} & 0.369 & 2.542 & 2.931 \\
    \textit{Symprod Simpson (Multiplicative Error Term)} & 0.074 & 0.689 & 0.201 \\
    \hline
    \end{tabular}
\caption{Root Mean Squared Errors (RMSEs) of the estimates and the ground truth over 20 random seeds across two modified CSuite datasets. DECI Gaussian/Spline \citep{geffner2024deep} correspond to DECI using a Gaussian or spline noise model.}
    \label{tab:Csuite_prod}
\end{table}

To demonstrate the effectiveness of transfer learning in improving causal effect estimation, we augment the training of diffusion models by incorporating auxiliary data for the \textit{Nonlinear Simpson} and \textit{Symprod Simpson} datasets. The auxiliary datasets preserve the same parent-child relationships as the original datasets but differ in their underlying distribution; see the appendix for further details on the data generation process. We evaluate the impact of auxiliary sample size on estimation accuracy by computing the RMSEs between the estimated and ground-truth average treatment effects (ATEs). As shown in Table \ref{tab:aux_sample_size}, ACEE exhibits a clear improvement in estimation accuracy as the number of auxiliary samples increases.

\begin{table}[H]
\footnotesize
    \centering
    \begin{tabular}{l c c c c c}
    \hline
    Auxiliary Sample Size & 2000 & 4000 & 6000 & 8000 & 10000 \\
    \hline
    nonlin\_simpson & 0.329 & 0.269 & 0.209 & 0.193 & 0.166 \\
    symprod\_simpson & 0.083 & 0.078 & 0.078 & 0.071 & 0.071 \\
    \hline
    \end{tabular}
    \caption{Root Mean Squared Errors (RMSEs) of ATE estimates over 20 random seeds across different auxiliary sample sizes.}
    \label{tab:aux_sample_size}
\end{table}

\subsection{IHDP Data Analysis}

In this subsection, we apply ACEE to the Infant Health and Development Program (IHDP) dataset introduced in \cite{hill2011bayesian}, which is a commonly used pseudo-real dataset as ground truth causal effects are inaccessible in most real studies. Our goal is to demonstrate the effectiveness of ACEE in estimating causal effects.

\paragraph*{Dataset.}
The IHDP dataset originates from a randomized controlled trial designed to assess the impact of home visits by specialist doctors on the cognitive test scores for premature infants. The experiment proposed by \cite{hill2011bayesian} involves simulating outcomes and introducing an artificial imbalance between treated and control groups by removing a subset of treated subjects. The dataset comprises 747 subjects (139 treated, 608 control), each described by 25 covariates representing various characteristics of the child and mother. 

%Consistent with \cite{shi2019adapting}, we utilize 1000 realizations of this experiment.

\paragraph{Results.}
We select the following six features as covariates: \textit{Birth Weight, Head Circumference, Weeks Born Preterm, Birth Order, Neonatal Health Index, age of mom}. To take account of unmeasured confounding, we choose  $q =1$ for proxy in ACEE implementation. We report the absolute errors of estimated ATEs over 50 trials in Figure \ref{fig:ihdp_confounding}, from which we observe that ACEE helps improve ATE estimation accuracy via transfer learning and the results are stable.

\begin{figure}[H]
    \centering
    \includegraphics[width=0.6\textwidth]{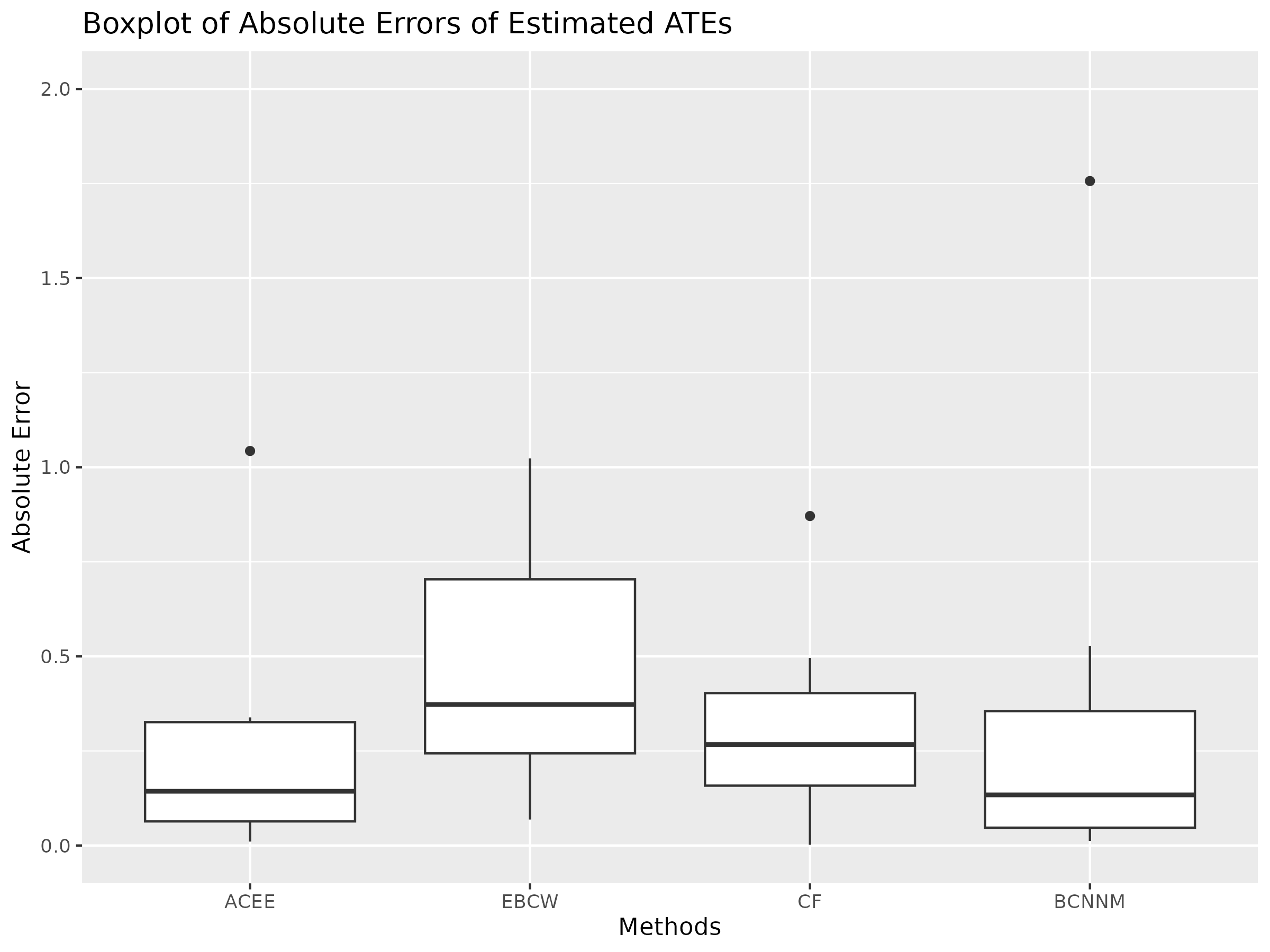}
    \caption{Box-plot of absolute errors of estimated ATEs over 50 trials using IHDP dataset.}
    \label{fig:ihdp_confounding}
\end{figure}

\section{Discussion}
This article presents an innovative method to improve the accuracy of causal effect estimation through knowledge transfer. By using synthetic data generated by a generative model trained on auxiliary data, our approach leverages additional information to enhance estimation precision. Our method assumes a known causal order, highlighting a future research direction: developing a causal discovery technique that employs knowledge transfer to determine causal order. Moreover, incorporating synthetic data to infer relationships and construct valid confidence intervals for estimated effects remains an uncharted and challenging endeavor.

\acks{This work was supported in part by NSF grant DMS-1952539 and NIH grants
R01AG069895, R01AG065636, R01AG074858, U01AG073079.}

\appendix

\section{Conditional Diffusion Models}
\label{sec:cdm}

Conditional diffusion models generate synthetic data by learning to reverse a stochastic diffusion process conditioned on observed data. Here, we briefly introduce the two diffusion processes. 

\paragraph{Forward Diffusion Process.} The forward diffusion process gradually adds noise to the variable $X_j$, eventually transforming it into white noise. This process is formalized as:
	\begin{align*}
		\mathrm{d} X_j(\tau)=-\frac{1}{2} X_j(\tau) \mathrm{d} \tau+\mathrm{d} W(\tau) \quad \text { with } \quad X_j(0) \sim \mathbb{P}(\cdot \mid \bX_{k^+},\bS)
	\end{align*}
where $W(\tau)$ is a Wiener process. The initial conditional distribution is denoted as $\mathbb{P}(\cdot \mid \bX_{k^+},\bS)$, and the marginal conditional distribution at a given time $\tau$ is denoted as $\mathbb{P}_{\tau}(\cdot \mid \bX_{k^+},\bS)$. Practically, the forward process terminates at a sufficiently large time $\bar{\tau}$.

\paragraph{Reverse Diffusion (Denoising) Process.} The reverse diffusion process, defined by $\tilde{X}_j(\tau) = X_j(\bar{\tau}-\tau)$, learns to reconstruct (denoise) the original distribution $\mathbb{P}(\cdot \mid \bX_{k^+},\bS)$ by reversing the forward diffusion:
	\begin{align*}
		\mathrm{d} \tilde{X}_j(\tau) =\left[\frac{1}{2}\tilde{X}_j(\tau) +\nabla \log p_{\bar{\tau}-\tau}\left(\tilde{X}_j(\tau) \mid \bX_{k^+},\bS \right)\right] \mathrm{d} t+\mathrm{d} \bar{W}(\tau) \quad \text { with } \quad \tilde{X}_j(0) \sim P_{\bar{\tau}}(\cdot \mid \bX_{k^+},\bS)
	\end{align*}
where $\bar{W}(\tau)$ is a time-reversed Wiener process, and $\nabla \log p_{\bar{\tau}-\tau}\left(\tilde{X}_j(\tau) \mid \bX_{k^+},\bS \right)$ is the conditional score function.

\paragraph{Sampling Step.} In practice, the conditional score function is unknown and is approximated by an estimator $\widehat{\theta}(X_j, \widehat{h}(\bX_{k^+},\bS), \tau)$. The synthetic data sampling process can thus be written as:
	\begin{align*}
		\mathrm{d} \tilde{X}_j(\tau) =\left[\frac{1}{2}\tilde{X}_j(\tau) +\widehat{\theta}(X_j, \widehat{h}(\bX_{k^+},\bS), \tau) \right] \mathrm{d} t+\mathrm{d} \bar{W}(\tau) \quad \text { with } \quad \tilde{X}_j(0) \sim P_{\bar{\tau}}(\cdot \mid \bX_{k^+},\bS)
	\end{align*}
	
	\paragraph{Estimation of conditional score function.} To estimate $\widehat{\theta}(X_j, \widehat{h}(\bX_{k^+},\bS), \tau)$, we minimize the following loss,
	\begin{align}
		\label{eq:loss_score_match}
		\int_{\underline{\tau}}^{\bar{\tau}} \mathrm{E}_{(X_j(0),\bX_{k^+},\bS)} [\mathrm{E}_{X_j(\tau) \mid X_j(0)}\left\|\nabla \log p_{X_j(\tau) \mid X_j(0)}(X_j(\tau) \mid X_j(0))-\theta(X_j(\tau),h(\bX_{k^+},\bS),\tau)\right\|^2] \mathrm{~d} \tau
	\end{align}

In practice, \eqref{eq:loss_score_match} is implemented using i.i.d. data points. For source data set, we denote the loss function 
\begin{align}
	\label{eq:loss_score_match_implement_source}
	\ell_s (X_j^s,\bX^s_{k^+},\bS^s;\theta,h) = \int_{\underline{\tau}}^{\bar{\tau}} \mathrm{E}_{X_j^s(\tau) \mid X_j^s(0)}\left\|\nabla \log p_{X_j^s(\tau) \mid X_j^s(0)}(X_j^s(\tau) \mid X_j^s(0))-\theta(X^s_j(\tau),h(\bX^s_{k^+},\bS^s),\tau)\right\|^2 \mathrm{~d} \tau
\end{align}
And for the target data set, we denote the loss function
\begin{align}
	\label{eq:loss_score_match_implement_target}
	\ell (X_j,\bX_{k^+},\bS;\theta,h) = \int_{\underline{\tau}}^{\bar{\tau}} \mathrm{E}_{X_j(\tau) \mid X_j(0)}\left\|\nabla \log p_{X_j(\tau) \mid X_j(0)}(X_j(\tau) \mid X_j(0))-\theta(X_j(\tau),h(\bX_{k^+},\bS),\tau)\right\|^2 \mathrm{~d} \tau
\end{align}

\paragraph{ReLU network architecture.} Consider the following class of ReLU neural networks, denoted by $\mathcal{F}:$
\begin{equation*}
	\begin{aligned}
		\mathcal{F}\left(M_t, W, \kappa, L, K\right):=\{ & \theta(z, h(\mathbf{x},\mathbf{s}), t)=\left(A_L \sigma(\cdot)+\mathbf{b}_L\right) \circ \cdots \circ\left(A_1\left[z, h(\mathbf{x},\mathbf{s})^{\top}, t\right]^{\top}+\mathbf{b}_1\right): \\
		& A_i \in \mathbb{R}^{d_i \times d_{i+1}}, \mathbf{b}_i \in \mathbb{R}^{d_{i+1}}, \max d_i \leq W, \sup _{z,\mathbf{x}, \mathbf{s}}\|\theta(z, h(\mathbf{x},\mathbf{s}), t)\|_{\infty} \leq M_t, \\
		& \left.\max _i\left\|A_i\right\|_{\infty} \vee\left\|\mathbf{b}_i\right\|_{\infty} \leq \kappa, \sum_{i=1}^L\left(\left\|A_i\right\|_0+\left\|\mathbf{b}_i\right\|_0\right) \leq K\right\},
	\end{aligned}
\end{equation*}
where $\sigma(\cdot)$ is the ReLU activation, $\|\cdot\|_{\infty}$ is the maximal magnitude of entries and $\|\cdot\|_0$ is the number of nonzero entries.  The complexity of this network class is controlled by the number of layers, the number of neurons of each layer, the magnitude of the network parameters, the number of nonzero parameters, and the magnitude of the neural network output.

\section{Justification of Structural Equation Models in \eqref{eq:sem}} 
\label{sec:canonical_DAG}
Under the presence of unobserved confounders, there may not exist a DAG over the observed variables 
that preserves all the conditional and marginal independencies in the marginal distribution of the full DAG 
over the observed and unobserved variables. 
To address this limitation, we assume that the joint distribution over the complete set of variables 
\( (\boldsymbol{X},\bar{\boldsymbol{H}}) \) is Markov to a DAG \( \bar{\mathcal{G}} \), 
where \( \bar{\boldsymbol{H}} \) represents a vector of \( \bar{K} \in \mathbb{N} \) unobserved variables.

Without further assumptions, estimating the DAG \( \bar{\mathcal{G}} \) from observational data alone 
is not possible, as \( \bar{\boldsymbol{H}} \) is unobserved. 
We begin by introducing terminologies related to the canonical exogenous DAG as introduced in \cite{agrawal2023decamfounder}.

\begin{definition}
	\( X_i \) has a completely hidden path to \( X_j \) in \( \bar{\mathcal{G}} \) if there exists a path 
	\[
	X_i \rightarrow \bar{H}_{k_1} \rightarrow \cdots \rightarrow \bar{H}_{k_m} \rightarrow X_j
	\]
	in \( \bar{\mathcal{G}} \), where \( 1 \leq i, j \leq p \), and \( 1 \leq k_1, \ldots, k_m \leq \bar{K} \).
\end{definition}

\begin{definition}
	Let \( \bX_C = \{X_c : c \in C\} \), where \( C \subseteq \{1,\ldots,p\} \). 
	\( \bX_C \) shares a hidden common cause in \( \bar{\mathcal{G}} \) if there exists an unobserved node \( \bar{H}_j \), 
	\( 1 \leq j \leq \bar{K} \), such that for each \( c \in C \), there is a directed path from \( \bar{H}_j \) to \( X_c \) 
	with all intermediate vertices in \( \{\bar{H}_k\}_{k=1}^{\bar{K}} \). We denote the collection of all maximal sets \( C \) such that \( \bX_C \) shares a hidden common cause 
	in \( \bar{\mathcal{G}} \) by \( \{C_1, \ldots, C_K\} \) for \( K \in \mathbb{N} \).
\end{definition}

Next, we introduce the canonical exogenous DAG, which is equivalent to the canonical DAG associated 
with the latent projection of \( \bar{\mathcal{G}} \) onto the observed nodes \citep{evans2016graphs}.

\begin{definition}[Canonical Exogenous DAG]
	The canonical exogenous DAG of \( \bar{\mathcal{G}} \) is a DAG \( \mathcal{G}' \) with:
	\begin{enumerate}
		\item an edge \( X_i \rightarrow X_j \) if \( X_i \) has a completely hidden path to \( X_j \) in \( \bar{\mathcal{G}} \), \( 1 \leq i, j \leq p \);
		\item a node \( H_k \) and edges \( H_k \rightarrow X_c \) for all \( X_c \in \bX_{C_k} \), \( 1 \leq k \leq K \).
	\end{enumerate}
\end{definition}

We denote \( \{H_i\}_{i=1}^{K} \) as \( \bH \), and thus we can transform \( \bar{\mathcal{G}} \) 
into another DAG \( \mathcal{G}' \) that satisfies the following three properties:
\begin{enumerate}
	\item all unobserved variables are sources in \( \mathcal{G}' \),
	\item the partial ordering of observed variables in \( \mathcal{G}' \) matches the partial ordering 
	in \( \bar{\mathcal{G}} \),
	\item there exists a joint distribution \( \mathbb{Q}(\bX,\bH) \), 
	Markov with respect to \( \mathcal{G}' \), such that the marginal distribution 
	\( \mathbb{Q}(\bX) \) is equal to \( \mathbb{P}(\bX) \) almost everywhere.
\end{enumerate}

\begin{figure}[H]
	\centering
	\begin{subfigure}
		\centering
		\includegraphics[width=0.49\textwidth]{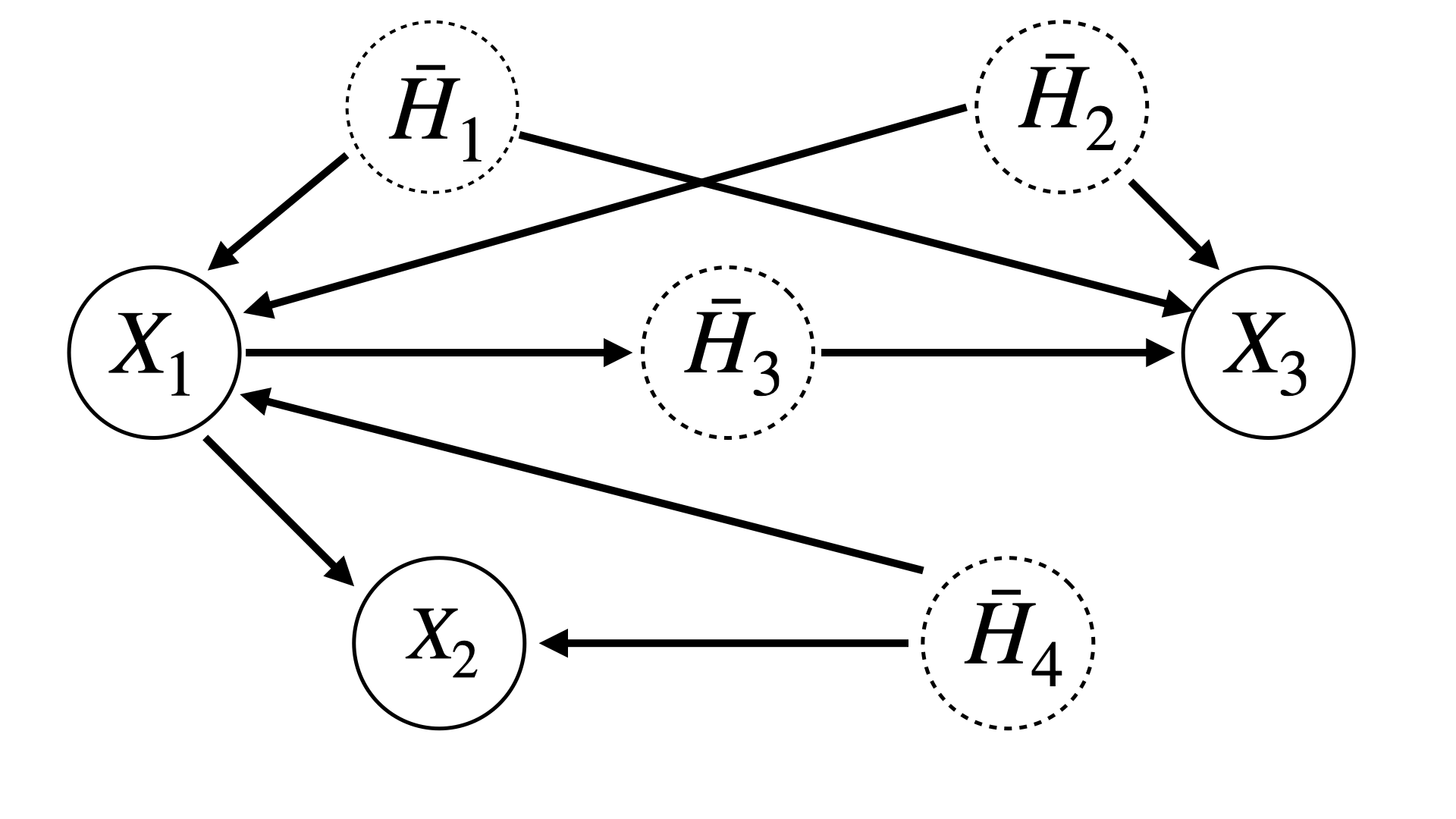}
	\end{subfigure}
	\hfill
	\begin{subfigure}
		\centering
		\includegraphics[width=0.49\textwidth]{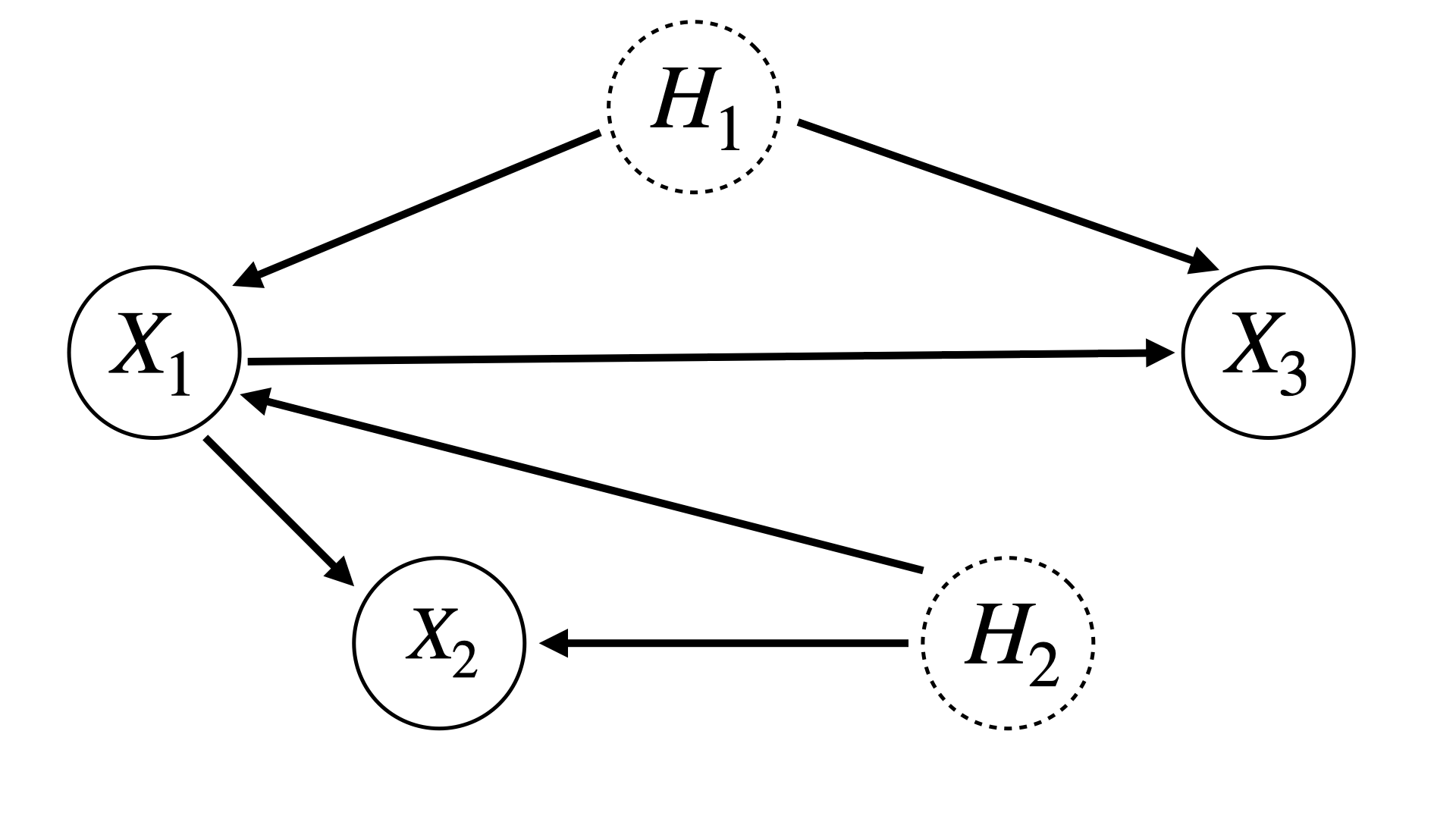}
	\end{subfigure}
	\caption{An illustration of transforming a DAG to its canonical exogenous DAG. Left panel: original DAG; Right panel: corresponding canonical exogenous DAG.}
	\label{fig:canonical_dag}
\end{figure}

We treat \( \mathbb{Q}(\bX,\bH) \) as the canonical representation of the joint distribution over observed variables and unmeasured confounders, since we cannot distinguish between the distributions \( \mathbb{P}(\bX,\bar{\bH}) \) and \( \mathbb{Q}(\bX,\bH) \) from observational data. The discussion of the canonical representation justifies our definition of structural equation models in \eqref{eq:sem}.

\section{Discussion on Causal Effect Definition in \eqref{eq:DAG_ate_conditionalH}}
\label{sec:ce_def}
In this subsection, we provide a discussion on definition of causal effect in \eqref{eq:DAG_ate_conditionalH} and relates it to the concept of total causal effect in the causal literature\citep{pearl2009causality}. Additionally, we introduce indirect causal effect in the causal literature\citep{pearl2009causality}. 

\begin{assumption}[Strong Ignorability in DAG]
\label{ass:strong_ignorability_DAG}
    Given a causal order $\pi$ and joint distribution of $\bX$, $\mathbb{P}_{\bX}$ consistent with the true DAG $\mathcal{G} = (\mathcal{V},\mathcal{E})$, suppose that for all $k,j$ with $X_k$ preceding $X_j$ in $\pi$,
    \begin{equation}
        \begin{split}
            (k \rightarrow j) \not \in \mathcal{E} \Leftrightarrow X_{k} \perp_{\mathbb{P}_{\bX}} X_{j} | \bX_{\pa(j)/k}
        \end{split}
    \end{equation}
    where $\bX_{\pa(j)/k}$ represents the parent set of $X_j$ excluding $X_k$.
\end{assumption} 
The condition $(k \rightarrow j) \not \in \mathcal{E} \Rightarrow X_{k} \perp_{\mathbb{P}_{\bX}} X_{j} | \bX_{\pa(j)/k}$ is called the ``local'' Markov condition and often taken as the definition of Bayesian networks\citep{howard2005influence}, while $(k \rightarrow j) \not \in \mathcal{E} \Leftarrow X_{k} \perp_{\mathbb{P}_{\bX}} X_{j} | \bX_{\pa(j)/k}$ ensures that $\bX_{\pa(j)}$ is the minimal set of predecessors of $X_{j}$ that renders $X_{j}$ independent of all its other predecessors. We say that $\mathbb{P}(\bX)$ is Markov with respect to $\mathcal{G}$ if Assumption \ref{ass:strong_ignorability_DAG} holds and establishes the identification of causal identification under this assumption.

\begin{definition}[blocking]
    A set $\mathbb{S}$ of nodes is said to block a path if either
    \begin{enumerate}
        \item The path contains at least one arrow-emitting node that is in $\mathbb{S}$, or
        \item The path contains at least one collider that is outside $\mathbb{S}$ and has no descendant in $\mathbb{S}$.
    \end{enumerate}
\end{definition}

To illustrate, the path $X_1 \rightarrow X_2 \rightarrow X_3 \rightarrow X_4$ is blocked by $\mathbb{S} = \{X_2\}$ and $\mathbb{S}=\{X_3\}$ since each variable emits an arrow along the path. Thus we can infer the conditional independencies that $X_1 \perp X_4 | X_2$ and $X_1 \perp X_4 | X_3$ in any probability function that can be generated from this DAG model. Likewise, the path $X_1 \rightarrow X_2 \leftarrow X_3 \rightarrow X_4$ is blocked by the null set $\{ \emptyset \}$ but is not blocked by $\mathbb{S} = \{X_2\}$ since $X_2$ is a collider. Thus the marginal dependence $X_1 \perp X_4$ holds in the distribution while $X_1 \perp X_4 | X_2$ may not. The handling of colliders reflects Berkson's paradox\citep{berkson1946limitations}, whereby observations on the common consequence of two independent causes render those causes dependent.

\begin{definition}[Admissible Sets]
\label{def:admissible}
    A set $\mathbb{S}$ is admissible with respect to $(X_{k},X_j)$ where $k \in \nd(j)$ if two conditions hold:
    \begin{enumerate}
        \item No element of $\mathbb{S}$ is a descendant of $X_{k}$.
        \item The elements of $\mathbb{S}$ block all back-door paths from $X_{k}$ to $X_{j}$
    \end{enumerate}
\end{definition}

The back-door paths refer to the connections in the diagram between $X_{k}$ and $X_{j}$ that contains an arrow into $X_{k}$, which could carry spurious associations from $X_{k}$ to $X_{j}$. Blocking the back-door paths ensures that the measured association between $X_{k}$ and $X_{j}$ is purely causal, which is the target quantity carried by the paths directed along the arrows.

 Denote the structural equation model as $\mathcal{M}$ and we define the interventions and counterfactuals through a mathematical operator called $\operatorname{do}(X_k=x)$, which simulates physical interventions by deleting certain functions from the model, replacing them with a constant $X_k=x$, while keeping the rest of the model unchanged. The resulting model is denoted $\mathcal{M}_{x}$ and then the post-intervention distribution resulting from the action $\operatorname{do}(X_k =x)$ is given by the equation 
 \begin{equation*}
     P_{\mathcal{M}}(X_j|\operatorname{do}(X_k=x)) = P_{\mathcal{M}_x}(X_j)
 \end{equation*}

The back-door criterion provides a simple solution to identification problems and is summarized in the following theorem.

\begin{theorem}
\label{thm:identification_admissible_set}
    Under Assumption \ref{ass:strong_ignorability_DAG}, the causal effect of $X_{k}$ on $X_{j}$ is identified by
    \begin{equation*}
        P(X_{j} \leq t | \operatorname{do}(X_{k} = x)) = \E_{\mathbb{S}} [P(X_{j} \leq t|X_{k}=x, \mathbb{S})]
    \end{equation*}
    where $\mathbb{S}$ is an admissible set with respect to $(X_{k},X_{j})$.
\end{theorem}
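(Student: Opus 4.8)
The plan is to derive the back-door adjustment formula from the truncated factorization induced by the $\operatorname{do}$-operator together with the $d$-separation implications of Assumption \ref{ass:strong_ignorability_DAG}. Fix an admissible set $\mathbb{S}$ for $(X_k,X_j)$. Under Assumption \ref{ass:strong_ignorability_DAG}, $\mathbb{P}(\bX)$ is Markov with respect to $\mathcal{G}$ (only the ``local'' direction of the assumption is needed), so it factorizes as $\mathbb{P}(\bX) = \prod_{i \in \mathcal{V}} P(X_i \mid \bX_{\pa(i)})$; since $\mathcal{M}_x$ replaces only the mechanism for $X_k$ by the constant $x$ and leaves every other structural equation intact, the post-intervention law obeys the truncated factorization in which the factor $P(X_k \mid \bX_{\pa(k)})$ is deleted and $X_k$ is set to $x$ everywhere else. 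Applying the law of total probability inside $\mathcal{M}_x$,
\[
P(X_j \leq t \mid \operatorname{do}(X_k = x)) = \sum_{s} P\bigl(X_j \leq t \mid \operatorname{do}(X_k = x),\, \mathbb{S}=s\bigr)\, P\bigl(\mathbb{S}=s \mid \operatorname{do}(X_k = x)\bigr),
\]
so it suffices to establish (i) $P(\mathbb{S}=s \mid \operatorname{do}(X_k = x)) = P(\mathbb{S}=s)$ and (ii) $P(X_j \leq t \mid \operatorname{do}(X_k = x),\, \mathbb{S}=s) = P(X_j \leq t \mid X_k = x,\, \mathbb{S}=s)$; substituting (i)--(ii) back rewrites the sum over $s$ as $\E_{\mathbb{S}}[P(X_j \leq t \mid X_k = x, \mathbb{S})]$, which is the claimed identity.

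For (i), I would invoke the first admissibility condition, namely that $\mathbb{S}$ contains no descendant of $X_k$. Marginalizing the truncated factorization over every variable that is not an ancestor (in $\mathcal{G}$) of some element of $\mathbb{S}$ collapses the product to the subproduct indexed by ancestors of $\mathbb{S}$; by the no-descendant condition $X_k$ is not one of those ancestors, so every surviving factor $P(X_i \mid \bX_{\pa(i)})$ is unaffected by the intervention and coincides with the corresponding factor in the observational factorization of $P(\mathbb{S}=s)$. Hence the two marginal laws of $\mathbb{S}$ are equal; this is exactly Rule 3 of the $\operatorname{do}$-calculus applied to non-descendants of the intervened node.

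For (ii) --- which I expect to be the main obstacle --- I would pass to the graph $\mathcal{G}_{\underline{X_k}}$ obtained from $\mathcal{G}$ by deleting all edges emanating from $X_k$, and show that $\mathbb{S}$ $d$-separates $X_k$ from $X_j$ in $\mathcal{G}_{\underline{X_k}}$. In $\mathcal{G}_{\underline{X_k}}$ every path between $X_k$ and $X_j$ must begin with an arrow pointing into $X_k$, so it is (the trace of) a back-door path of $\mathcal{G}$; by the second admissibility condition $\mathbb{S}$ blocks all such paths in $\mathcal{G}$, and since deleting edges can neither create new paths nor enlarge the descendant set of any collider, the same $\mathbb{S}$ still blocks each such path in $\mathcal{G}_{\underline{X_k}}$ --- here the first admissibility condition is precisely what prevents a collider on a back-door path from being ``activated'' by one of its descendants lying in $\mathbb{S}$. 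Given this $d$-separation, the interventional and observational conditionals agree (the action/observation-exchange rule of the $\operatorname{do}$-calculus, whose soundness rests on the Markov property guaranteed by Assumption \ref{ass:strong_ignorability_DAG}), which yields (ii). The delicate bookkeeping throughout is the collider analysis in (ii): one must verify that conditioning on $\mathbb{S}$ does not open a spurious back-door path through a collider that has a descendant in $\mathbb{S}$, and this is exactly the purpose of the ``no descendant of $X_k$'' clause in Definition \ref{def:admissible}. A fully self-contained alternative, which I would keep in reserve, bypasses graphs entirely: prove (ii) directly by summing the truncated factorization over all non-ancestors of $\{X_j\} \cup \mathbb{S}$ and matching the resulting terms against the observational conditional $P(X_j \leq t \mid X_k = x, \mathbb{S})$.
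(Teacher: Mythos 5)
Your proof is correct, but it follows a genuinely different route from the paper's. The paper starts from the adjustment-for-direct-causes formula, writing $P(X_j \leq t \mid \operatorname{do}(X_k=x)) = \E_{\mathbb{T}}[P(X_j \leq t \mid X_k=x,\mathbb{T})]$ with $\mathbb{T} = \bX_{\pa(k)}$, and then swaps $\mathbb{T}$ for the admissible set $\mathbb{S}$ using two $d$-separation facts: $X_j \perp \mathbb{T} \mid X_k,\mathbb{S}$ (proved by extending any path unblocked by $\{X_k,\mathbb{S}\}$ to a back-door path from $X_j$ to $X_k$ unblocked by $\mathbb{S}$) and $X_k \perp \mathbb{S} \mid \mathbb{T}$ (from the no-descendant condition). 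You instead condition on $\mathbb{S}$ directly inside the interventional law obtained from the truncated factorization and verify the two do-calculus premises: invariance $P(\mathbb{S} \mid \operatorname{do}(X_k=x)) = P(\mathbb{S})$ via the no-descendant condition, and the action/observation exchange for $X_j$ given $(X_k,\mathbb{S})$ via $d$-separation in $\mathcal{G}_{\underline{X_k}}$, where blocking transfers from $\mathcal{G}$ because edge deletion creates no new paths and only shrinks collider descendant sets. Both are standard derivations of the back-door adjustment; the paper's is shorter given Pearl's parent-adjustment theorem as a black box but hides the work in the path-extension step, while yours is more modular (given the truncated factorization or Rules 2--3) and cleanly separates which admissibility condition does what. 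One small correction to your commentary: in your step (ii) the ``no descendant of $X_k$'' clause is not what keeps colliders closed --- the second admissibility condition (blocking in $\mathcal{G}$) together with the monotonicity of blocking under edge deletion already suffices there; condition 1 is needed only for step (i) (and, in the paper's version, for $X_k \perp \mathbb{S} \mid \mathbb{T}$). Also, your sums over $s$ should be read as expectations/integrals since the variables here are continuous; this is purely cosmetic.
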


\begin{proof}[Proof of Theorem \ref{thm:identification_admissible_set}]
	Consider a Markovian model $\mathcal{G}'$ in which $\mathbb{T}$ stands for the set of parents of $X_{k}$. Then by adjustment for direct causes (Theorem 3.2.2 in \cite{pearl2009causality}), we have
	\begin{equation*}
		P(X_{j} \leq t | \operatorname{do}(X_{k} = x)) = \E_{\mathbb{T}} [P(X_{j} \leq t|X_{k}=x, \mathbb{T})]
	\end{equation*}
	Since any path from $X_{j}$ to $\mathbb{T}$ in $\mathcal{G}'$ that is unblocked by $\{X_{k},\mathbb{S}\}$ can be extended to a back-door path from $X_{j}$ to $X_{k}$ unblocked by $\mathbb{S}$, we have $X_{j} \perp \mathbb{T}|X_{k},\mathbb{S}$ in $\mathcal{G}'$. Thus
	\begin{equation*}
		\begin{split}
			P(X_{j} \leq t | \operatorname{do}(X_{k} = x)) & = \E_{\mathbb{T}} [\E[P(X_{j} \leq t|X_{k}=x,\mathbb{S}) |X_{k}=x,\mathbb{T}]] \\
			& = \E_{\mathbb{S}} [P(X_{j} \leq t|X_{k}=x, \mathbb{S})]
		\end{split}
	\end{equation*}
	where the second equality follows from the fact that $X_{k} \perp \mathbb{S}|\mathbb{T}$ in $\mathcal{G}'$ since $\mathbb{S}$ consists of nondescendants of $X_{k}$.
\end{proof}

The construction of $\mathcal{G}'$ ensures the fulfillment of Assumption \ref{ass:strong_ignorability_DAG} and we have the following result. 

\begin{lemma}
\label{lem:conditional_equivalence_DAG_confound}
    $(\bH,\bX_{k^-})$ is an admissible set with respect to $X_{k}$ and $X_{j}$ in $\mathcal{G}'$.
\end{lemma}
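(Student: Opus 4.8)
The plan is to show that the pair $(\bH,\bX_{k^-})$ satisfies the two defining conditions of an admissible set with respect to $(X_k,X_j)$ in the canonical exogenous DAG $\mathcal{G}'$, as stated in Definition~\ref{def:admissible}. Recall that in $\mathcal{G}'$ all the latent nodes $\bH$ are sources (they have no parents), and the topological ordering $\pi$ on the observed variables is inherited from $\bar{\mathcal{G}}$. I will use these two structural facts repeatedly.

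\textbf{Step 1: No element of $(\bH,\bX_{k^-})$ is a descendant of $X_k$.} Since every $H_\ell \in \bH$ is a source in $\mathcal{G}'$, it has no ancestors at all, so in particular it cannot be a descendant of $X_k$. For the observed part, $\bX_{k^-}$ consists precisely of the variables preceding $X_k$ in the topological ordering $\pi$; since every directed edge $a\to b$ in $\mathcal{G}'$ respects $\pi$ (i.e.\ $\pi(a)<\pi(b)$), any descendant of $X_k$ must come strictly after $X_k$ in $\pi$, hence lies outside $\bX_{k^-}$. This disposes of condition~(1).

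\textbf{Step 2: $(\bH,\bX_{k^-})$ blocks every back-door path from $X_k$ to $X_j$.} Let $P$ be a back-door path, i.e.\ a path whose first edge points into $X_k$. I consider the node on $P$ adjacent to $X_k$; call it $V$, so the path begins $X_k \leftarrow V \cdots$. The key case distinction is on the nature of $V$ and on where the path first meets the conditioning set. I will argue that one of the blocking criteria of Definition~\emph{blocking} is triggered: either $P$ contains an arrow-emitting (non-collider) node that lies in $(\bH,\bX_{k^-})$, or $P$ contains a collider that is outside $(\bH,\bX_{k^-})$ with no descendant in it. Concretely: trace $P$ away from $X_k$; because $\bX_{k^-}\cup\bH$ contains \emph{all} variables that are not strict descendants of $X_k$ (the observed ones by the ordering argument of Step~1, and all latents since they are sources), the only way a path can avoid passing through a conditioned non-collider is for it to route through descendants of $X_k$ — but a back-door path leaving $X_k$ through an incoming edge cannot immediately be on a descendant, and any later return to the descendant-of-$X_k$ region forces a collider. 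More carefully, one shows: the first non-collider on $P$ (scanning from $X_k$) is either in $\bX_{k^-}$ or in $\bH$ — because its arrows along $P$ emanate from it, so in the structural model it is an ancestor-direction node relative to both its neighbours, and the only nodes excluded from $\bX_{k^-}\cup\bH$ are strict descendants of $X_k$, which cannot be the emitting node of the segment containing the incoming edge at $X_k$ without creating a directed cycle. Hence either such a non-collider lies in the conditioning set and blocks $P$ by criterion~(1), or $P$ contains only colliders before reaching such a node, and the earliest collider $C$ is not in $\bX_{k^-}\cup\bH$ (it would have to be a descendant of $X_k$ to be excluded, but being a collider on a back-door path it is reached via two incoming arrows, and a short argument shows its descendants also avoid the set), blocking $P$ by criterion~(2).

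\textbf{Main obstacle.} The delicate part is Step~2, specifically the bookkeeping needed to show that \emph{every} back-door path is blocked — one must handle colliders whose descendants might accidentally fall into $\bH$ or $\bX_{k^-}$. The clean resolution is to exploit that $\bH$ are sources (so they have no descendants, hence cannot be descendants of a collider) and that $\bX_{k^-}$ is an ancestral set of nondescendants of $X_k$: a collider on a proper back-door path has all of its directed descendants strictly after $X_k$ in $\pi$, so none can lie in $\bX_{k^-}$, and none can be a latent source. I would make this rigorous by a short induction on the length of the path segment from $X_k$ to the first node of $(\bH,\bX_{k^-})$ it meets, using the ordering-respecting property of $\mathcal{G}'$ at each step. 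Once this is in place, Definition~\ref{def:admissible} is verified and the lemma follows; combined with Theorem~\ref{thm:identification_admissible_set} this yields the adjustment identity used to justify \eqref{eq:DAG_ate_conditionalH}.
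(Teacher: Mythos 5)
Your Step 1 coincides with the paper's argument and is fine. The problem is Step 2: your blocking argument hinges on the claim that $\bX_{k^-}\cup\bH$ contains \emph{all} variables that are not strict descendants of $X_k$, and that claim is false. $\bX_{k^-}$ consists only of the observed variables that precede $X_k$ in the chosen topological order $\pi$; an observed variable coming after $X_k$ in $\pi$ need not be a descendant of $X_k$ (it can be ancestrally unrelated to $X_k$), yet it is excluded from the conditioning set. So your assertion that ``the only nodes excluded from $\bX_{k^-}\cup\bH$ are strict descendants of $X_k$'' --- which drives both your non-collider case and your collider case, and which the proposed induction would still rely on --- does not hold, and as written Step 2 does not establish that every back-door path is blocked.

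The fix (and the paper's actual proof) requires no path-tracing at all. Any back-door path from $X_k$ to $X_j$ begins with an edge into $X_k$, i.e.\ it starts $X_k \leftarrow V$ with $V \in \pa_{\mathcal{G}'}(X_k)$. In $\mathcal{G}'$ every parent of $X_k$ is either a latent node (hence in $\bH$, since the latents are sources) or an observed node preceding $X_k$ in $\pi$ (hence in $\bX_{k^-}$), so $\pa_{\mathcal{G}'}(X_k)\subseteq(\bH,\bX_{k^-})$. Moreover $V$ emits the arrow $V\to X_k$ along the path, so $V$ is not a collider; it is an arrow-emitting node lying in the conditioning set, and criterion (1) of the blocking definition is triggered at the first step of every back-door path. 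Your conclusion is correct and easily rescued this way, but the argument you actually give for Step 2 is not valid.
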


\begin{proof}[Proof of Lemma \ref{lem:conditional_equivalence_DAG_confound}]
	First of all, note that no element of $(\bH,\bX_{k^-})$ is a descendant of $X_{k}$. Now we just need to show that $(\bH,\bX_{k^-})$ block all back-door paths from $X_{k}$ to $X_{j}$. This is again obvious since $\pa_{\mathcal{G}'}(X_{k}) \subseteq (\bH,\bX_{k^-})$.
\end{proof}

With Theorem \ref{thm:identification_admissible_set}, Lemma \ref{lem:conditional_equivalence_DAG_confound}, then the total effect from $X_{k}$ to $X_{j}$ can be expressed as follows
\begin{equation}
    \label{eq:TE2}
    \begin{split}
        \tau(k,j;x_1,x_0) & = \E[X_{j} | \operatorname{do}(X_{k}=x_1)] - \E[X_{j}|\operatorname{do}(X_{k}=x_0)] \\
        & = \E[\E[X_{j}|\bH,\bX_{k^-},X_{k} =x_1] - \E[X_{j}|\bH,\bX_{k^-},X_{k} =x_0]]
    \end{split}
\end{equation}

which is exactly the same as our definition in \eqref{eq:DAG_ate_conditionalH}.

Under Assumption \ref{ass:strong_ignorability_DAG}, we define the indirect effect to be the expected difference of $X_{j}$ when fixing $X_{k}$ at its reference level $x_0$ and change $\bX_{j^-/k^+}$ to the value that $\bX_{j^-/k^+}$ would attain under $X_{k} =x_1$
\begin{equation}
	\label{eq:IE}
	\begin{split}
		\tau_{\operatorname{i}}(k,j;x_1,x_0) & = \E_{\bX_{k^-},\bH}[\E_{\bX_{j^-/k^+}}[\E[X_{j}|X_{k}=x_0,\bX_{j^-/k^+},\bX_{k^-}, \bH]|X_{k}=x_1,\bX_{k^-},\bH]] \\ 
		& - \E_{\bX_{k^-},\bH}[\E_{\bX_{j^-/k^+}}[\E[X_{j}|X_{k}=x_0,\bX_{j^-/k^+},\bX_{k^-},\bH]|X_{k}=x_0,\bX_{k^-},\bH]].
	\end{split}
\end{equation}

Comparing \eqref{eq:DE} and \eqref{eq:IE}, one can obtain the following proposition.
\begin{proposition}
	\label{prop:TE_DE_IE}
	Given a causal order $\pi$ and joint distribution $\mathbb{P}_X$ consistent with the true DAG $\mathcal{G} = (\mathcal{V},\mathcal{E})$, under Assumption \ref{ass:strong_ignorability_DAG}, we have
	\begin{equation}
		\begin{split}
			\tau(k,j;x_1,x_0) = \tau_{\operatorname{d}}(k,j;x_1,x_0) - \tau_{\operatorname{i}}(k,j;x_0,x_1) \\
			\tau(k,j;x_1,x_0) = \tau_{\operatorname{i}}(k,j;x_1,x_0) - \tau_{\operatorname{d}}(k,j;x_0,x_1)
		\end{split}
	\end{equation}
\end{proposition}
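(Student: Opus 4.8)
The plan is to reduce both identities to the standard mediation-formula algebra by routing every one-sided effect through a single ``nested mean'' functional. For fixed $(\bX_{k^-},\bH)$ and real values $a,b$, I would introduce
\[
Q(a,b) = \E_{\bX_{j^-/k^+}}\bigl[\,\E[X_j \mid X_k=a,\bX_{j^-/k^+},\bX_{k^-},\bH] \bigm| X_k=b,\bX_{k^-},\bH\,\bigr],
\]
i.e.\ the inner conditional mean of $X_j$ is evaluated with $X_k$ pinned at $a$, while the mediators $\bX_{j^-/k^+}$ are averaged against their conditional law given $\{X_k=b,\bX_{k^-},\bH\}$. Unwinding Definitions \eqref{eq:DE} and \eqref{eq:IE} and tracking which argument plays the ``reference'' role, one gets $\tau_{\operatorname{d}}(k,j;x_1,x_0)=\E_{\bX_{k^-},\bH}[Q(x_1,x_0)-Q(x_0,x_0)]$, $\tau_{\operatorname{d}}(k,j;x_0,x_1)=\E_{\bX_{k^-},\bH}[Q(x_0,x_1)-Q(x_1,x_1)]$, $\tau_{\operatorname{i}}(k,j;x_1,x_0)=\E_{\bX_{k^-},\bH}[Q(x_0,x_1)-Q(x_0,x_0)]$, and $\tau_{\operatorname{i}}(k,j;x_0,x_1)=\E_{\bX_{k^-},\bH}[Q(x_1,x_0)-Q(x_1,x_1)]$.

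Next I would telescope. For the first claimed identity,
\[
\tau_{\operatorname{d}}(k,j;x_1,x_0)-\tau_{\operatorname{i}}(k,j;x_0,x_1)=\E_{\bX_{k^-},\bH}[Q(x_1,x_1)-Q(x_0,x_0)],
\]
the crossed term $Q(x_1,x_0)$ dropping out; an identical cancellation of $Q(x_0,x_1)$ yields $\tau_{\operatorname{i}}(k,j;x_1,x_0)-\tau_{\operatorname{d}}(k,j;x_0,x_1)=\E_{\bX_{k^-},\bH}[Q(x_1,x_1)-Q(x_0,x_0)]$. So both right-hand sides collapse to the same quantity, and it remains only to match it with $\tau(k,j;x_1,x_0)$.

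The last step is the tower property. Because $(\bX_{k^-},X_k,\bX_{j^-/k^+})$ is exactly the set $\bX_{j^-}$ of predecessors of $X_j$ in $\pi$, iterating expectations over $\bX_{j^-/k^+}$ conditionally on $\{X_k=a,\bX_{k^-},\bH\}$ gives $Q(a,a)=\E[X_j \mid X_k=a,\bX_{k^-},\bH]$; taking the outer expectation over $(\bX_{k^-},\bH)$ and invoking \eqref{eq:DAG_ate_conditionalH} then shows $\E_{\bX_{k^-},\bH}[Q(x_1,x_1)-Q(x_0,x_0)]=\tau(k,j;x_1,x_0)$, finishing both identities. Assumption \ref{ass:strong_ignorability_DAG} enters only to legitimize reading the nested conditional-expectation formulas \eqref{eq:DE}--\eqref{eq:IE} as genuine (counterfactual) direct and indirect effects, via the back-door identification of Theorem \ref{thm:identification_admissible_set} and Lemma \ref{lem:conditional_equivalence_DAG_confound}; the decomposition itself is then purely algebraic. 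The only real pitfall is the bookkeeping: for each of $\tau_{\operatorname{d}}$ and $\tau_{\operatorname{i}}$ in both orientations one must pin down correctly which value is held in the inner mean and which value indexes the mediator law, so that precisely the crossed $Q$-terms are the ones that cancel. Once the $Q(a,b)$ notation is fixed this is routine.
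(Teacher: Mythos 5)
Your proposal is correct and follows essentially the same route as the paper's proof: rewrite the total effect via the tower property as the expectation of the nested mean with both arguments matched (your $Q(x_1,x_1)-Q(x_0,x_0)$), and observe that the direct/indirect decompositions differ from it only by a crossed term that cancels. The $Q(a,b)$ bookkeeping is just a cleaner packaging of the identical cancellation the paper carries out by writing out the nested conditional expectations explicitly.
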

\begin{proof}[Proof of Proposition \ref{prop:TE_DE_IE}]
	Notice that 
	\begin{equation*}
		\begin{split}
			\tau_{\operatorname{d}}(k,j;x_1,x_0) & = \E_{\bX_{k^-},\bH}[\E_{\bX_{j^-/k^+}}[\E[X_{j}|X_{k}=x_1,\bX_{j^-/k^+},\bX_{k^-},\bH]|X_{k}=x_0,\bX_{k^-},\bH]] \\ 
        & - \E_{\bX_{k^-},\bH}[\E_{\bX_{j^-/k^+}}[\E[X_{j}|X_{k}=x_0,\bX_{j^-/k^+},\bX_{k^-},\bH]|X_{k}=x_0,\bX_{k^-},\bH]]. \\
			\tau_{\operatorname{i}}(k,j;x_1,x_0) & = \E_{\bX_{k^-},\bH}[\E_{\bX_{j^-/k^+}}[\E[X_{j}|X_{k}=x_0,\bX_{j^-/k^+},\bX_{k^-}, \bH]|X_{k}=x_1,\bX_{k^-},\bH]] \\ 
		& - \E_{\bX_{k^-},\bH}[\E_{\bX_{j^-/k^+}}[\E[X_{j}|X_{k}=x_0,\bX_{j^-/k^+},\bX_{k^-},\bH]|X_{k}=x_0,\bX_{k^-},\bH]].
		\end{split}
	\end{equation*}
	Then we can obtain
	\begin{equation*}
		\begin{split}
			\tau(k,j;x_1,x_0) 
			& = \E[\E[X_{j}|\bH,\bX_{k^-},X_{k} =x_1] - \E[X_{j}|\bH,\bX_{k^-},X_{k} =x_0]] \\
			& = \E_{\bX_{k^-},\bH}[\E_{\bX_{j^-/k^+}}[\E[X_{j}|X_{k}=x_1,\bX_{j^-/k^+},\bX_{k^-},\bH]|X_{k}=x_1,\bX_{k^-},\bH]] \\
			& - \E_{\bX_{k^-},\bH}[\E_{\bX_{j^-/k^+}}[\E[X_{j}|X_{k}=x_0,\bX_{j^-/k^+},\bX_{k^-},\bH]|X_{k}=x_0,\bX_{k^-},\bH]] \\
			& = \tau_{\operatorname{d}}(k,j;x_1,x_0) - \tau_{\operatorname{i}}(k,j;x_0,x_1)
		\end{split}
	\end{equation*}
	One can prove $\tau(k,j;x_1,x_0) = \tau_{\operatorname{i}}(k,j;x_1,x_0) - \tau_{\operatorname{d}}(k,j;x_0,x_1)$ in the same fashion.
\end{proof}

\section{Technical Proofs}
\label{sec:technical_proofs}
\begin{proof}[Proof of Lemma \ref{lem:proxy}]
	Since \(S_Y = E[Y\mid\bH]\) depends only on \(\bH\), the tower property yields 
	\[
	E[S_Y\mid\bX,\bH,D]=S_Y \quad \text{and} \quad E[Y\mid\bX,\bH,D] = E[Y-S_Y\mid\bX,\bH,D] + S_Y.
	\]
	Thus,
	\begin{eqnarray*}
		\tau & = & E\Bigl[E[Y\mid\bX,\bH,D=1]-E[Y\mid\bX,\bH,D=0]\Bigr] \\
		&= & E\Bigl[E[Y-S_Y\mid\bX,\bH,D=1]-E[Y-S_Y\mid\bX,\bH,D=0]\Bigr].
	\end{eqnarray*}
	Since \(\bS_{-Y}\) is a function of \(\bH\), we have 
	\[
	E[Y-S_Y\mid\bX,\bH,D] = E[Y-S_Y\mid\bX,\bH,\bS_{-Y},D].
	\]
	By Assumption \ref{assu:proxy},
	\[
	E[Y-S_Y\mid\bX,\bH,\bS_{-Y},D] = E[Y-S_Y\mid\bX,\bS_{-Y},D].
	\]
	Noting that \(\bS_Y\) is included in \(\bS = (\bS_{-Y}^\top, \bS_Y)^\top\), it follows that
	\[
	E[Y-S_Y\mid\bX,\bS_{-Y},D] + \bS_Y = E[Y\mid\bX,\bS,D].
	\]
	Therefore,
	\[
	\tau = E\Bigl[E[Y\mid\bX,\bS,D=1]-E[Y\mid\bX,\bS,D=0]\Bigr],
	\]
	which completes the proof.
\end{proof}

\begin{proof}[Proof of Example \ref{example:proxy}]
	Firstly, we prove that there exists an function $\widetilde{f}_j$ such that
	\begin{equation}
		\label{eq:covariate_substitute}
		X_j = \epsilon_j + g_j(\bH) + \widetilde{f}_j(\epsilon_1,g_1(\bH),\ldots,\epsilon_{j-1},g_{j-1}(\bH))
	\end{equation}
    The proof follows by inducing on the number of covariates. For $p=1$, $X_1 = g_1(\bH)+\epsilon_1$, and Eq. \eqref{eq:covariate_substitute} trivially holds. For $p = 2$,
    \begin{equation*}
    	\begin{split}
    		X_2 & = g_2(\bH) + \epsilon_2 + f_2(X_1) \\
    		& = g_2(\bH) + \epsilon_2 + f_2(g_1(\bH)+\epsilon_1)
    	\end{split}
    \end{equation*}
    The claim holds by setting $\widetilde{f}_2(\epsilon_1,g_1(\bH)) = f_2(g_1(\bH)+\epsilon_1)$. Suppose that Eq. \eqref{eq:covariate_substitute} holds for $p-1$ covariates. Then we just need to prove that it holds for $p$ covariates. For this, note that
    \begin{equation*}
    	\begin{split}
    		X_p & = g_p(\bH) + \epsilon_p + f_p(X_1,\ldots,X_{p-1})\\
    		&  = g_p(\bH) + \epsilon_p + f_p(g_1(\bH)+\epsilon_1,\ldots,\epsilon_{p-1} + g_{p-1}(\bH) + \widetilde{f}_{p-1}(\epsilon_1,g_1(\bH),\ldots,\epsilon_{p-2},g_{p-2}(\bH)))
    	\end{split}
    \end{equation*}
    where the last equality follows from the inductive hypothesis. Thus by setting
    \begin{equation*}
    	\widetilde{f}_p (\epsilon_1,g_1(\bH),\ldots,\epsilon_{p-1},g_{p-1}(\bH)) = f_p(g_1(\bH)+\epsilon_1,\ldots,\epsilon_{p-1} + g_{p-1}(\bH) + \widetilde{f}_{p-1}(\epsilon_1,g_1(\bH),\ldots,\epsilon_{p-2},g_{p-2}(\bH)))
    \end{equation*}
    the claim follows. Similarly, we can prove that there exists $\widetilde{f}_{p+1}$ that
    \begin{equation*}
    	\begin{split}
    		P(D=1) & = \frac{1}{1+ \exp(\widetilde{f}_{p+1}(\epsilon_1,g_1(\bH),\ldots,\epsilon_{p},g_{p}(\bH)))}\frac{1}{1+\exp( g_{p+1}(\bH))}
    	\end{split}
    \end{equation*}
	Thus we have that $\E[X_j|\bH] = \E[X_j \mid g_1(\bH),\ldots,g_j(\bH)]$ for $j=1,\ldots,p$.
    Then we prove that there exists an $r_j$ such that $g_j(\bH) = r_j(S_1,\ldots,S_j)$. We still prove this claim by inducting on the number of covariates $p$. For $p=1$, $g_1(\bH)=S_1$. For $p=2$, 
    \begin{equation*}
    	\begin{split}
    		S_2 & = \E[X_2 \mid \bH]  \\
    		& = \E[X_2 \mid g_1(\bH),g_2(\bH)] \\
    		& = g_2(\bH) + \E[f_2(X_1) \mid g_1(\bH)] \\
    		& = g_2(\bH) + \E[f_2(X_1) \mid S_1]
    	\end{split}
    \end{equation*}
    The claim holds by setting $r_2(S_1,S_2) = S_2 - \E[f_2(X_1) \mid S_1]$. Suppose the claim holds for all with $p-1$ covariates. Then it suffices to show that there exists $r_p$ such that $g_p(\bH) = r_p(S_1,\ldots,S_p)$. By the inductive hypothesis, there exists $\{r_j\}_{j=1}^p$ such that
    \begin{equation*}
    	g_{j}(\bH) = r_j(S_1,\ldots,S_{j}), \forall j = 1,\ldots,p-1
    \end{equation*}
    Then we have
    \begin{equation*}
    	\begin{split}
    		S_p & = \E[X_p \mid \bH] \\
    		& = \E[X_p \mid g_1(\bH),\ldots,g_p(\bH)] \\
    		& = g_p(\bH) + \E[f_p(X_1,\ldots,X_{p-1}) \mid g_1(\bH),\ldots,g_{p-1}(\bH)]\\
    		& = g_p(\bH) + \E[f_p(X_1,\ldots,X_{p-1}) \mid \{r_j(S_1,\ldots,S_j)\}_{j=1}^{p-1}]
    	\end{split}
    \end{equation*}
    Then the claim follows by setting $r_p(S_1,\ldots,S_p) = S_p - \E[f_p(X_1,\ldots,X_{p-1}) \mid \{r_j(S_1,\ldots,S_j)\}_{j=1}^{p-1}]$. Similarly,
    \begin{equation*}
    	\begin{split}
    		S_{p+1} & = \E[D|\bH] \\
    		& = \E[D \mid g_1(\bH),\ldots,g_{p+1}(\bH)] \\
    		& = \frac{1}{1+\exp(g_{p+1}(\bH))} \E[ \frac{1}{1+\exp(f_{p+1}(X_1,\ldots,X_p))}\mid g_1(\bH),\ldots,g_p(\bH)]\\
    		& = \frac{1}{1+\exp(g_{p+1}(\bH))} \E[ \frac{1}{1+\exp(f_{p+1}(X_1,\ldots,X_p))}\mid \{r_j(S_1,\ldots,S_j)\}_{j=1}^{p}]
    	\end{split}
    \end{equation*}
    Thus we have $g_{p+1}(\bH) = r_{p+1}(S_1,\ldots,S_{p+1})$ by setting
    \begin{equation*}
    	r_{p+1}(S_1,\ldots,S_{p+1}) = \log(\frac{\E[ \frac{1}{1+\exp(f_{p+1}(X_1,\ldots,X_p))}\mid \{r_j(S_1,\ldots,S_j)\}_{j=1}^{p}]}{S_{p+1}}-1)
    \end{equation*} 
    Finally, we have 
    \begin{equation*}
    	\begin{split}
    		S_{p+2} & = \E[Y|\bH] \\
    		& = g(\bH)+ \E[f(X_1,\ldots,X_p,D)|g_1(\bH),\ldots,g_{p+1}(\bH)] \\
    			& = g(\bH)+ \E[f(X_1,\ldots,X_p,D)| \{r_j(S_1,\ldots,S_j)\}_{j=1}^{p+1}]
    	\end{split}
    \end{equation*}
    Thus $g(\bH) = r(S_1,\ldots,S_{p+2})$ by setting 
    \begin{equation*}
    	r(S_1,\ldots,S_{p+2}) = S_{p+2} - \E[f(X_1,\ldots,X_p,D)| \{r_j(S_1,\ldots,S_j)\}_{j=1}^{p+1}]
    \end{equation*}
    This leads to 
    \begin{equation*}
    	\begin{split}
    		Y - S_{p+2} = f(X_1,\ldots,X_p,D) - \E[f(X_1,\ldots,X_p,D)| \{r_j(S_1,\ldots,S_j)\}_{j=1}^{p+1}] + \varepsilon.
    	\end{split}
    \end{equation*}
    This concludes that  $Y - \E[Y \mid \bH]$ is conditionally independent of $\bH$ given $(\bX,\E[\bX \mid \bH], \E[D \mid \bH],D)$, which means that Assumption \ref{assu:proxy} holds.
\end{proof}

\begin{proof}[Proof of Lemma \ref{lem:proxy_DAG}]
	The proof following the same proof of Lemma \ref{lem:proxy}
\end{proof}

\begin{proof}[Proof of Proposition \ref{prop:de_edge}]
	Assume $(k \rightarrow j) \not \in \mathcal{E}$. By Lemma \ref{lem:ci_equivalence} and set $\mathcal{M} = j^-/k \cup \bH$,
	we have 
	\begin{equation*}
		\begin{split}
			\E[X_{j}|X_{k}=x_1,\bX_{j^-/k^+},\bX_{k^-},\bH] = \E[X_{j}|X_{k}=x_0,\bX_{j^-/k^+},\bX_{k^-},\bH]
		\end{split}
	\end{equation*}
	for all $x_0, x_1 \in \mathbb{R}$. Thus $\tau_{\operatorname{d}}(k,j;x_1,x_0) = 0$ for all $x_0, x_1 \in \mathbb{R}$.   
\end{proof}

\begin{proof}[Proof of Theorem \ref{thm:acee_rate}]
	Notice that
	\begin{equation*}
		\begin{split}
			& \E |\frac{1}{n}\sum_{i=1}^n \widehat{\eta}_M(\mathbf{X}_{k^-,i},\widehat{
			\mathbf{S}}_{\cdot,i}) - \E[\eta_0(\mathbf{X}_{k^-,i},
			\mathbf{S}_{\cdot,i})] | \\
			\leq & \E|\frac{1}{n}\sum_{i=1}^n \widehat{\eta}_M(\mathbf{X}_{k^-,i},\widehat{
				\mathbf{S}}_{\cdot,i})  - \frac{1}{n}\sum_{i=1}^n \widehat{\eta}(\mathbf{X}_{k^-,i},\widehat{
				\mathbf{S}}_{\cdot,i})|  \\
				& + \E|\frac{1}{n}\sum_{i=1}^n \widehat{\eta}(\mathbf{X}_{k^-,i},\widehat{
					\mathbf{S}}_{\cdot,i}) - \E_n[\widehat{\eta}(\mathbf{X}_{k^-,1},\mathbf{S}_{\cdot,1})] | \\
					& + \E|\E_n[\widehat{\eta}(\mathbf{X}_{k^-,1},\mathbf{S}_{\cdot,1})]  - \eta_0 (\bX,\bS)| 
		\end{split}
	\end{equation*}
	
	By proof of Proposition 5.4 in \cite{fu2024unveil}, $\widehat{\mathbb{P}}(\bX_{k^-},\bS)$ has subgaussian tails for any fixed $(\bX_{k^-},\bS)$ uniformly, so that for the first term, 
	\begin{align*}
		\E|\frac{1}{n}\sum_{i=1}^n \widehat{\eta}_M(\mathbf{X}_{k^-,i},\widehat{
			\mathbf{S}}_{\cdot,i})  - \frac{1}{n}\sum_{i=1}^n \widehat{\eta}(\mathbf{X}_{k^-,i},\widehat{
			\mathbf{S}}_{\cdot,i})|  = O(\frac{1}{\sqrt{M}})
	\end{align*}
	For the second term, by Assumption \ref{assu:proxy_estimation_DAG}, we have 
	\begin{align*}
		\E|\frac{1}{n}\sum_{i=1}^n \widehat{\eta}(\mathbf{X}_{k^-,i},\widehat{
			\mathbf{S}}_{\cdot,i}) - \E_n[\widehat{\eta}(\mathbf{X}_{k^-,1},\mathbf{S}_{\cdot,1})] |  = O(\frac{1}{\sqrt{n}})
	\end{align*}
	
	For the third term, following the same procedures of deriving the estimation error of the posterior mean  in the proof of Proposition 5.4 in \cite{fu2024unveil} and Lemma \ref{lem:diffusion_transfer}, we can have
	\begin{equation*}
		\E|\eta(\bX) - \eta_0 (\bX)| = \mathcal{T}(x_1,x_0,\bX_{k^-},\bS) O\left(n^{-\frac{\beta}{1+d_{h}+2 \beta}} (\log (n))^{\max(11,(\beta+5)/2)}+\varepsilon_s\right)
	\end{equation*}
	where $\mathcal{T}(x_1,x_0,\bX_{k^-},\bS) = \max \{\mathcal{T}(x_1,\bX_{k^-},\bS),\mathcal{T}(x_0,\bX_{k^-},\bS)\}$.
	To conclude, we have
	\begin{equation*}
		\E[|\widehat{\tau}-\tau|] = O\left(n^{-\frac{\beta}{1+d_{h}+2 \beta}} (\log (n))^{\max(11,(\beta+5)/2)}+\frac{1}{\sqrt{n}}+\frac{1}{\sqrt{M}} + \epsilon_s\right)
	\end{equation*}
	This completes the proof.
\end{proof}

\begin{proof}[Proof of Theorem \ref{thm:consistency_bad_generation}]
	For notation simplicity, let assumptions in Theorem \ref{thm:consistency_bad_generation} are satisfied for $(\bX_1,\bS_1)$. Let $r_n^1 = \max\{\|(\bX_j,\bS_j) - (\bX_1,\bS_1)\| \mid j \in \mathcal{J}_N^1(\bX_1,\widehat{\bS}_1) \}$, then $r_n^1 \xrightarrow{p} 0$ as $n \rightarrow \infty$. By the continuity of $\mu_1$ and $\bar{\mu}_1$ at $(\bX_1,\bS_1)$, for any $\epsilon > 0$, there exists $\delta > 0$ such that $|(\mu_1(\bX,\bS) - \bar{\mu}_{1}(\bX, \bS)) - (\mu_1(\bX_1,\bS_1) - \bar{\mu}_{1}(\bX_1, \bS_1))| \leq \epsilon$ if $\|(\bX,\bS) - (\bX_1,\bS_1)\| < \delta$. Let $R_n^1$ be the event that $r_n^1 < \delta$, then we have $P(R_n^1) \rightarrow 1$ as $n \rightarrow \infty$.
   Note that
   \begin{equation*}
       \begin{split}
           \left|\E [\frac{1}{N} \sum_{i \in \mathcal{J}_N^1(\bX_1, \widehat{\bS}_1)} \Bigl(Y_i - \bar{\mu}_{1}(\bX_i, \bS_i)\Bigr) \mid R_n^1] - \Bigl(\mu_1(\bX_1,\bS_1) - \bar{\mu}_{1}(\bX_1, \bS_1)\Bigr)\right| < \epsilon
       \end{split}
   \end{equation*}
   and
   \begin{equation*}
   	\begin{split}
   		\operatorname{Var} [\frac{1}{N} \sum_{i \in \mathcal{J}_N^1(\bX_1, \widehat{\bS}_1)} \Bigl(Y_i - \bar{\mu}_{1}(\bX_i, \bS_i)\Bigr) \mid R_n^1] = \frac{1}{N^2} \sum_{i \in \mathcal{J}_N^1(\bX_1, \widehat{\bS}_1)} \operatorname{Var} [Y_i - \bar{\mu}_{1}(\bX_i, \bS_i) \mid R_n^1] = o(1)
   	\end{split}
   \end{equation*}
   Since $\epsilon > 0$ can be arbitrarily small, this leads to 
   \begin{equation*}
   	\frac{1}{N} \sum_{i \in \mathcal{J}_N^1(\bX_1, \widehat{\bS}_1)} \Bigl(Y_i - \bar{\mu}_{1}(\bX_i, \bS_i)\Bigr) \xrightarrow{p} \mu_1(\bX_1,\bS_1) - \bar{\mu}_{1}(\bX_1, \bS_1)
   \end{equation*}
   By Assumption \ref{assu:uncorrected_estimator}, we have
   \begin{equation*}
       \begin{split}
       	\frac{1}{N} \sum_{i \in \mathcal{J}_N^1(\bX_1, \widehat{\bS}_1)} \Bigl(Y_i - \widehat{\mu}_{1}(\bX_i, \widehat{\bS}_i)\Bigr) & = 	\frac{1}{N} \sum_{i \in \mathcal{J}_N^1(\bX_1, \widehat{\bS}_1)} \Bigl(Y_i - \bar{\mu}_{1}(\bX_i, \bS_i)\Bigr)\\ & + \frac{1}{N} \sum_{i \in \mathcal{J}_N^1(\bX_1, \widehat{\bS}_1)} \Bigl(\bar{\mu}_{1}(\bX_i, \bS_i) - \widehat{\mu}_{1}(\bX_i, \widehat{\bS}_i)\Bigr)\\
       	& \rightarrow \mu_1(\bX_1,\bS_1) - \bar{\mu}_{1}(\bX_1, \bS_1)
       \end{split}
   \end{equation*}
   in probability. Thus
   \begin{equation*}
   	\widehat{\mu}^{c}_{1}(\bX_1,\widehat{\bS}_1) \xrightarrow{p} \mu_1(\bX_1,\bS_1)
   \end{equation*}
   Similarly, we can prove that
   \begin{equation*}
   	\widehat{\mu}^{c}_{0}(\bX_1,\widehat{\bS}_1) \xrightarrow{p} \mu_0(\bX_1,\bS_1)
   \end{equation*}
   This concludes that 
   $$\widehat{\tau}^c(\bX_1, \widehat{\bS}_1) - \tau(\bX_1, \bS_1) \xrightarrow{\mathrm{p}} 0$$
   By setting $w_{j \leftarrow i} = 1$ if $i \in \mathcal{J}_N^{D_i}(\bX_j,\widehat{\bS}_j)$ and 0 otherwise, Assumption \ref{assu:weights} is automatically satisfied and Assumption \ref{assu:weights_density} is satisfied as implied by Theorem B.1 in \cite{lin2023estimation}. Then following Lemma \ref{lem:consistency_bad_generation}, we have
   $$\widehat{\tau}^c - \tau \xrightarrow{\mathrm{p}} 0$$
\end{proof}

\section{Technical Lemmas}
\label{sec:technical_lem}
\begin{lemma}
\label{lem:ci_equivalence}
    Under Assumption \ref{ass:strong_ignorability_DAG} and for any $\mathcal{M}$ such that $\pa(j) \subseteq \mathcal{M} \subseteq j^-$, we have
    \begin{equation*}
        \begin{split}
            (k \rightarrow j) \not \in \mathcal{E} \Rightarrow X_{k} \perp_{\mathbb{P}_{\bX}} X_{j} | \bX_{\mathcal{M}/k}
        \end{split}
    \end{equation*}
\end{lemma}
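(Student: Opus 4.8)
The plan is to reduce the claim to the ordered local Markov property together with two of the semi-graphoid axioms --- decomposition and weak union --- which hold for any probability distribution. Since $(k\to j)\notin\mathcal E$ and $X_k$ precedes $X_j$ in $\pi$, we have $k\in j^-\setminus\pa(j)$. Because $\mathbb P_{\bX}$ is Markov with respect to $\mathcal G$ (Assumption~\ref{ass:strong_ignorability_DAG}; equivalently $\mathbb P_{\bX}$ admits the recursive factorization $\mathbb P_{\bX}(\bx)=\prod_{\ell}p(x_\ell\mid\bx_{\pa(\ell)})$ along $\pi$), the ordered local Markov property holds at $j$, namely
\[
X_j \perp_{\mathbb P_{\bX}} \bX_{j^-\setminus\pa(j)} \mid \bX_{\pa(j)},
\]
which one obtains by checking that $\mathbb P_{\bX}(x_j\mid\bx_{j^-})$ depends on $\bx_{j^-}$ only through $\bx_{\pa(j)}$.

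Next I would split on whether $k\in\mathcal M$. If $k\notin\mathcal M$, then $\mathcal M/k=\mathcal M$; writing $j^-\setminus\pa(j)=(\mathcal M\setminus\pa(j))\cup(j^-\setminus\mathcal M)$ as a disjoint union and using weak union to move $\bX_{\mathcal M\setminus\pa(j)}$ into the conditioning set gives $X_j\perp\bX_{j^-\setminus\mathcal M}\mid\bX_{\mathcal M}$, and since $k\in j^-\setminus\mathcal M$, decomposition yields $X_j\perp X_k\mid\bX_{\mathcal M}$. If instead $k\in\mathcal M$, then $k\notin\pa(j)$ forces $\pa(j)\subseteq\mathcal M\setminus\{k\}$, so I decompose $j^-\setminus\pa(j)=\{k\}\cup\big((\mathcal M\setminus\{k\})\setminus\pa(j)\big)\cup(j^-\setminus\mathcal M)$, apply weak union to absorb $\bX_{(\mathcal M\setminus\{k\})\setminus\pa(j)}$ into the conditioning set to obtain $X_j\perp(\{X_k\}\cup\bX_{j^-\setminus\mathcal M})\mid\bX_{\mathcal M\setminus\{k\}}$, and finish with decomposition. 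In either case $X_k\perp_{\mathbb P_{\bX}}X_j\mid\bX_{\mathcal M/k}$, which is the assertion.

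The step requiring the most care is the passage from Assumption~\ref{ass:strong_ignorability_DAG}, which is phrased pairwise, to the (joint) ordered local Markov statement used above: one must invoke that $\mathbb P_{\bX}$ is Markov --- hence factorizes --- with respect to $\mathcal G$, precisely the regime in which the canonical construction of Appendix~\ref{sec:canonical_DAG} places us; the remaining manipulations are routine bookkeeping with conditional-independence axioms. An alternative would be a direct d-separation argument showing that $\bX_{\mathcal M/k}$ blocks every path between $X_k$ and $X_j$ --- the key point being that any such path is closed either at a non-collider parent of $X_j$, which lies in $\pa(j)\subseteq\mathcal M/k$, or at a collider that is a descendant of $X_j$ and therefore lies outside $j^-\supseteq\mathcal M$ with no conditioned descendant --- but that path enumeration is more delicate than the algebraic route, so I would prefer the approach above.
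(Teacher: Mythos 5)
Your argument is correct, but it is not the route the paper takes: the paper's entire proof of this lemma is a one-line citation to Proposition~1 of \cite{shi2023testing}, whereas you give a self-contained derivation --- ordered local Markov property at $j$, then weak union and decomposition, with a clean case split on whether $k\in\mathcal M$. The set-algebra and the semi-graphoid manipulations check out in both cases, so your proof effectively supplies the content that the paper outsources. The one point worth stressing is the step you yourself flag: Assumption~\ref{ass:strong_ignorability_DAG} is literally a \emph{pairwise} statement ($X_k\perp X_j\mid\bX_{\pa(j)/k}$ for each non-parent predecessor $k$ separately), and upgrading it to the joint statement $X_j\perp\bX_{j^-\setminus\pa(j)}\mid\bX_{\pa(j)}$ is not possible by the semi-graphoid axioms alone --- it needs composition, which fails for general distributions --- so the factorization of the distribution with respect to the DAG must be invoked as an extra input rather than deduced from the assumption's statement. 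In this paper that input is available: the structural model \eqref{eq:sem} (equivalently the canonical exogenous DAG of Appendix~\ref{sec:canonical_DAG}) gives the recursive factorization of $(\bX,\bH)$, and indeed the lemma is only ever applied (in the proof of Proposition~\ref{prop:de_edge}, with $\mathcal M=j^-/k\cup\bH$) in exactly that setting, where the paper also explicitly equates Assumption~\ref{ass:strong_ignorability_DAG} with ``$\mathbb P(\bX)$ is Markov with respect to $\mathcal G$.'' So your appeal to factorization is legitimate here, but it should be stated as an appeal to the SEM/Markov factorization rather than to Assumption~\ref{ass:strong_ignorability_DAG} alone; with that phrasing your proof is a valid, and arguably more transparent, substitute for the paper's citation, and your d-separation sketch is a reasonable alternative but, as you note, requires more careful path bookkeeping.
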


\begin{proof}[Proof of Lemma \ref{lem:ci_equivalence}]
	This follows from Proposition 1 in \cite{shi2023testing}.
\end{proof}

\begin{lemma}
\label{lem:diffusion_transfer}
    Under Assumption \ref{assu:proxy_DAG}-\ref{assu:target_density}, the error in conditional diffusion generation via transfer learning is
    \begin{align*}
        \mathrm{E}_{\mathcal{D}_t, \mathcal{D}_s} \mathrm{E}_{\bX_{k+},\bS}\left[\mathrm{TV}\left(\mathbb{P}_{X_j \mid \bX_{k+},\bS}, \hat{\mathbb{P}}_{X_j \mid \bX_{k+},\bS}\right)\right]=O\left(n^{-\frac{\beta}{1+d_{h}+2 \beta}} (\log (n))^{\max(19/2,(\beta+2)/2)}+\varepsilon_s\right)
    \end{align*}
    Denote $\boldsymbol{Z} = (\bX_{k+}^\top,\bS^{\top})^\top$, and define the following coefficient
    \begin{align*}
    	\mathcal{T}\left(\boldsymbol{Z}^{\star}\right)=\sup _{\mathbf{\theta} \in \mathcal{F}, h \in \Theta_h} \sqrt{\frac{\int_{\underline{\tau}}^{\bar{\tau}} \mathbb{E}_{X_j \sim \mathbb{P}\left(X_j \mid \boldsymbol{Z}^{\star}\right)} \mathbb{E}_{X_j^{\prime} \sim \mathrm{N}\left(\alpha_t X_j, \sigma_t^2 I\right)}\left[\left\|\mathbf{\theta}\left(X_j^{\prime}, h(\boldsymbol{Z}^{\star}), t\right)-\nabla \log p_t\left(X_j^{\prime} \mid \boldsymbol{Z}^{\star}\right)\right\|^2\right] \mathrm{d} t}{\int_{\underline{\tau}}^{\bar{\tau}} \mathbb{E}_{X_j,\boldsymbol{Z}} \mathbb{E}_{X_j^{\prime} \sim \mathrm{N}\left(\alpha_t X_j, \sigma_t^2 I\right)}\left[\left\|\mathbf{\theta}\left(X_j^{\prime}, h(\boldsymbol{Z}), t\right)-\nabla \log p_t\left(X_j^{\prime} \mid \boldsymbol{Z}\right)\right\|^2\right] \mathrm{d} t}}
    \end{align*}
    then we have
    \begin{align*}
    	\mathrm{E}_{\mathcal{D}_t, \mathcal{D}_s} \left[\mathrm{TV}\left(\mathbb{P}_{X_j \mid \boldsymbol{Z}^\star}, \hat{\mathbb{P}}_{X_j \mid \boldsymbol{Z}^\star}\right)\right]=\mathcal{T}\left(\boldsymbol{Z}^{\star}\right) O\left(n^{-\frac{\beta}{1+d_{h}+2 \beta}} (\log (n))^{\max(19/2,(\beta+2)/2)}+\varepsilon_s\right)
    \end{align*}
\end{lemma}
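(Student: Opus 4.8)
The plan is to follow the standard three-step analysis of score-based generative models, as in the proof of Proposition~5.4 of \cite{fu2024unveil}, and to graft onto it the two-stage transfer-learning argument. \textbf{First}, I would reduce the total-variation generation error to an integrated conditional score-matching error. Running the reverse SDE with the plug-in score $\widehat\theta\bigl(\cdot,\widehat h(\bZ),\cdot\bigr)$ and comparing it, via Girsanov's theorem and the data-processing inequality, to the exact reverse diffusion, one bounds $\mathrm{TV}\bigl(\mathbb P_{X_j\mid\bZ},\widehat{\mathbb P}_{X_j\mid\bZ}\bigr)$ by three contributions: the initialization error between $\mathbb P_{\bar\tau}(\cdot\mid\bZ)$ and the standard Gaussian, which decays like $e^{-\bar\tau}$; the early-stopping error from terminating the backward process at $\underline\tau>0$, polynomial in $\underline\tau$; and $\int_{\underline\tau}^{\bar\tau}\mathbb E_{\bZ}\mathbb E_{X_j'\sim\mathrm N(\alpha_t X_j,\sigma_t^2 I)}\bigl\|\widehat\theta(X_j',\widehat h(\bZ),t)-\nabla\log p_t(X_j'\mid\bZ)\bigr\|^2\,dt$. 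Choosing $\bar\tau\asymp\log n$ and $\underline\tau$ polynomially small in $n$ makes the first two negligible at the target rate, so everything reduces to bounding this integrated score error.

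\textbf{Second}, I would establish the nonparametric rate assuming the true embedding were available. Decompose the excess risk of $(\widehat\theta,\widehat h)$ into an approximation term and a stochastic term. By Assumption~\ref{assu:target_density} together with Definition~\ref{def:holder_ball}, $\nabla\log p_t(z\mid\bZ)$ depends on $\bZ$ only through the $d_h$-dimensional embedding $h^0(\bZ)$ and, modulo the Gaussian kernel, is a $\mathcal H^\beta$ function of the $(1+d_h)$-dimensional input $(z,h^0(\bZ))$; hence the ReLU class $\mathcal F$ of Appendix~\ref{sec:cdm} approximates it with an error governed by the effective dimension $1+d_h$. The stochastic term is controlled by a localized empirical-process bound using the covering numbers of $\mathcal F$ (and, for the first stage, of $\Theta_h$) over the truncated time window. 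Balancing approximation against fluctuation and optimizing the network width, depth, sparsity and output bound yields $n^{-\beta/(1+d_h+2\beta)}$ up to a polylogarithmic factor, the exponent $\max\{19/2,(\beta+2)/2\}$ arising from bookkeeping of the discretization window, covering-number entropy, and score-truncation level.

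\textbf{Third}, I would handle the transfer and assemble the bound. Since $\widehat h$ is learned on $\mathcal D_s$ and then frozen when fitting $\widehat\theta$ on $\mathcal D_t$, the target excess risk is at most the risk attainable with the true embedding $h^0$ (Step two) plus the surplus $|\delta(\widehat h)-\delta(h^0)|$. Assumption~\ref{assu:transferability} bounds this surplus by $c_1|\delta_s(\widehat h)-\delta_s(h^0)|$, and Assumption~\ref{assu:source_error}, via the concentration inequality for $\rho_s(\gamma_s^0,\widehat\gamma_s)$, bounds the source-side quantity by $O_p(\epsilon_s^2)$; passing back through the square root inherent in Step one contributes the additive $\epsilon_s$. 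Combining, the $\bZ$-averaged TV error is $O\bigl(n^{-\beta/(1+d_h+2\beta)}(\log n)^{\max\{19/2,(\beta+2)/2\}}+\epsilon_s\bigr)$. For the pointwise claim, note that $\mathcal T(\bZ^\star)$ is, by definition, an upper bound for the ratio of the square root of the integrated score error at the fixed point $\bZ^\star$ to that of the $\bZ$-averaged integrated score error over the class $\{(\theta,h):\theta\in\mathcal F,\ h\in\Theta_h\}$; applying this inequality to $(\widehat\theta,\widehat h)$ and feeding through Step one multiplies the averaged bound by $\mathcal T(\bZ^\star)$, which is the stated pointwise estimate.

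\textbf{The main obstacle} I anticipate is the transfer step: one must ensure that the first-stage error in estimating $h^0$ on the source does not multiply with the second-stage error on the target, which is precisely why Assumption~\ref{assu:transferability} is stated in terms of the best-in-class excess risks $\delta(h),\delta_s(h)$ rather than realized losses — exploiting this cleanly, together with keeping the proxy discrepancy $\widehat{\bS}-\bS$ from contaminating the conditioning set (absorbed through Assumption~\ref{assu:proxy_estimation_DAG} in the downstream Theorem~\ref{thm:acee_rate}) and collapsing the several logarithmic factors into the single exponent $\max\{19/2,(\beta+2)/2\}$, is where the real work lies; the diffusion-analysis and ReLU-approximation ingredients are by now standard.
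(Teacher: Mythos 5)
Your sketch is substantively correct, but it takes a different route from the paper only in the sense that the paper does not re-derive anything: its entire proof of Lemma \ref{lem:diffusion_transfer} is a citation, namely that the result ``follows from Theorem 1 in \cite{tian2024enhancing} and Proposition 4.5 in \cite{fu2024unveil}.'' What you have written is, in effect, a reconstruction of the arguments inside those two cited results: the Girsanov/data-processing reduction of TV error to the integrated conditional score-matching error, the ReLU approximation of the score as a H\"older function of the $(1+d_h)$-dimensional input $(z,h^0(\bZ))$ under Assumption \ref{assu:target_density} with the resulting rate $n^{-\beta/(1+d_h+2\beta)}$ up to polylogs, the transfer step in which Assumption \ref{assu:transferability} converts the surplus $|\delta(\widehat h)-\delta(h^0)|$ into the source excess risk controlled at order $\epsilon_s^2$ by Assumption \ref{assu:source_error} (hence an additive $\epsilon_s$ after the square root), and the pointwise statement obtained by multiplying the averaged bound by the distribution-shift coefficient $\mathcal T(\bZ^\star)$, which is legitimate because the supremum defining $\mathcal T$ is taken over the whole class containing $(\widehat\theta,\widehat h)$. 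So the logical skeleton matches what the citations encapsulate; what your version buys is self-containedness, while what the paper's version buys is that the genuinely delicate quantitative bookkeeping --- the choice of $\bar\tau$, $\underline\tau$, the covering-number entropy of $\mathcal F$ and $\Theta_h$, and in particular the exact polylogarithmic exponent $\max\{19/2,(\beta+2)/2\}$, which you only gesture at --- is outsourced to \cite{fu2024unveil} and \cite{tian2024enhancing} rather than reproduced. If you intended your write-up as a full proof rather than a plan, that bookkeeping is the part you would still have to supply or explicitly cite.
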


\begin{proof}[Proof of Lemma \ref{lem:diffusion_transfer}]
	This follows from Theorem 1 in \cite{tian2024enhancing} and Proposition 4.5 in \cite{fu2024unveil}.
\end{proof}

We discuss a generalized version of bias-corrected estimator by
\begin{equation}
	\label{eq:estimator_bc_generalized}
	\widehat{\tau}^{c} = \widehat{\tau} +  \frac{1}{n} [\sum_{i=1,D_i=1}^{n} (\sum_{j = 1}^n \frac{w_{j \leftarrow i}}{w_{j\cdot}^1}\widehat{R}_i) - \sum_{i=1,D_i=0}^{n} (\sum_{j = 1}^n \frac{w_{j \leftarrow i}}{w_{j\cdot}^0}\widehat{R}_i)]
\end{equation}
where $w_{j\cdot}^1 = \sum_{i=1,D_i=1}^{n} w_{j \leftarrow i}$, $w_{j\cdot}^0 = \sum_{i=1,D_i=0}^{n} w_{j \leftarrow i}$ and $w_{i \leftarrow j}$ can be determined in various approaches as long as they satisfy the following assumptions.

\begin{assumption}
	\label{assu:weights}
	Let $\pi': [n] \rightarrow [n]$ be any permutation, where $[n]=\{1,\ldots,n\}$. For samples $\{\bX_i, D_i, Y_i\}_{i=1}^n$ given, let $\{w_{i \leftarrow j}\}_{i,j \in [n]}$ be the weights constructed by $\{\bX_i, D_i, Y_i\}_{i=1}^n$, and $\{w^{\pi'}_{i \leftarrow j}\}_{i,j \in [n]}$ be the weights constructed by $\{\bX_{\pi'(i)}, D_{\pi'(i)}, Y_{\pi'(i)}\}_{i=1}^n$. Then for any $i,j \in [n]$ and any permutation $\pi'$, we have $w_{i \leftarrow j} = w^{\pi'}_{\pi'(i) \leftarrow \pi'(j)}$.
\end{assumption}

\begin{assumption}
	\label{assu:weights_density}
	We denote $e(\bX,\bS) = \mathrm{P}(D=1|\bX,\bS)$ and assume that
	$$\lim _{n \rightarrow \infty} \mathrm{E}\left[\sum_{j=1}^n \frac{w_{j \leftarrow i}}{w_{j \cdot}^{D_i}}-\left(D_{i} \frac{1}{e\left(\bX_i,\bS_i\right)}+\left(1-D_{i}\right) \frac{1}{1-e\left(\bX_i,\bS_i\right)}\right)\right]^{2}=0.$$
\end{assumption}

Assumption \ref{assu:weights} requires the weights to be permutation-invariant and to some extent, Assumption \ref{assu:weights_density} requires that the weights can form a good estimate of density ratio of the treated and controlled samples.

\begin{lemma}
	\label{lem:consistency_bad_generation}
	Under Assumptions \ref{assu:conditional_random}, \ref{assu:proxy}, \ref{assu:data}, \ref{assu:uncorrected_estimator}, \ref{assu:weights}, \ref{assu:weights_density} and $N \rightarrow \infty, \frac{N \log (n)}{n} \rightarrow 0$ as $n, M \rightarrow \infty$, then for the estimate ATE: 
	$$\widehat{\tau}^{\mathrm{c}}-\tau \xrightarrow{\mathrm{p}} 0.$$
\end{lemma}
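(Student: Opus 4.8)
The plan is to split $\widehat{\tau}^{\mathrm c}$ into three pieces: a population average that converges to $\tau$ by the law of large numbers (LLN), an ``outcome-model error'' piece that converges to the negative of the asymptotic imputation bias, and the bias-correction piece, which converges to exactly that bias, so that the last two cancel. First, by Lemma~\ref{lem:proxy} (Assumptions~\ref{assu:conditional_random}, \ref{assu:proxy}), $\tau=\E[\mu_1(\bX,\bS)-\mu_0(\bX,\bS)]$ with $\mu_d(\bX,\bS)=\E[Y\mid\bX,\bS,D=d]$, so $\frac1n\sum_{i}[\mu_1(\bX_i,\bS_i)-\mu_0(\bX_i,\bS_i)]\xrightarrow{p}\tau$ by the LLN (finite second moments, Assumption~\ref{assu:data}). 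Writing $\bar\mu_d$ for the deterministic limit of Assumption~\ref{assu:uncorrected_estimator} and $b_d=\mu_d-\bar\mu_d\in L^2$ for the asymptotic imputation bias, and using $\widehat{\tau}^{\mathrm c}=\widehat\tau+B_n$ with $\widehat\tau=\frac1n\sum_i[\widehat\mu_1(\bX_i,\widehat\bS_i)-\widehat\mu_0(\bX_i,\widehat\bS_i)]$ and $B_n$ the correction term in \eqref{eq:estimator_bc_generalized}, I decompose $\widehat{\tau}^{\mathrm c}=\frac1n\sum_i[\mu_1(\bX_i,\bS_i)-\mu_0(\bX_i,\bS_i)]+A_n+B_n$, where $A_n=\frac1n\sum_i\{[\widehat\mu_1(\bX_i,\widehat\bS_i)-\mu_1(\bX_i,\bS_i)]-[\widehat\mu_0(\bX_i,\widehat\bS_i)-\mu_0(\bX_i,\bS_i)]\}$. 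By Assumption~\ref{assu:uncorrected_estimator} (which absorbs the Monte-Carlo error as $M\to\infty$), $\max_i|\widehat\mu_d(\bX_i,\widehat\bS_i)-\bar\mu_d(\bX_i,\bS_i)|\xrightarrow{p}0$, so $A_n=-\frac1n\sum_i[b_1(\bX_i,\bS_i)-b_0(\bX_i,\bS_i)]+o_p(1)\xrightarrow{p}-\E[b_1(\bX,\bS)-b_0(\bX,\bS)]$.

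Next I analyse $B_n$. Swapping the order of summation, $B_n=\sum_{d\in\{0,1\}}(2d-1)\,\frac1n\sum_{i:D_i=d}\Pi_i\widehat R_i$, where $\Pi_i=\sum_j w_{j\leftarrow i}/w_{j\cdot}^{D_i}$ (so that $\frac1n\sum_i\Pi_i\equiv 2$) and $\widehat R_i=Y_i-\widehat\mu_{D_i}(\bX_i,\widehat\bS_i)$. On $\{D_i=d\}$ one has $Y_i=Y_i(d)$, hence $\widehat R_i=U_{i,d}+b_d(\bX_i,\bS_i)+\epsilon_i$, where $U_{i,d}=Y_i(d)-\mu_d(\bX_i,\bS_i)$ satisfies $\E[U_{i,d}\mid\bX_i,\bS_i,D_i=d]=0$ and $\E[U_{i,d}^2\mid\bX_i,\bS_i]$ bounded a.s.\ (Assumption~\ref{assu:data}), and $\epsilon_i=\bar\mu_d(\bX_i,\bS_i)-\widehat\mu_d(\bX_i,\widehat\bS_i)$ is uniformly $o_p(1)$.

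I then introduce $\pi_i^*=D_i/e(\bX_i,\bS_i)+(1-D_i)/(1-e(\bX_i,\bS_i))$, bounded by $1/\eta$ by the overlap condition, write $\Pi_i=\pi_i^*+(\Pi_i-\pi_i^*)$, and expand $\Pi_i\widehat R_i$ into six products whose averages I bound: (i) $\frac1n\sum_{i:D_i=d}\pi_i^*U_{i,d}$ is a mean-zero i.i.d.\ average of variance $O(1/n)$, hence $o_p(1)$ by Chebyshev; (ii) $\frac1n\sum_{i:D_i=d}\pi_i^*b_d(\bX_i,\bS_i)\xrightarrow{p}\E[\mathbf{1}\{D=d\}\pi^*b_d(\bX,\bS)]=\E[b_d(\bX,\bS)]$ by the LLN and the tower property ($\E[\mathbf{1}\{D=1\}/e(\bX,\bS)\mid\bX,\bS]=1$, similarly for $d=0$); (iii) the three terms carrying $\epsilon_i$ are each $\le\max_i|\epsilon_i|\cdot\bigl(\tfrac1n\sum_i\Pi_i+\tfrac1n\sum_i\pi_i^*\bigr)=o_p(1)$; (iv) the cross-terms $\frac1n\sum_{i:D_i=d}(\Pi_i-\pi_i^*)\xi_i$ with $\xi_i\in\{U_{i,d},\,b_d(\bX_i,\bS_i)\}$ satisfy, by Cauchy--Schwarz, $\bigl|\cdots\bigr|\le\bigl(\tfrac1n\sum_i(\Pi_i-\pi_i^*)^2\bigr)^{1/2}\bigl(\tfrac1n\sum_i\xi_i^2\bigr)^{1/2}$, where the first factor is $o_p(1)$ because the $\{\Pi_i-\pi_i^*\}_i$ are identically distributed by the permutation invariance in Assumption~\ref{assu:weights} and $\E[(\Pi_1-\pi_1^*)^2]\to0$ by Assumption~\ref{assu:weights_density}, and the second factor is $O_p(1)$ by the LLN and the moment bounds. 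Collecting (i)--(iv) gives $\frac1n\sum_{i:D_i=d}\Pi_i\widehat R_i\xrightarrow{p}\E[b_d(\bX,\bS)]$, hence $B_n\xrightarrow{p}\E[b_1(\bX,\bS)-b_0(\bX,\bS)]$.

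Combining the three pieces, $\widehat{\tau}^{\mathrm c}\xrightarrow{p}\tau-\E[b_1-b_0]+\E[b_1-b_0]=\tau$, as claimed; the nearest-neighbour version in Theorem~\ref{thm:consistency_bad_generation} then follows by verifying Assumptions~\ref{assu:weights} and \ref{assu:weights_density} for $w_{j\leftarrow i}=\mathbf{1}\{i\in\mathcal{J}_N^{D_i}(\bX_j,\widehat\bS_j)\}$ (permutation invariance is immediate; the $L^2$ density-ratio bound uses $N\to\infty$ and $N\log(n)/n\to0$ via Theorem~B.1 of \cite{lin2023estimation}). I expect the main obstacle to be step (iv): controlling the correlation between the data-dependent weights $\Pi_i$ (the rescaled matching counts $K_N(i)/N$) and the residuals $\widehat R_i$, which is exactly where the permutation invariance and the $L^2$ density-ratio consistency of the weights are used; a secondary subtlety is that $\widehat\mu_d$ is evaluated at the estimated proxy $\widehat\bS_i$ whereas $\bar\mu_d$ and the moment conditions reference the true $\bS_i$, reconciled by the two uniform-convergence statements of Assumption~\ref{assu:uncorrected_estimator}.
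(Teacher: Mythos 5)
Your proposal is correct, but it organizes the argument around a different decomposition than the paper, so it is worth comparing the two. The paper expands $\widehat{\tau}^{\mathrm c}$ as in \eqref{eq:estimator_bc_decomposition} so that the term carrying the limit is the inverse-propensity-weighted average $\frac1n\sum_i\bigl[D_iY_i/e(\bX_i,\bS_i)-(1-D_i)Y_i/(1-e(\bX_i,\bS_i))\bigr]\xrightarrow{p}\tau$, while every term involving $\bar\mu_d$ is arranged to be asymptotically negligible: the $\widehat\mu_d-\bar\mu_d$ terms vanish by the uniform consistency in Assumption \ref{assu:uncorrected_estimator}, the terms pairing $\bar R_i=Y_i-\bar\mu_{D_i}(\bX_i,\bS_i)$ with the weight discrepancy are killed by Cauchy--Schwarz together with Assumption \ref{assu:weights_density}, and the augmentation terms $(1-D_i/e)\bar\mu_1$ and $(1-(1-D_i)/(1-e))\bar\mu_0$ are shown to have conditional mean zero and vanishing variance; in this AIPW-style regrouping the misspecification bias of $\bar\mu_d$ never appears explicitly. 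You instead anchor on the oracle regression average $\frac1n\sum_i[\mu_1(\bX_i,\bS_i)-\mu_0(\bX_i,\bS_i)]\xrightarrow{p}\tau$ and make the bias $\E[b_1-b_0]$ with $b_d=\mu_d-\bar\mu_d$ explicit, showing it enters once with a minus sign through the plug-in error $A_n$ and once with a plus sign through the limit of the correction term $B_n$, so that it cancels. The ingredients are the same and are used in the same places---Assumption \ref{assu:uncorrected_estimator} to dispose of $\widehat\mu_d-\bar\mu_d$ (including the Monte Carlo error), permutation invariance (Assumption \ref{assu:weights}) plus the $L^2$ condition of Assumption \ref{assu:weights_density} with Cauchy--Schwarz to control your $\Pi_i-\pi_i^*$, overlap and the moment conditions of Assumption \ref{assu:data} for the variance bounds, and Lemma \ref{lem:proxy} for identification---so the two proofs are equivalent in substance. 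Your re-centering makes the mechanism of the bias correction under outcome-model misspecification more transparent, whereas the paper's version exhibits $\widehat{\tau}^{\mathrm c}$ as asymptotically an augmented IPW estimator with the true propensity score, which is the form that connects directly to the matching-weights verification (Theorem B.1 of \cite{lin2023estimation}) invoked when specializing to the nearest-neighbor weights in Theorem \ref{thm:consistency_bad_generation}; your closing remarks on that specialization match the paper's treatment.
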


\begin{proof}[Proof of Lemma \ref{lem:consistency_bad_generation}]
	Let $\bar{R}_i = Y_i - \bar{\mu}_{D_i}(\boldsymbol{X_i},\bS_i)$, then from \eqref{eq:estimator_bc_generalized},
	\begin{equation}
		\label{eq:estimator_bc_decomposition}
		\begin{split}
			\widehat{\tau}^{c} = & \widehat{\tau} +  \frac{1}{n} [\sum_{i=1,D_i=1}^{n} (\sum_{j = 1}^n \frac{w_{j \leftarrow i}}{w_{j\cdot}^1}\widehat{R}_i) - \sum_{i=1,D_i=0}^{n} (\sum_{j = 1}^n \frac{w_{j \leftarrow i}}{w_{j\cdot}^0}\widehat{R}_i)] \\
			= & \frac{1}{n} \sum_{i=1}^n\left[\widehat{\mu}_1\left(\bX_i,\widehat{\bS}_i\right)-\bar{\mu}_1\left(\bX_i,\bS_i\right)\right]-\frac{1}{n} \sum_{i=1}^n\left[\widehat{\mu}_0\left(\bX_i,\widehat{\bS}_i\right)-\bar{\mu}_0\left(\bX_i,\bS_i\right)\right] \\
			& +\frac{1}{n}\left[\sum_{D_i =1}^n \left(\sum_{j = 1}^n \frac{w_{j \leftarrow i}}{w_{j\cdot}^1}\left(\bar{\mu}_{1}-\widehat{\mu}_{1}\right)\right) - \sum_{D_i =0}^n \left(\sum_{j = 1}^n \frac{w_{j \leftarrow i}}{w_{j\cdot}^0}\left(\bar{\mu}_{0}-\widehat{\mu}_{0}\right)\right)\right] \\
			& +\frac{1}{n}\left[\sum_{D_i =1}^n \left(\sum_{j = 1}^n \frac{w_{j \leftarrow i}}{w_{j\cdot}^1}-\frac{1}{e\left(\bX_i,\bS_i\right)}\right) \bar{R}_i -\sum_{D_i =0}^n\left(\sum_{j = 1}^n \frac{w_{j \leftarrow i}}{w_{j\cdot}^0}-\frac{1}{1-e\left(\bX_i,\bS_i\right)}\right) \bar{R}_i\right] \\
			& +\frac{1}{n}\left[\sum_{i=1}^n\left(1-\frac{D_i}{e\left(\bX_i,\bS_i\right)}\right) \bar{\mu}_1\left(\bX_i,\bS_i\right)-\sum_{i=1}^n\left(1-\frac{1-D_i}{1-e\left(\bX_i,\bS_i\right)}\right) \bar{\mu}_0\left(\bX_i,\bS_i\right)\right] \\
			& +\frac{1}{n}\left[\sum_{i=1}^n \frac{D_i}{e\left(\bX_i,\bS_i\right)} Y_i-\sum_{i=1}^n \frac{1-D_i}{1-e\left(\bX_i,\bS_i\right)} Y_i\right],
		\end{split}
	\end{equation}
	For the first term in \eqref{eq:estimator_bc_decomposition}, we have
	\begin{equation*}
		\left | \frac{1}{n} \sum_{i=1}^n\left[\widehat{\mu}_1\left(\bX_i,\widehat{\bS}_i\right)-\bar{\mu}_1\left(\bX_i,\bS_i\right)\right] \right | \leq \max_{i=1,\ldots,n}|\widehat{\mu}_1(\bX_i,\widehat{\bS}_i) - \bar{\mu}_1(\bX_i,\bS_i)| = o_p(1)
	\end{equation*}
	For the second term in \eqref{eq:estimator_bc_decomposition}, we have
	\begin{equation*}
		\begin{split}
			& \left | \frac{1}{n}\sum_{D_i =1}^n \sum_{j = 1}^n \frac{w_{j \leftarrow i}}{w_{j\cdot}^1}\left(\bar{\mu}_{1}\left(\bX_i,\bS_i\right)-\widehat{\mu}_{1}\left(\bX_i,\widehat{\bS}_i\right)\right) \right | \\
			\leq &\max_{i=1,\ldots,n}|\widehat{\mu}_1(\bX_i,\widehat{\bS}_i) - \bar{\mu}_1(\bX_i,\bS_i)|\frac{1}{n}\sum_{D_i =1}^n \left(\sum_{j = 1}^n \frac{w_{j \leftarrow i}}{w_{j\cdot}^1}\right) = \max_{i=1,\ldots,n}|\widehat{\mu}_1(\bX_i,\widehat{\bS}_i) - \bar{\mu}_1(\bX_i,\bS_i)| = o_p(1)
		\end{split}
	\end{equation*}
	since $\frac{1}{n}\sum_{D_i =1}^n \left(\sum_{j = 1}^n \frac{w_{j \leftarrow i}}{w_{j\cdot}^1}\right) = 1$. Similarly, we have
	\begin{equation*}
		\begin{split}
			\left | \frac{1}{n}\sum_{D_i =0}^n \sum_{j = 1}^n \frac{w_{j \leftarrow i}}{w_{j\cdot}^0}\left(\bar{\mu}_{0}\left(\bX_i,\bS_i\right)-\widehat{\mu}_{0}\left(\bX_i,\widehat{\bS}_i\right)\right) \right | = o_p(1)
		\end{split}
	\end{equation*}
	For the third term in \eqref{eq:estimator_bc_decomposition}, by Assumption \ref{assu:weights_density}, we have
	\begin{equation*}
		\begin{split}
			& \E \left | \frac{1}{n} \sum_{D_i =1}^n \left(\sum_{j = 1}^n \frac{w_{j \leftarrow i}}{w_{j\cdot}^1}-\frac{1}{e\left(\bX_i,\bS_i\right)}\right) \bar{R}_i  \right | \\
			\leq & \sqrt{\E \left[\sum_{j = 1}^n \frac{w_{j \leftarrow i}}{w_{j\cdot}^1}-\frac{1}{e\left(\bX_i,\bS_i\right)}\right]^2} \sqrt{\E\left[D_i \bar{R}_i\right]^2} \\
			\leq & \sqrt{\E \left[\sum_{j = 1}^n \frac{w_{j \leftarrow i}}{w_{j\cdot}^1}-\frac{1}{e\left(\bX_i,\bS_i\right)}\right]^2} \sqrt{\E\left[U_1^2\right] + \E\left[\widehat{\mu}_1\left(\bX_i,\widehat{\bS}_i\right)-\bar{\mu}_1\left(\bX_i,\bS_i\right)\right]^2} = o(1)
		\end{split}
	\end{equation*}
	Similarly, we have 
	\begin{equation*}
		\begin{split}
			\E \left | \frac{1}{n} \sum_{D_i =0}^n \left(\sum_{j = 1}^n \frac{w_{j \leftarrow i}}{w_{j\cdot}^0}-\frac{1}{1- e\left(\bX_i,\bS_i\right)}\right) \bar{R}_i  \right | = o(1)
		\end{split}
	\end{equation*}
	For the fourth term in \eqref{eq:estimator_bc_decomposition}, notice that
	\begin{equation*}
		\E\left[\frac{1}{n}\sum_{i=1}^n\left(1-\frac{D_i}{e\left(\bX_i,\bS_i\right)}\right) \bar{\mu}_1\left(\bX_i,\bS_i\right)\mid \bX_1,\ldots,\bX_n ,\bS_1,\ldots,\bS_n\right] = 0
	\end{equation*}
	and 
	\begin{equation*}
		\begin{split}
			& \operatorname{Var}\left[\frac{1}{n}\sum_{i=1}^n\left(1-\frac{D_i}{e\left(\bX_i,\bS_i\right)}\right) \bar{\mu}_1\left(\bX_i,\bS_i\right)\right] \\
			= & \E \left[\operatorname{Var}\left[\frac{1}{n}\sum_{i=1}^n\left(1-\frac{D_i}{e\left(\bX_i,\bS_i\right)}\right) \bar{\mu}_1\left(\bX_i,\bS_i\right)\mid \bX_1,\ldots,\bX_n,\bS_1,\ldots,\bS_n \right]\right] \\
			= & \frac{1}{n} \E\left[ \bar{\mu}^2_1\left(\bX_i,\bS_i\right)(\frac{1}{e\left(\bX_i,\bS_i\right)}-1)\right] = o(1)
		\end{split}
	\end{equation*}
	then 
	\begin{equation*}
		\frac{1}{n}\sum_{i=1}^n\left(1-\frac{D_i}{e\left(\bX_i,\bS_i\right)}\right) \bar{\mu}_1\left(\bX_i,\bS_i\right) = o_p(1)
	\end{equation*}
	Similarly, we have
	\begin{equation*}
		\frac{1}{n} \sum_{i=1}^n\left(1-\frac{1-D_i}{1-e\left(\bX_i,\bS_i\right)}\right) \bar{\mu}_0\left(\bX_i,\bS_i\right) = o_p(1)
	\end{equation*}
	For the fifth term in \eqref{eq:estimator_bc_decomposition}, we have
	\begin{equation*}
		\frac{1}{n}\left[\sum_{i=1}^n \frac{D_i}{e\left(\bX_i,\bS_i\right)} Y_i-\sum_{i=1}^n \frac{1-D_i}{1-e\left(\bX_i,\bS_i\right)} Y_i\right] \xrightarrow{p} \tau
	\end{equation*}
	by the weak law of large numbers. This completes the proof.
\end{proof}

\bibliography{reference}

\end{document}